\theoremstyle{definition}
\newtheorem{definition}{Definition}[section]
\newtheorem{remark}[definition]{Remark}
\newtheorem{example}[definition]{Example}
\theoremstyle{plain}
\newtheorem{theorem}[definition]{Theorem}
\newtheorem{lemma}[definition]{Lemma}
\newtheorem{proposition}[definition]{Proposition}
\newtheorem{assumption}[definition]{Assumption}
\numberwithin{equation}{section}
\newcommand{\black}[1]{\textcolor{black}{#1}}
\newcommand{\blue}[1]{\textcolor{black}{#1}}
\newcommand{\green}[1]{\textcolor{black}{#1}}
\numberwithin{equation}{section}
\definecolor{mygreen}{RGB}{28,172,0} % color values Red, Green, Blue
\definecolor{mylilas}{RGB}{170,55,241}
\newcommand*\samethanks[1][\value{footnote}]{\footnotemark[#1]}
\newcommand{\RN}[1]{%
  \textup{\uppercase\expandafter{\romannumeral#1}}%
} 
\author{Francesca Biagini \thanks{Workgroup Financial Mathematics, Department of Mathematics, Ludwig-Maximilians-Universit\"{a}t M\"{u}nchen, Theresienstr. 39, 80333 Munich, Germany.} \and  Andrea Mazzon\samethanks[1] \and Thilo Meyer-Brandis\samethanks[1] \and  Katharina Oberpriller\thanks{Department of Mathematics, University of Freiburg, Ernst-Zermelo-Str. 1, 79104 Freiburg im Breisgau, Germany.}  }
\title{Liquidity based modeling of asset price bubbles via random matching}
\begin{document}

\maketitle

\begin{abstract}
In this paper we study the evolution of asset price bubbles driven by contagion effects spreading among investors via a random matching mechanism in a discrete-time version of the liquidity based model of \cite{JarrowProtter2012}. To this scope, we extend the Markov conditionally independent dynamic directed random matching of \cite{RandomMatchingDiscrete} to a stochastic setting to include stochastic exogenous factors in the model. We derive conditions guaranteeing that the financial market model is arbitrage-free and present some numerical simulation illustrating our approach.
\end{abstract}
\textbf{Keywords:} asset price bubbles,  dynamic directed random matching with stochastic intensities, contagion, liquidity
 \\
\textbf{Mathematics Subject Classification (2020):} 60G07, 91G15, 91G30\\
\textbf{JEL Classification:} C02, G10, G12\\

\section{Introduction}
The formation of asset price bubbles has been object of many investigations in the economic and mathematical literature. Different causes have been indicated as triggering factors for bubble birth and evolution, such as a risk shifting problem in \cite{Allen}, the joint effect of the individual incentive to \emph{time the market} and the inability of arbitrageurs to coordinate their selling strategies in \cite{Abreu}, heterogenous beliefs between interacting traders as in \cite{Follmer}, \cite{HarrisonKreps}, \cite{Scheinkman}, \cite{Scheinkman2013}, \cite{Xiong}, \cite{Zhuk}, a disruption of the dynamic stability of the financial system in \cite{Douady2009}, \cite{Douady2011}, the diffusion of new investment decision rules from a few expert traders to a larger population of amateurs in \cite{Earl2007}, the tendency of investors to adopt the behavior of other agents in \cite{Kaizoji2000}, the presence of short-selling constraints in \cite{Miller} and of noise traders with erroneous stochastic beliefs in \cite{DeLong}. 

However, mathematical models for microfinancial interactions leading to the formation of asset price bubbles are still missing. This paper aims at filling this gap by studying a random matching mechanism  among investors which impacts the trading volume {of an asset and then its price via illiquidity effects}. To this purpose, we first introduce a discrete time version of the liquidity based model of \cite{JarrowProtter2012}, where the fundamental price of the asset is exogenously given, while the market price is influenced by the trading activities of investors {via an erosion of the limit order book}. {The birth of a bubble is then caused by a deviation of the market value from the fundamental one.} \\
Here, we model the signed volume of market orders by assuming that the investment attitudes of the traders on the market are influenced via a random matching mechanism. To this scope, we suppose that agents on the market can be of three types, i.e. optimistic, neutral and pessimistic regarding the future returns of the asset, and that they trade according to their type. This means that an {optimistic} agent places a buy market order {while a pessimistic one places a selling market order}. Neutral agents neither buy nor sell the asset. The evolution of the signed volume of market orders is {thus} determined by the fraction of optimistic, neutral and pessimistic agents, respectively. 

In order to model the evolution of these {quantities}, we extend the Markov conditionally independent dynamic directed random matching of \cite{RandomMatchingDiscrete} to a stochastic setting. More precisely, the model in \cite{RandomMatchingDiscrete} describes a mechanism how a continuum of agents search in a directed way for a suitable counterparty. The word ``directed'' refers to the fact that the search is not purely random, but the agents are motivated to meet another agent that provides them with some benefit. In particular, every agent is described by its type which may change at any time step, and can randomly mutate to another type {and} randomly match with another agent. {This meeting may induce a further type change}. Furthermore, agents can also enter some potentially enduring partnerships with random break-up times. 

These models have a broad application for example in the field of financial markets, monetary theory and labor economics. The first mathematical basis for this approach in a discrete time setting is provided in \cite{RandomMatchingDiscrete} and strongly relies on techniques of non-standard analysis, as a continuum of agents is considered. Given some deterministic functions describing the {probabilities associated to the random matching and random changes introduced above}, they prove existence of a dynamical system with independent agents types' and deterministic cross-sectional distribution of types. We now extend this model by allowing {the probabilities driving the system to also depend on an additional state of the world} to allow the random matching mechanism to be driven by some {stochastic} exogenous factors. Hereby, the technical difficulty is to find a suitable setting to extend the results in \cite{RandomMatchingDiscrete} in a consistent way. For this purpose we construct a probability space $\Omega$ as the product of the space $\hat \Omega$ of the random matching and the space $\tilde \Omega$ of the factors which may influence the transition probabilities, {and introduce a Markov kernel on $\Omega$}. After proving the existence of such a dynamical system with input processes, we study conditional type distributions. \\
We then apply these results to model investment attitudes {leading to bubble formation} in the discretized version of \cite{JarrowProtter2012}. More precisely, we assume that the signed volume of market orders is described by a random matching mechanism, where the agents can be of positive, negative or neutral type, as explained above. {The stochasticity of the transition probabilities is crucial here as it reflects the impact of heterogenous factors such as socio-economic indicators, external events, public news.} We are able to show that the market model is arbitrage-free by proving the existence of an equivalent martingale measure under suitable assumptions. Furthermore, we provide some examples for the input processes of the random matching mechanism in an arbitrage-free market model. We illustrate these results with numerical simulations. \\ 
The paper is organized as follows. In Section \ref{sec:FormationOfABubble} we introduce a discrete time version of the liquidity based model of asset prices in \cite{JarrowProtter2012}. In Section \ref{sec:main} we extend the directed random matching mechanism in \cite{RandomMatchingDiscrete} to a stochastic setting. We combine these two {constructions} in Section \ref{sec:tradingvolume}, {where we propose a model of} the signed volume of market orders influence by a random matching mechanism. In this setting we derive some conditions guaranteeing that the financial market model is arbitrage-free and we conclude with some numerical simulations.

\section{The formation of asset price bubbles} \label{sec:FormationOfABubble}
 
We consider a word-of-mouth mechanism spreading among investors who meet by random matching, giving rise to the formation of asset price bubbles. To this scope, we introduce a discretized version of the liquidity-based model in \cite{JarrowProtter2012}. 
\subsection{A liquidity-based model for asset price bubbles in discrete time}\label{sec:tradingimpact}
We here present a discrete time version of the continuous time model of \cite{JarrowProtter2012}, which explains the birth of bubbles as the deviation of the market price $S$ from the fundamental price $F$ caused by the impact of trading volume and illiquidity. \\
Let $T>0$ be a given trading horizon and consider a time discretization $0=:t_0<t_1 <...<t_{N-1}<t_N=T$ {of the interval $[0,T]$}. Also introduce a filtered probability space $(\Omega, \mathcal{F},(\mathcal{F}^{i})_{i=0,...,N},P)$, where we set $\mathcal{F}^{i}:=\mathcal{F}_{t_i}$ for $i=1,...,N$. In Section \ref{sec:main} we further specify a possible construction of this space in the context of random matching. The market model consists of the money market account $B \equiv 1$ and one liquid financial asset (stock), which is traded through limit and market orders.
\begin{remark}
In order to be consistent with the notation of the random matching mechanism, see Section \ref{sec:main}, we indicate the time $t_i$ with a superscript $i$ for filtrations or processes.
\end{remark}
The fundamental price of the asset is given by the stochastic process $\green{F=(F^{i})_{i=0,...,N}}$, where $F^i$ represents the value at time $t_i$ for $i=0,...,N$. Such a process is exogenously given. On the other hand, the market price of the asset is generated by the trading activity of the investors as we describe in the following.
 
 Coherently with the construction of \cite{JarrowProtter2012}, we assume that the average price to pay per share for a transaction of size $x$ via a market order \green{at time $t_i$} is given by
\begin{equation} \label{eq:DiscretizedSupplyCurve}
	S^{i}(x)=S^{i} +M^{i}x, \quad x \in \mathbb{R}^+, \ i=0,...,N,
\end{equation}
where $S=(S^{i})_{i=0,...,N}$ and $M=(M^{i})_{i=0,...,N}$ are non-negative, adapted processes on the space \linebreak
$(\Omega, \mathcal{F},\green{(\mathcal{F}^{i}})_{i=0,...,N},P)$, representing the quoted price and a measure of illiquidity, respectively. Fix a time $t_i$ for $i=1, \dots, N$. The limit order book at $t_i$ is described by the density function $\green{\rho^{i}(\cdot)}$, where $\green{\rho^{i}(z)}$ is the number of shares offered at price $z$ at time $t_i$. As in \cite{JarrowProtter2012}, the total amount paid by a trader who wants to buy $x$ shares at time $t_i$ is given by
\begin{equation}
\int_{S^{i}}^{z_x} z \rho^{i}(z) dz,
\end{equation}
where $z_x$ is the solution of
\begin{equation*}
	\int_{S^{i}}^{z_x} \rho^{i}(z) dz=x. 
\end{equation*}
Due to the linear structure in \eqref{eq:DiscretizedSupplyCurve} it follows that $\rho^{\green{i}}(z)=1/2 M^{\green{i}}$ and $z_x=S^{\green{i}}+2 M^{\green{i}}x$, {see \cite{JarrowProtter2012} for further details.} \\
Let $X=(\green{X^{i}})_{i=0,...,N}$ be an adapted stochastic process representing the signed volume of aggregate market orders (buy minus sell orders).
Next, we introduce a process $R=(\green{R^{i}})_{i=0,...,N}$ with values in $[0,1]$ to describe the short-term resiliency {of the limit order book}. In particular, {if $\Delta X$ buy market orders are executed at time $t_i$,}  $R^{\green{i}}$ represents the proportion of new {sell limit} orders {placed from $t_i$ to $t_{i+1}$, having therefore the effect to partly fill the temporary gap $[S^{\green{i}}, S^{\green{i}}+\Delta X]$} in the limit order book.
{If the gap caused by the new buy market orders is not fully filled before other market orders are executed, the market price of the asset deviates from the fundamental value, thus creating a bubble. However, it is observed that such a deviation decays in the long run, see \cite{JarrowProtter2012} for details. Such an effect is quantified by the speed of decay process $\kappa=(\green{\kappa^{i}})_{i=0,...N}$.}

The evolution of the market price process $S=(\green{S^{i}})_{i=1,...,N}$ is then given by 
\begin{align}
	S^{\green{i}}=S^{\green{i-1}} + F^{\green{i}}-F^{\green{i-1}}  - \kappa^{\green{i}}(S^{\green{i-1}}-F^{\green{i-1}})\Delta \green{t_i} + 2\Lambda^{\green{i}}M^{\green{i}}\Delta X^{\green{i}},\quad i = 1, \dots,N, \label{eq:DefMarketPriceProcess}
\end{align}
where  $\Lambda^{\green{i}}:=1-R^{\green{i}}$, $i=\black{0},...,N$. Moreover, $\Delta \green{t_i}:=t_{i}-t_{i-1}$, $\Delta X^{\green{i}}:= X^{\green{i}}-X^{\green{i-1}}$ for $i=1,...,N$. At initial time we have $X^{\black{0}}=0$ and $S^{\black{0}}=F^{\black{0}}$. In particular, \eqref{eq:DefMarketPriceProcess} is a discretized version of the SDE considered in \cite{JarrowProtter2012}.

{Following \cite{JarrowProtter2012}, we now provide the definition of an asset price bubble in this setting.}

\begin{definition}
	An \emph{asset price bubble} $\beta=(\beta^{\green{i}})_{i=0,...,N}$ is defined as  
	\begin{equation*}
		\beta^{\green{i}}:=S^{\green{i}}-F^{\green{i}}, \quad i = \black{0}, \dots,N.
	\end{equation*}
By \eqref{eq:DefMarketPriceProcess} we obtain that
\begin{align}
	\beta^{\green{i}}&=\beta^{\green{i-1}}-\kappa^{\green{i}}\beta^{\green{i-1}}\Delta \green{t_i} + 2\Lambda^{\green{i}}M^{\green{i}}\Delta X^{\green{i}},  \quad i = 1, \dots,N, \label{eq:BubbleEvolution}\\
	\beta_0 &= 0. \notag
\end{align}
\end{definition}
The \emph{birth} and the \emph{burst times of the bubble} are identified by the stopping times
\begin{equation} \label{eq:DefinitionBirth}
\tau_{{+}}:=\green{t_{\bar{l}}} \quad \text{ with } \quad \green{\bar{l}:=\inf \lbrace j=0,...,N: \beta^{\green{j}} > 0\rbrace {\wedge T}}
\end{equation}  
and
$$
\tau_0:=\green{t_{\bar{k}}} \quad \text{ with } \quad \green{\bar{k}:= \inf \lbrace j=0,...,N: j\geq \bar{l} \text{ in }  \eqref{eq:DefinitionBirth} \text{ such that } \beta^{j}=0 \rbrace {\wedge T}},
$$
respectively. We use here the convention $\inf \emptyset =+\infty$.
Note that $\tau_{{+}}$ is the first time when the three process $\Lambda$, $M$ and $X$ are different from zero, see \eqref{eq:BubbleEvolution}.

\begin{definition}
	The \emph{market wealth process} $W=(W^{i})_{i=0,...,N}$ is defined by
	\begin{equation*}
		W^{\green{i}}=D^{\green{i}}+S^{\green{i}} \textbf{1}_{\lbrace t_i < \tau \rbrace} + F^{\green{j}}  \textbf{1}_{\lbrace t_{\green{j}} = \tau\rbrace}, \quad i = 0, \dots,N,
	\end{equation*}
	and the \emph{fundamental wealth process} $W^F=(\green{W^{F,i}})_{i=0,...,N}$ by
	\begin{equation*}
		W^{F,\green{i}}=D^{\green{i}}+F^{\green{i}}, \quad i = 0, \dots,N.
	\end{equation*}
\end{definition}
Note that 
\begin{align*}
	W^{\green{i}}-W^{\green{F,i}}=S^{\green{i}}-F^{\green{i}}=\beta^{\green{i}}, \quad i = 0, \dots,N.
\end{align*}

Equation \eqref{eq:BubbleEvolution} shows that the main force driving the bubble evolution is the signed volume of market orders $X$. We now focus on modeling $X$ by assuming that the investment attitudes of the traders on the market are influenced via a random matching mechanism.
To this scope, we suppose that agents on the market can be of three types, i.e. optimistic, neutral and pessimistic regarding the future returns of the asset, and that they trade according to their type. This means that an {optimistic} agent places a buy market order {while a pessimistic one places a selling market order}. Neutral agents neither buy nor sell the asset. {Based on this characterization, from now on we refer to optimistic and pessimistic agents also as \emph{buyers} and \emph{sellers}, respectively. We admit that agents may influence each other if they meet, and that they may change their type at each time $t_i$, $i=1,...,N,$ via a random matching mechanism as we explain in Section \ref{sec:main}. The evolution of $X$ is determined by the processes \green{$p_i=(p^{j}_i)_{j=0,\dots,N}$}, $i=1,2,3$, standing for the fraction of optimistic, neutral and pessimistic agents, respectively. In particular, the value of $X$ at time $t_i$ is given by
 \begin{equation}\label{eq:Xp}
\green{ X^i=\Theta^i (p_1^{i}-p_3^{i}),} \quad i = 0, \dots, N,
\end{equation}
 where $\Theta=\green{(\Theta^{i})}_{i = 0, \dots, N}$ is an adapted stochastic process modelling the average size of buy market orders as in \cite{firstpaper}. We assume that at time $t_0=0$ it holds $\green{p_0^1=p_3^i}$, i.e. that the fraction of optimistic agents is equal to that of pessimistic ones, so that $\green{X^0=0}$. We now model the evolution of the fractions \green{$p_1$, $p_2$} and \green{$p_3$} by using a special case of the \emph{Markov {c}onditionally {i}ndependent dynamic directed random matching} which we introduce in the next section.

\section{Markov Conditionally Independent dynamic directed random matching}\label{sec:main}

Consider a probability space $({\Omega}, \cal{F}, P)$ representing all possible states of the world, on which we consider a large economy. The space of agents is given by an atomless probability space $(I, \cal{I}, \lambda)$. Furthermore, it is a common assumption that the agents also face some individual risks. The natural approach to take this into account is to consider a random variable $f$ on the product space $(I \times \Omega, \cal{I} \otimes \cal{F})$ to a Polish space $(Y, \cal{G})$ which is essentially pairwise independent, see Definition 2 in \cite{RandomMatchingDiscrete}.
\begin{definition}
Consider the random variable $f: (I \times \Omega, \cal{I}\otimes \cal{F}, \lambda \otimes P) \to (Y,\cal{G}),$ where $Y$ is a Polish space endowed with the Borelian $\sigma$-algebra $\cal{G}$. We set $f_i:=f(i, \cdot)$ for all $i \in I.$ We say that $f$ is \emph{essentially pairwise independent} if for $\lambda$-almost all $j \in I$, $f_j$ is independent of $f_i$ for $\lambda$-almost all $i \in I$.
\end{definition}
In Proposition 2.1 of \cite{sun2006exact} and Proposition 1.1 of \cite{sun1998almost} it is shown that an essentially pairwise independent random variable, which is also jointly measurable, is constant for $\lambda$-almost all $i \in I$. This is the so called ``sample measurability problem'', which has been studied in \cite{doob_1937}, \cite{Keisler_Sun}. To overcome this issue, the $\sigma$-algebra $\cal{I} \otimes \cal{F}$ needs to be enlarged to allow jointly measurable random variables to be essentially pairwise independent but not constant. This measurability problem can be solved by working with an extension of the product space $(I \times \Omega, \cal{I} \otimes \cal{F}, \lambda \otimes P)$ which still satisfies the Fubini property, see \cite{sun2006exact}.
We here recall the definition of a Fubini extension, see e.g. Definition 1 of \cite{RandomMatchingDiscrete}. 
\begin{definition}
A probability space $(I \times \Omega, \cal{W}, Q)$ is said to be a \emph{Fubini extension} of the product probability space $(I \times \Omega, \cal{I} \otimes \cal{F}, \lambda \otimes P)$ if for any real-valued $Q$-integrable random variable $f$ on $(I \times \Omega, \cal{W}, Q)$ we have that
\begin{enumerate}
\item  the functions $f_i(\cdot):=f(i,\cdot)$ and $f_{\omega}(\cdot):=f(\cdot,\omega)$ are integrable on $(\Omega, \cal{F}, P)$ for $\lambda$-almost all $i \in I$, and on 
$(I, \cal{I}, \lambda)$ for $P$-almost all $\omega \in \Omega$, respectively;
\item $\int_{\Omega} f_i dP$ and $\int_{I} f_{\omega} d\lambda$ are integrable on $(I, \cal{I}, \lambda)$ and on $(\Omega, \cal{F}, P)$, respectively, with 
$$
\int_{I \times \Omega} f dQ=\int_I\left(\int_{\Omega}f_idP\right)d\lambda=\int_{\Omega}\left(\int_{I}f_id\lambda\right)dP.
$$
\end{enumerate}
The Fubini extension is denoted by $(I \times \Omega, \cal{I} \boxtimes \cal{F}, \lambda \boxtimes P)$.
\end{definition}
Moreover, note that by definition it holds $\lambda \boxtimes P |_{\cal{I} \otimes \cal{F}}= \lambda \otimes P$. In Theorem 6.2 in \cite{sun1998ATheoryOfHyperfinite} and Proposition 5.6 in \cite{sun2006exact} it is shown that there exists a \emph{rich} Fubini extension, which allows the construction of processes with essentially pairwise independent and  jointly measurable random variables, which are not $\lambda$-almost surely constant. From now on, we always work with such a rich Fubini extension of the original product space. We now describe a matching mechanism among the agents, by following Definition 2 in \cite{duffie_qiao_sun_2017}. 
\begin{definition} \label{defi:Matching}
	\begin{enumerate}
		\item A \emph{full matching} $\phi:I \to I$ is a one-to-one mapping, such that for each $i \in I$, $\phi(i) \neq i$ and $\phi(\phi(i))=i$.
		\item A \emph{(partial) matching} $\psi$ is a matching from $I$ to $I$ such that for some subset $B$ of $I$, the restriction of $\psi$ to $B$ is a full matching on $B$, and $\psi(i)=i$ on $I \setminus B$. This means that agent $i$ is matched with agent $\psi(i)$ for $i \in B$, whereas any agent $i$ not in $B$ is unmatched, represented by setting $\psi(i)=i$.
		\item A \emph{random matching} on $\Omega$ is a mapping $\pi: I \times \Omega \to I$ such that $\pi_{\omega}:=\pi(\cdot, \omega)$ is a matching for each $\omega \in \Omega$.
	\end{enumerate}
\end{definition} 
In the sequel we use ``matching'' to denote a partial matching for the sake of simplicity. \\

\black{Next, we introduce the definition of a dynamical system with input processes.} A dynamical, directed random matching mechanism has been studied for the first time in \cite{RandomMatchingDiscrete}, where the  probabilities describing the random matching are deterministic functions which only depend on the current probability distribution on the space of extended type distributions. We generalize this approach by allowing the probabilities also to depend on the state of the world, which influences the random matching. Hereby, the technical difficulty is to find a suitable setting to extend the results in \cite{RandomMatchingDiscrete} in a consistent way, as the existence of the random matching system in \cite{RandomMatchingDiscrete} relies extensively on techniques of nonstandard analysis. This is necessary in order to construct a Fubini extension by working with Loeb spaces which satisfy a Fubini property. We provide a generalization of the setting in  \cite{RandomMatchingDiscrete} by assuming that $\Omega$ is a product space and by using stochastic kernels.\\
Let $(\tilde \Omega, \tilde{\mathcal{F}}, \tilde P)$ be a probability space and $(\hat \Omega, \hat{\mathcal{F}})$ another measurable space. We consider the product space  
\begin{equation} \label{eq:ProductSpace}
	(\Omega,  \mathcal{F}):={(\tilde \Omega \times \hat \Omega , \tilde{\mathcal{F}} \otimes \hat{\mathcal{F}})},
\end{equation} 
and recall the definition of a Markov kernel from Definition 8.25 in \cite{Klenke_2020} for the reader's convenience.
\begin{definition} \label{def:MarkovKernel}
	Let $(\tilde{\Omega}, \tilde{\mathcal{F}}), (\hat{\Omega}, \hat{\mathcal{F}})$ be measurable spaces. A map $\kappa:\tilde{\Omega} \times \hat{\mathcal{F}} \to [0,\infty]$ is called a \emph{Markov kernel} or \emph{stochastic kernel} from $\tilde{\Omega}$ to $\hat{\Omega}$ if:
	\begin{enumerate}
		\item $\kappa(\cdot, \hat{A})$ is $\tilde{\mathcal{F}}$-measurable for any $\hat{A} \in \hat{\mathcal{F}}$;  
		\item $\kappa(\tilde{\omega}, \cdot)$ is a probability measure on $(\hat{\Omega}, \hat{\mathcal{F}})$ for any $\tilde{\omega} \in \tilde{\Omega}$. 
	\end{enumerate}
\end{definition} 
Let $\hat{P}$ be a Markov kernel (or stochastic kernel) from $\tilde \Omega$ to $\hat \Omega$. Given $\tilde \omega \in \tilde \Omega,$ we set $\hat P^{\tilde \omega}:=\hat P(\tilde \omega)$ with a slight notational abuse. We then introduce a probability measure $P$ on $(\Omega,  \mathcal{F})$ as the semidirect product of $\tilde P$ and $\hat P$, that is,
\begin{equation}\label{eq:ProbhatPSetting}
P({\tilde A \times \hat{A}}) := (\tilde P \ltimes \hat P)({\tilde A \times \hat{A}}) =  \int_{\tilde A}\hat P^{\tilde \omega}(\hat A)d\tilde P(\tilde \omega)
\end{equation} 
\black{for $\tilde{A} \in \tilde{\mathcal{F}}$, $\hat{A} \in \hat{\mathcal{F}}$.}\\
Let $(I \times \Omega, \cal{I} \boxtimes \cal{F},\lambda \boxtimes P)$ be {a rich} Fubini extension of $(I \times \Omega, \cal{I} \otimes \cal{F}, \lambda \otimes P)$. We classify all agents in ${I}$ according to their type belonging to the finite space $S=\lbrace 1,2,...,K \rbrace$. We say that an agent has type $J$ if he is not matched. We denote by $\hat{S}:=S \times (S \cup \lbrace J \rbrace)$ the extended type space. If an agent has the extended type $(k,l),$ this means that he is of type $k \in S$ and is currently matched to another agent of type $l \in S$. If an agent of type $k$ is not matched at the moment, the agent's extended type is $(k,J)$. We consider probability distributions on $\hat{S}.$ In particular, we introduce the space $\hat{\Delta}$ of extended type distributions, which is the set of probability distributions ${p}$ on $\hat{S}$ satisfying ${p}(k,l)={p}(l,k)$ for any $k$ and $l$ in $S$. We endow $\hat{\Delta}$ with the topology $\cal{T}^{\Delta}$ induced by the topology on the space of matrices with $|S|$ rows and $|S| + 1$ columns. Moreover, let $\hat{p}=(\hat{p}^n)_{n \geq 1}$ be a stochastic process on $(\Omega, \mathcal{F},P)$ with values in $\hat{\Delta}$, {representing the  evolution of the underlying extended type distribution}. We assume that $\hat{p}^0$ is deterministic.\\
In this setting we describe how agents may change their type by random matching with other agents. Consider time periods $(n)_{n \geq 1}$.
Each time period $n$ can be divided into three steps: \emph{mutation}, \emph{random matching} and \emph{match-induced type changing with break-up}. We assume that the probabilities of these three mechanisms depend on the state of the world $\omega \in \Omega$, i.e. these steps are determined by input processes $(\eta^n, \theta^n, \xi^n, \sigma^n, \varsigma^n)_{n \geq 1}$ on $(\Omega, \mathcal{F},P)$ as we describe next. Here, $(\eta^n, \theta^n, \xi^n, \sigma^n, \varsigma^n)$ are matrix valued processes, with $(\eta^n, \theta^n, \xi^n, \sigma^n, \varsigma^n)=(\eta^n_{kl},\theta^n_{kl}, \xi^n_{kl}, \sigma^n_{kl}[r,s], \varsigma^n_{kl}[r])_{k,l,r,s \in S \times S \times S \times S }$ for $n \geq 1$.  \\
The first step of each time period $n$ is the random mutation step, i.e. an agent of type $k \in S$ becomes an agent of type $l$ with a given mutation probability $\eta^{n}_{kl}$, where
\begin{align}
	\eta^n_{kl}: ({\Omega},\mathcal{F},P) \to ([0,1],\mathcal{B}([0,1])) \notag \\
	\eta_{kl}^n(\omega):=\eta_{kl}(\tilde{\omega},n, \hat{p}^n(\tilde{\omega}, \hat{\omega})) \label{eq:MutationProcessEta1}
\end{align}
with $\eta_{kl}:\tilde{\Omega} \times \mathbb{N} \times \hat{\Delta} \to [0,1]$. Precisely, if $\hat{p}^n(\tilde{\omega},\hat{\omega})$ is the underlying extended type distribution at time $n \in \mathbb{N}$ under the scenario $(\tilde{\omega}, \hat \omega) \in \Omega$, then $\eta_{kl}(\tilde{\omega},n,\hat{p}^n(\tilde{\omega}, \hat{\omega}))$ represents the probability that an agent of type $k$ becomes an agent of type $l$ at time $n$ given $\tilde{\omega}$. Here we assume that for every $n \geq 1$, $k,l \in S,$ the function $\eta^n_{kl}$ is $(\mathcal{F},\mathcal{B}([0,1]))$-measurable, and that for each $k \in S$ and $\omega \in \Omega$ it holds
\begin{equation} \label{eq:EtaSumUpTo1}
	\sum_{l \in S}\eta_{kl}^n(\omega)=1. 
\end{equation}
In the second step any currently unmatched agent can be matched. For each $(k,l) \in S \times S$ we define
\begin{align}
	\theta^{n}_{kl}: ({\Omega},\mathcal{F},P) \to ([0,1],\mathcal{B}([0,1])) \label{eq:MatchingProcessTheta1}\\
	\theta_{kl}^n(\omega):=\theta_{kl}(\tilde{\omega},n, \hat{p}^n(\tilde{\omega}, \hat{\omega})) \label{eq:MatchingProcessTheta2}
\end{align}
with $\theta_{kl}:\tilde{\Omega} \times \mathbb{N} \times \hat{\Delta} \to [0,1]$. If $\hat{p}^n(\tilde{\omega}, \hat{\omega})$ is the underlying extended type distribution at time $n \in \mathbb{N}$ under the scenario $(\tilde{\omega}, \hat \omega) \in \Omega$, then $\theta_{kl}(\tilde{\omega},n,\hat{p}^n(\tilde{\omega}, \hat{\omega}))$ is the probability that an unmatched agent of type $k$ is matched to an agent of type $l$ given the scenario $\tilde{\omega}$ at time $n$. 
Here, we assume that for every $n \geq 1$, $k,l \in S$ the function $\theta^n_{kl}$ is $(\mathcal{F},\mathcal{B}([0,1]))$-measurable, and that for all $k,l \in S, \tilde \omega \in \tilde{\Omega},$ and $n \in \mathbb{N}$ the function $\hat{p}_{kJ}\theta_{kl}(\tilde{\omega}, n,\hat{p})$ is continuous in $\hat{p} \in \hat{\Delta}$ {with respect to the topology $\cal{T}^{\Delta}$}. Moreover, for any $k,l \in S, \tilde\omega \in \tilde{\Omega}, n \in \mathbb{N}$ and $\hat{p} \in \hat{\Delta}$ we suppose the following to hold 
\begin{equation} \label{eq:ThetaCondition}
	\hat{p}_{kJ}\theta_{kl}(\tilde{\omega},n,\hat{p})=\hat{p}_{lJ}\theta_{lk}(\tilde{\omega},n,\hat{p}) \quad \text{ and } \quad \sum_{r \in S} \theta_{kr}(\tilde{\omega},n,\hat{p}) \leq 1.
\end{equation}
Moreover, we define $b^n_k({\omega}):=1-\sum_{l \in S} \theta_{kl}^n(\omega)$.\\ In the third step, a currently matched pair of agents of respective types $k$ and $l$, including those who have been matched at the second step, can break up. To describe this behavior we consider the process
\begin{align}
	\xi^{n}_{kl}: ({\Omega},\mathcal{F},P) \to ([0,1],\mathcal{B}([0,1])) \notag \\
	\xi_{kl}^n(\omega):=\xi_{kl}(\tilde{\omega},n, \hat{p}^n(\tilde{\omega}, \hat{\omega})) \label{eq:BreakingUpProcessXi}
\end{align}
with $\xi_{kl}:\tilde{\Omega} \times \mathbb{N} \times \hat{\Delta} \to [0,1]$.  If $\hat{p}^n(\tilde{\omega}, \hat{\omega})$ is the underlying extended type distribution at time $n \in \mathbb{N}$ under the scenario $(\tilde{\omega}, \hat \omega) \in \Omega$, then $\xi_{kl}(\tilde{\omega},n,\hat{p}^n(\tilde{\omega}, \hat{\omega}))$ represents the probability that a matched pair of types $k$ and $l$ breaks up under $\tilde{\omega}$. We assume that for every $n \geq 1$, $k,l \in S,$ it holds  $\xi_{kl}^n=\xi_{lk}^n$ and that the function $\xi^n_{kl}$ is $(\mathcal{F},\mathcal{B}([0,1]))$-measurable. \\
For each $(k,l, r, s) \in S \times S \times S \times S$ and $n \geq 1$, we now introduce a process $\sigma_{kl}^n[r,s]$ given by
\begin{align}
	\sigma^{n}_{kl}[r,s]: ({\Omega},\mathcal{F},P) \to ([0,1],\mathcal{B}([0,1])) \notag \\
	\sigma_{kl}^n[r,s](\omega):=\sigma_{kl}[r,s](\tilde{\omega},n, \hat{p}^n(\tilde{\omega}, \hat{\omega})).\label{eq:sigma}
\end{align}
In particular, if $\hat{p}^n(\tilde{\omega}, \hat{\omega})$ is the underlying extended type distribution at time $n \in \mathbb{N}$ under the scenario $(\tilde{\omega}, \hat \omega) \in \Omega$, then $\sigma_{kl}[r,s](\tilde{\omega},n,\hat{p}^n(\tilde{\omega}, \hat{\omega}))$ is the probability that a matched pair of agents of respective types $k$ and $l$, which stays in their relationship, becomes a pair of agents of type $r$ and $s$, at time $n$ given $\tilde \omega$. Here, we assume that for every $n \geq 1$, $k,l,r,s \in S$ the function $\sigma^n_{kl}[r,s]$ is $(\mathcal{F},\mathcal{B}([0,1]))$-measurable. We also assume that for each $\omega \in {\Omega}$ it holds
\begin{equation}\label{eq:conditionsigma}
	\sum_{r,s \in S} \sigma_{kl}^n[r,s](\omega)=1 \quad \text{ and } \quad \sigma_{kl}^n[r,s](\omega)=\sigma_{lk}^n[s,r](\omega)
\end{equation}
for any $k,l,r,s \in S$. \\
For each $(k,l, r) \in S \times S \times S$, if a matched pair of agents of respective types $k$ and $l$ breaks up, the agent of type $k$ can become an agent of type $r$ with probability $\varsigma_{kl}^n[r]$ given by the process
\begin{align}
	\varsigma^{n}_{kl}[r]: ({\Omega},\mathcal{F},P) \to ([0,1],\mathcal{B}([0,1])) \nonumber \\
	\varsigma_{kl}^n[r](\omega):=\varsigma_{kl}[r](\tilde{\omega},n, \hat{p}^n(\tilde{\omega}, \hat{\omega})) \label{eq:MatchingAfterBreakUp1}
\end{align}
with $\varsigma_{kl}[r]:\tilde{\Omega} \times \mathbb{N} \times \hat{\Delta} \to [0,1]$.  Precisely, if $\hat{p}^n(\tilde{\omega}, \hat{\omega})$ is the underlying extended type distribution at time $n \in \mathbb{N}$ under the scenario $(\tilde{\omega}, \hat \omega) \in \Omega$, then $\varsigma_{lk}^n[r](\tilde{\omega},n,\hat{p}^n(\tilde{\omega}, \hat{\omega}))$ represents the probability that given a pair of agents of type $k$ and $l$ that break up at time $n$, the agent of type $k$ becomes an agent of type $r$ given $\tilde{\omega}$.  Here, we assume that for every $n \geq 1$, $k,l,r \in S$ the function $\varsigma^n_{kl}[r]$ is $(\mathcal{F},\mathcal{B}([0,1]))$-measurable, and that for each $\omega \in \Omega$
\begin{equation} \label{eq:VarSigmaSumUpTto1}
	\sum_{r \in S } \varsigma_{kl}^n[r](\omega)=1.
\end{equation} 
We assume that $(\eta^0, \theta^0, \xi^0, \sigma^0, \varsigma^0)$ are deterministic, i.e. $\eta(0,\cdot), \theta(0,\cdot), \xi(0,\cdot), \sigma(0,\cdot), \varsigma(0,\cdot): \hat{\Delta} \to [0,1]$.
\begin{remark}
	Note that for fixed $\tilde{\omega} \in \tilde{\Omega}$ the processes $(\eta, \theta, \xi,\sigma, \varsigma)$ are deterministic and depend only on the current time $n$ and the extended type distribution $\hat{p}^n$ at time $n$. This means that for fixed $\tilde{\omega} \in \tilde{\Omega}$ the setting boils down to the framework of \cite{RandomMatchingDiscrete}.
\end{remark}
Next, we use the processes $(\eta, \theta, \xi,\sigma, \varsigma)$ to give a definition of a dynamical system $\mathbb{D}$ with random probabilities, as an extension of a dynamical system in \cite{RandomMatchingDiscrete}.
\begin{definition} \label{defi:DynamicalSystemDiscrete}
A \emph{dynamical system} $\mathbb{D}$ defined on $(I \times \Omega, \cal{I} \boxtimes \mathcal{F},\lambda \boxtimes P )$ is a triple $\Pi=(\alpha, \pi,g)=(\alpha^n, \pi^n, g^n)_{n \in \mathbb{N} \backslash \lbrace 0 \rbrace}$ such that for each integer period $n\geq 1$ we have
\begin{enumerate}
	\item $\alpha^n: I \times {\Omega} \to S$ is the $\cal{I} \boxtimes {\mathcal{F}}$-measurable agent-type function. The corresponding end-of-period type of agent $i$ under the realization ${ \omega} \in {\Omega}$ is given by $\alpha^n(i,{\omega}) \in S$.
	\item A random matching $\pi^n: I \times {\Omega} \to I$, describing the end-of-period agent $\pi^n(i)$ to whom agent $i$ is currently matched, if agent $i$ is currently matched. If agent $i$ is not matched, then $\pi^{n}(i)=i$. The associated $\cal{I} \boxtimes {\mathcal{F}}$-measurable partner-type function $g^{n}:I \times {\Omega} \to S \cup \lbrace J \rbrace$ is given by
		\[g^n(i,{\omega})=\begin{cases}
	\alpha^n(\pi^n(i,{\omega}),{\omega}) & \text{ if } \pi^n(i,{\omega}) \neq i\\
		J& \text{ if } \pi^n(i,{\omega}) = i,
	\end{cases}\]
	providing the type of the agent to whom agent $i$ is matched, if agent $i$ is matched, or $J$ if agent $i$ is not matched. 
\end{enumerate}
Let the initial condition $\Pi^0=(\alpha^0,\pi^0,g^0)$ of $\mathbb{D}$ be given, \black{i.e. functions $\alpha^0: I \to S$, $\pi^0:I \to I$ and $g^0: I \to S \cup \lbrace J \rbrace$.} We now construct a dynamical system $\mathbb{D}$ defined on $(I \times \Omega, \cal{I} \boxtimes \mathcal{F},\lambda \boxtimes P )$ with {input processes} $(\eta^n,\theta^n,\xi^n, \sigma^n, \varsigma^n)_{n\geq 1}$. We assume that $\Pi^{n-1}=(\alpha^{n-1},\pi^{n-1},g^{n-1})$ is given for some $n \geq 1$, and define $\Pi^{n}=(\alpha^n,\pi^n, g^n)$ by characterizing the three sub-steps of random change of types of agents, random matchings, break-ups and possible type changes after matchings and break-ups as follows. 

\textbf{Mutation:} For $n \geq 1$ consider an $\cal{I} \boxtimes {\mathcal{F}}$-measurable post-mutation function $$\bar{\alpha}^n: I \times \Omega \to S.$$ In particular, $\bar{\alpha}_i^n({\omega}):=\bar{\alpha}^n(i,\omega)$ is the type of agent $i$ after the random mutation under the scenario ${\omega} \in {\Omega}$. The type of the agent to whom {an agent} is matched is identified by a $\cal{I} \boxtimes {\mathcal{F}}$-measurable function $$\bar{g}^n:I \times \Omega \to S \ {\cup \  \lbrace J \rbrace},$$ given by 
$$
\bar{g}^n(i,{\omega})=\bar{\alpha}^n(\pi^{n-1}(i,{\omega}),{\omega})
$$ 
for any ${\omega} \in {\Omega}$. {In particular, $\bar{g}_i^n({\omega}):=\bar{g}^n(i,\omega)$ is the type of the agent to whom {agent} \black{$i$} is matched  under the scenario ${\omega} \in {\Omega}$.} Given $\hat{p}^{n-1}$ and $\tilde \omega \in \tilde \Omega$, for any $k_1,k_2,l_1$ and $l_2$ in $S$, for any $r \in S \cup \lbrace J \rbrace$, for $\lambda$-almost every agent $i$, we set
\small{
\begin{align}
&\hat P^{\tilde \omega}\left(\bar{\alpha}_i^n{(\tilde{\omega},\cdot)}=k_2, \ \bar{g}_i^n{(\tilde{\omega},\cdot)}=l_2 \vert \alpha_{i}^{n-1}{(\tilde{\omega},\cdot)}=k_1, \ g_i^{n-1}{(\tilde{\omega},\cdot)}=l_1, \ \hat{p}^{n-1} {(\tilde{\omega},\cdot)}\right)\blue{(\hat{\omega})}\nonumber \\&\quad=\eta_{k_1,k_2}\left( \tilde \omega,n,\hat{p}^{n-1}(\tilde{\omega},{\hat{\omega}})\right)\eta_{l_1,l_2}\left( \tilde \omega,n,\hat{p}^{n-1}(\tilde{\omega},{\hat{\omega}})\right), \label{eq:IndiMutation1}
\end{align} 
\begin{align}
&\hat P^{\tilde \omega}\left(\bar{\alpha}_i^n{(\tilde{\omega},\cdot)}=k_2, \ \bar{g}_i^n{(\tilde{\omega},\cdot)}=r \vert \alpha_{i}^{n-1}{(\tilde{\omega},\cdot)}=k_1, \ g_i^{n-1}{(\tilde{\omega},\cdot)}=J, \ \hat{p}^{n-1}{(\tilde{\omega},\cdot)}\right)\blue{(\hat{\omega})} \nonumber \\
&\quad =\eta_{k_1,k_2}\left(\tilde{\omega},n,\hat{p}^{n-1}(\tilde{\omega},{\hat{\omega}})\right)\delta_J(r), \label{eq:IndiMutation2}
\end{align}}
We then \black{define} 
$$
\bar{\beta}^n{(\omega)}=(\bar{\alpha}^n{(\omega)}, \bar{g}^n{(\omega)}), \quad n \geq 1.
$$
The post-mutation extended type distribution realized in the state of the world ${\omega \in \Omega}$ is {denoted by $\check{p}(\omega)=(\check{p}^n(\omega)[k,l])_{k \in S,  l \in S \cup \black{ \lbrace J \rbrace}}$, where 
	\begin{equation}\label{eq:PostMutationExtendedType}
	\check{p}^n(\omega)[k,l]:=\lambda(\lbrace i \in I: \bar{\alpha}^n(i,\omega)=k, \  \bar{g}^n(i,\omega)=l\rbrace).
	\end{equation}
}
\textbf{Matching:} {We introduce a random matching} $\bar{\pi}^n: I \times { \Omega} \to I$ {and the associated post-matching partner-type function $\bar{\bar{g}}^n$ given by
\[\bar{\bar{g}}^n(i,{ \omega})=\begin{cases}
\bar{\alpha}^n(\bar{\pi}^n(i,{ \omega}),{ \omega}) & \text{ if } \bar{\pi}^n(i,{ \omega}) \neq i\\
J& \text{ if }  \bar{\pi}^n(i,{ \omega})=i,
\end{cases}\]
satisfying} the following properties:
\begin{enumerate}
\item $\bar{\bar{g}}^n$ is $\cal{I} \boxtimes {\mathcal{F}}$-measurable.
	\item {For any $\tilde \omega \in \tilde \Omega$, any $k,l \in S$ and any $r \in S \cup \lbrace J \rbrace$, it holds
\begin{equation} \notag
	\hat P^{\tilde \omega}(\bar{\bar{g}}^n{(\tilde{\omega},\cdot)}=r \vert \bar{\alpha}^n_i{(\tilde{\omega},\cdot)}=k, \ \bar{g}_i^n{(\tilde{\omega},\cdot)}=l)(\hat{\omega})=\delta_l(r).
\end{equation}
This means that
	\begin{equation} \notag
		\bar{\pi}^{n}_{{ \omega}}(i)=\pi_{{ \omega}}^{n-1}(i) \quad \text{ for any } i \in \lbrace i: \pi^{n-1}(i,{ \omega}) \neq i \rbrace.
	\end{equation}}
\item Given $\tilde{\omega}\in \tilde{\Omega}$ and the post-mutation extended type distribution $\check{p}^n$ in \eqref{eq:PostMutationExtendedType}, an unmatched agent of type $k$ is matched to a unmatched agent of type $l$ with conditional probability {$\theta_{kl}(\tilde{\omega},n,\check{p}^n)$}, that is for $\lambda$-almost every agent $i$ and $\hat P^{\tilde{\omega}}$-almost every $\hat{\omega}$, we define
\begin{equation} \label{eq:MatchingCondProb1}
	\hat P^{\tilde \omega}(\bar{\bar{g}}^n{(\tilde{\omega},\cdot)}=l \vert \bar{\alpha}^n_i{(\tilde{\omega},\cdot)}=k, \ \bar{g}_i^n{(\tilde{\omega},\cdot)}=J, \ \check{p}^n{(\tilde{\omega},\cdot)})(\hat{\omega})=\theta_{kl}^n(\tilde \omega, \check{p}^{n}(\tilde{\omega},\hat{{\omega}})).
\end{equation}
This also implies that
\begin{equation} \label{eq:MatchingCondProb2}
	\hat P^{\tilde \omega}(\bar{\bar{g}}^n{(\tilde{\omega},\cdot)}=J \vert \bar{\alpha}^n_i{(\tilde{\omega},\cdot)}=k, \ \bar{g}_i^n{(\tilde{\omega},\cdot)}=J, \ \check{p}^n{(\tilde{\omega},\cdot)})(\hat{\omega})=1-\sum_{l \in S} \theta_{kl}^n(\tilde{\omega},\check{p}^n(\tilde{\omega},{\hat{\omega}}))=b^k(\tilde{\omega},\check{p}^n(\tilde{\omega},{\hat{\omega}})).
\end{equation}
\end{enumerate}
The extended type of agent $i$ after the random matching step is 
$$\bar{\bar{\beta}}^n_i{(\omega)}=(\bar{\alpha}_i^n{(\omega)},\bar{\bar{g}}_i^n{(\omega)}), \quad n \geq 1.$$ 
{We denote the post-matching extended type distribution realized in ${ \omega} \in {\Omega}$ by $\check{\check{p}}^n(\omega)=(\check{\check{p}}^n(\omega)[k,l])_{k \in S,  l \in S \cup \black{\lbrace J \rbrace}}$, where 
	\begin{equation}\label{eq:PostMatchingExtendedType}
	\check{\check{p}}^n(\omega)[k,l]:=\lambda(\lbrace i \in I: \bar{\bar{\alpha}}^n(i,\omega)=k, \ \bar{g}^n(i,\omega)=l\rbrace).
	\end{equation}}

\textbf{Type changes of matched agents with break-up:} We now define a random matching $\pi^n$ {by}
\begin{equation}
\pi^n(i)=\begin{cases} \label{eq:BreakUpPi}
	\bar{\pi}^n(i) & \text{ if } \bar\pi^n(i) \neq i\\ 
		i& \text{ if } \bar\pi^n(i) = i.
	\end{cases}
\end{equation}
We then introduce an $(\cal{I} \boxtimes {\mathcal{F}})$-measurable agent-type function $\alpha^n$ and an $(\cal{I} \boxtimes {\mathcal{F}})$-measurable partner\black{-type} function $g^n$ {with} 
$$
g^n(i,{\omega})=\alpha^n(\pi^n(i,{\omega}), {\omega}), \quad n \geq 1,
$$
for all $(i,{\omega}) \in I \times {\Omega}$. Given $\tilde{\omega} \in \tilde{\Omega}$, $\check{\check{p}}^n \in \hat{\Delta}$, for any $k_1,k_2,l_1,l_2 \in S$ and $r \in S \cup \lbrace J \rbrace$, for $\lambda$-almost every agent $i$, and for $\hat{P}^{\tilde{\omega}}$-almost every $\hat \omega$, we set 
\begin{align}
	\hat P^{\tilde \omega}\left(\alpha_i^n{(\tilde{\omega},\cdot)}=l_1, \  g_i^n(\tilde{\omega},\cdot)=r \vert \bar{\alpha}_i^n{(\tilde{\omega},\cdot)}=k_1, \ \bar{\bar{g}}^n_i{(\tilde{\omega},\cdot)}=J\right)\blue{(\hat{\omega})}=\delta_{k_1}(l_1) \delta_J(r), \label{eq:BreakUpCondProb0.1}
\end{align}
\begin{align}
	&\hat P^{\tilde \omega}\left(\alpha_i^n{(\tilde{\omega},\cdot)}=l_1, \ g_i^n{(\tilde{\omega},\cdot)}=l_2 \vert \bar{\alpha}_i^n{(\tilde{\omega},\cdot)}=k_1, \ \bar{\bar{g}}^n_i{(\tilde{\omega},\cdot)}=k_2, \ \check{\check{p}}^n{(\tilde{\omega},\cdot)} \right)\blue{(\hat{\omega})} \nonumber \\ &\quad =\left(1-\xi_{k_1k_2}( \tilde \omega, n, \check{\check{p}}^n(\tilde{\omega}, {\hat{\omega}}))\right) \sigma_{k_1 k_2}[l_1,l_2](\tilde \omega, n, \check{\check{p}}^n(\tilde{\omega},{\hat{\omega}})), \label{eq:BreakUpCondProb1} 
\end{align}
\begin{align}
	&\hat P^{\tilde \omega}\left(\alpha_i^n{(\tilde{\omega},\cdot)}=l_1, \  g_i^n{(\tilde{\omega},\cdot)}=J \vert \bar{\alpha}_i^n{(\tilde{\omega},\cdot)}=k_1, \ \bar{\bar{g}}^n_i{(\tilde{\omega},\cdot)}=k_2, \ \check{\check{p}}^n {(\tilde{\omega},\cdot)} \right)\blue{(\hat{\omega})} \nonumber \\& \quad=\xi_{k_1k_2}(\tilde \omega, n,\check{\check{p}}^n(\tilde{\omega},{\hat{\omega}})) \varsigma_{k_1 k_2}^n[l_1](\tilde \omega, n, \check{\check{p}}^n(\tilde{\omega},{\hat{\omega}})). \label{eq:BreakUpCondProb2}
	\end{align}
	The extended-type function at the end of the period is $$\beta^n{({\omega})}=(\alpha^n{({\omega})},g^n{({\omega})}), \quad n \geq 1.$$ 
\end{definition}	
{We denote the extended type distribution at the end of period $n$ realized in ${ \omega} \in {\Omega}$ by ${\hat{p}}^n(\omega)=({\hat{p}}^n(\omega)[k,l])_{k \in S,  l \in S \cup J}$, where 
	\begin{equation}\label{eq:ExtendedTypeEndOfPeriod}
	\hat{{p}}^n(\omega)[k,l]:=\lambda(\lbrace i \in I: {\alpha}^n(i,\omega)=k, \ {g}^n(i,\omega)=l\rbrace).
	\end{equation}}

\begin{remark} \label{remark:Interpretatation}
We provide the intuition behind Definition \ref{def:indsys}. If we set
$$
P(\hat A) := P( \tilde \Omega \times \hat{A})
$$
for any  $\hat A \in \hat{\mathcal{F}}$, then by \eqref{eq:ProbhatPSetting} we have that
\begin{align}
P\left(\hat{A} \big|\bar{\mathcal{F}}\right)(\tilde{\omega}_1) := P\left(\hat A  \times \tilde \Omega \big| \bar{\mathcal{F}}\right)(\tilde{\omega}_1) &= \int_{\tilde \Omega} \hat P^{\tilde \omega}(\hat A) \tilde{P}\left(d\tilde \omega| \bar{\mathcal{F}}\right)(\tilde{\omega}_1) = \hat P^{\tilde \omega_1}(\hat A)\notag
\end{align}
 for every $ \tilde{\omega}_1 \in \tilde{\Omega}$.  Hence, for fixed $\tilde\omega \in \tilde\Omega$, the probabilities {$\hat P^{\tilde  \omega}(\cdot)$ which appear in Definition \ref{defi:DynamicalSystemDiscrete}} might be regarded as conditional probabilities on the product space $\Omega = \hat \Omega \times \tilde \Omega$ endowed with the $\sigma$-algebra $\bar{\mathcal{F}}=\lbrace \emptyset, \Omega \rbrace \otimes \tilde{\mathcal{F}}.$ 
\end{remark}

In the following definition we describe a dynamical system $\mathbb{D}$ which satisfies additional conditional independence assumptions.

\begin{definition}\label{def:indsys}
	A dynamical system $\mathbb{D}$ as in Definition \ref{defi:DynamicalSystemDiscrete} is \emph{Markov conditionally independent (MCI) given $\tilde{\omega} \in \tilde{\Omega}$} if for $\lambda$-almost every $i$ and $j$, for $\hat P^{\tilde{\omega}}$-almost every $\hat \omega \in \hat \Omega$, for every period $n\geq 1$, and for all $k_1,k_2 \in S$, $l_1,l_2 \in S \cup \lbrace J \rbrace$, the following properties hold:
	\begin{enumerate}
		\item Initial dependence: $\beta_i^0$ and $\beta_j^0$ are independent. 
		\item Markov and independent mutation, conditional to $\tilde \omega $:
		\begin{align}
			&\hat P^{\tilde \omega}\left(\bar{\beta}_i^n{(\tilde{\omega},\cdot)}=(k_1,l_1), \ \bar{\beta}_j^n{(\tilde{\omega},\cdot)}=(k_2,l_2) \Big\vert (\beta_{i}^t{(\tilde{\omega},\cdot)})_{t=0}^{n-1}, \ (\beta_j^t{(\tilde{\omega},\cdot)})_{t=0}^{n-1}\right){(\hat{\omega})}  \nonumber \\
			&\black{=} \hat P^{\tilde \omega}\left(\bar{\beta}_i^n{(\tilde{\omega},\cdot)}=(k_1,l_1) \Big\vert \beta_{i}^{n-1}{(\tilde{\omega},\cdot)}\right){(\hat{\omega})}\hat P^{\tilde \omega}\left(\bar{\beta}_j^n{(\tilde{\omega},\cdot)}=(k_2,l_2) \Big\vert \beta_{j}^{n-1}{(\tilde{\omega},\cdot)}\right){(\hat{\omega})}. \label{eq:MCIMutation}
		\end{align}
		\item Markov and independent random matching, conditional to $\tilde \omega $:
		\begin{align}
			&\hat P^{\tilde \omega}\left(\bar{\bar{\beta}}_i^n{(\tilde{\omega},\cdot)}=(k_1,l_1), \ \bar{\bar{\beta}}_j^n{(\tilde{\omega},\cdot)}=(k_2,l_2) \Big\vert \bar{\beta}_{i}^n{(\tilde{\omega},\cdot)}, \ \bar{\beta}_{j}^n{(\tilde{\omega},\cdot)}, \ (\beta_{i}^t{(\tilde{\omega},\cdot)})_{t=0}^{n-1}, \ (\beta_j^t{(\tilde{\omega},\cdot)})_{t=0}^{n-1}\right){(\hat{\omega})} \nonumber \\
			&\black{=} \hat P^{\tilde \omega}\left(\bar{\bar{\beta}}_i^n{(\tilde{\omega},\cdot)}=(k_1,l_1) \Big\vert \bar{\beta}_{i}^{n}{(\tilde{\omega},\cdot)}\right){(\hat{\omega})}\hat P^{\tilde \omega}\left(\bar{\bar{\beta}}_j^n{(\tilde{\omega},\cdot)}=(k_2,l_2) \Big\vert \bar{\beta}_{j}^{n}{(\tilde{\omega},\cdot)}\right){(\hat{\omega})}.\label{eq:MCIMatching}
		\end{align}
		\item Markov and independent matched-agent type changes with break-ups,  conditional to $\tilde \omega $:
			\begin{align}
			&\hat P^{\tilde \omega}\left(\beta_i^n{(\tilde{\omega},\cdot)}=(k_1,l_1), \beta_j^n{(\tilde{\omega},\cdot)}=(k_2,l_2) \Big\vert \bar{\bar{\beta}}_{i}^n{(\tilde{\omega},\cdot)}, \bar{\bar{\beta}}_{j}^n{(\tilde{\omega},\cdot)}, (\beta_{i}^t{(\tilde{\omega},\cdot)})_{t=0}^{n-1}, (\beta_j^t{(\tilde{\omega},\cdot)})_{t=0}^{n-1}\right){(\hat{\omega})} \nonumber\\
			&\black{=} \hat P^{\tilde \omega}\left(\beta_i^n{(\tilde{\omega},\cdot)}=(k_1,l_1) \Big\vert \bar{\bar{\beta}}_{i}^{n}{(\tilde{\omega},\cdot)}\right)(\hat{\omega})\hat P^{\tilde \omega}\left(\beta_j^n{(\tilde{\omega},\cdot)}=(k_2,l_2) \Big\vert \bar{\bar{\beta}}_{j}^{n}{(\tilde{\omega},\cdot)}\right){(\hat{\omega})}. \label{eq:MCIBreakUp}
		\end{align}
	\end{enumerate}
\end{definition}

We now prove the existence of a MCI random matching by using the same arguments as in \cite{RandomMatchingDiscrete}. The proof relies on the product structure of the space $\Omega$ in \eqref{eq:ProductSpace} and the Markov kernel $P$ in \eqref{eq:ProbhatPSetting}, as well as on concepts from nonstandard analysis. Note here that an object with an upper left star means the transfer of a standard object to the nonstandard universe. For a detailed overview of the necessary tools of nonstandard analysis, we refer to Appendix D.2. in \cite{RandomMatchingDiscrete}. \\
From now on, we work under the following assumption.
\begin{assumption} \label{assum:ExistenceHyperfinite}
	Let $(\tilde{\Omega}, \tilde{\mathcal{F}}, \tilde{P})$ be the probability space \black{in \eqref{eq:ProductSpace}}. We assume that there exists its corresponding hyperfinite internal probability space, which we denote from now on also by $(\tilde{\Omega}, \tilde{\mathcal{F}}, \tilde{P})$ by a slight notational abuse. 
\end{assumption}
\begin{remark}
	It is possible to construct such a space whose corresponding Loeb space can be transferred to a classical standard probability space, as it can be seen for example in \cite{RandomMatchingDiscrete}, where $(\tilde{\Omega}, \tilde{\mathcal{F}}, \tilde{P})$ is the space of trajectories of a multi-dimensional Markov process. 
\end{remark}
  
In the \black{outlined} setting we now prove the existence of a {rich} Fubini extension $(I \times \Omega, \cal{I} \boxtimes \mathcal{F}, \lambda \boxtimes P)$, on which a dynamical system $\mathbb{D}$ described in Definition \ref{defi:DynamicalSystemDiscrete} for such input probabilities is defined. 

\begin{remark}
Before we state the main results, we give an intuition how the product structure and the Markov kernel allows us to use the same arguments as in \cite{RandomMatchingDiscrete}. If we fix the state of the world $\tilde{\omega},$ the input functions $(\eta, \theta, \xi, \sigma, \varsigma)$ are deterministic and we are in the setting of \cite{RandomMatchingDiscrete}. Thus, by the results in \cite{RandomMatchingDiscrete} the existence of a space $\hat{\Omega}$ and $\hat{P}$ follows directly. Motivated by this, we fix $\tilde{\omega}\in \tilde{\Omega}$ and construct for each $\tilde{\omega} \in \tilde{\Omega}$ a measure $\hat{P}^{\tilde{\omega}}$. By using the definition of the Markov kernel, we can then define a measure $P$ on the product space.
\end{remark}

In a first step, we focus on the random matching step and show the existence of a suitable hyperfinite probability space and partial matching. This is a generalization of Lemma 7 in \cite{RandomMatchingDiscrete}.
\begin{proposition} \label{OnlyMatchingDiscrete}
Let $(I, \cal{I}_0, \lambda_0)$ be a hyperfinite counting probability space with Loeb space $(I, \cal{I}, \lambda)$. Then, there exists a hyperfinite internal set $\Omega$ with internal power set $\mathcal{F}_0$ such that for any initial internal type function $\alpha^0:I \to S,$ any initial internal partial matching $\pi^0:I \to I$ with
	\[g^0(i)=\begin{cases}
\alpha^0(\pi^0(i)) & \text{ if } \pi^0(i) \neq i\\
J& \text{ if } \pi^0(i) = i,
\end{cases}\]
 and any matching probability function $\theta_{kl}: \tilde{\Omega} \times \mathbb{N} \times \hat{\Delta} \to [0,1]$ and $\hat{p} \in \leftidx{^*}{\hat{\Delta}}$, there exist an internal random matching $\pi$ from $I \times \Omega$ to $I$ and an internal probability measure $P_0$ with the following properties.
\begin{enumerate}
	\item It holds
	$$\Omega:=\tilde{\Omega} \times \hat{\Omega} \quad \text{ and } \quad \mathcal{F}_0:= \tilde{\mathcal{F}} \times \mathcal{\hat{F}}_0,$$ where $\hat{\Omega}$ is a hyperfinite internal set, $\mathcal{\hat{F}}_0$ its internal power set and $(\tilde{\Omega}, \tilde{\mathcal{F}})$ is the \black{hyperfinite internal probability} space \black{which exists by Assumption \ref{assum:ExistenceHyperfinite}.}
	\item We have $P_0:=\tilde{P} \ltimes \hat P_0^{\hat{p}}$, where $\hat P_0^{\hat{p}}$ is a Markov kernel from $\tilde \Omega$ to $\hat \Omega$. From now on, we denote $\hat P_0^{\hat{p},\tilde{\omega}}:=\hat P_0^{\hat{p}}(\tilde \omega)$. 
	\item The internal random matching $\pi: I \times \Omega \to I$ is defined as 
	$$
	\pi(i, (\tilde{\omega}, \hat{\omega})):=\hat{\pi}(i,\hat{\omega}),
	$$ 
	where $\hat{\pi}: I \times \hat{\Omega} \to I $ is an internal random matching. We use the notation 
	\begin{equation} \label{eq:NotationPiFixed}
	\pi_{\hat{\omega}}(i):=\hat{\pi}(i,\hat{\omega})=\pi(i, (\tilde{\omega}, \hat{\omega}))
	\end{equation}
	 for any $\omega =(\tilde{\omega},\hat{\omega}) \in \Omega$. 
	\item Let $H=\lbrace i: \pi^0(i) \neq i \rbrace$. Then 
	$$
	{\hat P_0}^{\hat{p},\tilde{\omega}}\left(\lbrace {\hat{\omega}} \in {\hat{\Omega}}: \pi_{\hat{\omega}}(i)=\pi^0(i) \text{ for any } i \in H \rbrace\right)=1
	$$
	 for any $\hat{p} \in \leftidx{^*}{\hat{\Delta}}$ and $\tilde{\omega}\in\tilde{\Omega}$. 
	\item  The internal mapping from $I \times {\Omega}$ to $S \cup \lbrace J \rbrace$ is defined by the immersion 
	$$
	g(i,(\tilde{\omega}, \hat{\omega})):=\hat{g}(i,\hat{\omega}),
	$$
	 where $\hat{g}$ is the internal mapping from $I \times \blue{\hat{\Omega}}$ to $S \cup \lbrace J \rbrace$, given by
		\[\hat{g}(i,{\hat{\omega}})=\begin{cases}
	\alpha^0(\hat{\pi}(i, {\hat{\omega}})) & \text{ if } \hat{\pi}(i,{\hat{\omega}}) \neq i\\
		J& \text{ if } \hat{\pi}(i,{\hat{\omega}}) = i,
\end{cases}\]
for any $(i,{\hat{\omega}}) \in I \times {\hat{\Omega}}$. Then for any $k,l \in S$ and fixed $\hat{p} \in \leftidx{^*}{\hat{\Delta}}$ and $\tilde{\omega} \in \tilde{\Omega}$ we have $$ \hat{P}_0^{\hat{p},\tilde{\omega}}(\hat{g}_i=l) \simeq \theta_{kl}(\tilde{\omega}, 0,\hat{p})$$ for $\lambda$-almost every agent $i \in I$ satisfying $\alpha^0(i)=k$ and $\pi^0(i)=i$. 
\item For any $\hat{p} \in \leftidx{^*}{\hat{\Delta}}$ and $\tilde{\omega}\in\tilde{\Omega}$, denote the corresponding Loeb probability spaces of the internal probability space{s} $({\hat{\Omega}}, \mathcal{\hat{F}}_0,\hat{P}_0^{\hat{p},\tilde{\omega}})$ and $(I \times {\hat{\Omega}}, \cal{I}_0 \otimes \mathcal{\hat{F}}_0, \lambda_0 \otimes {\hat{P}}_0^{\hat{p},\tilde{\omega}})$ by $({\hat{\Omega}}, \mathcal{\hat{F}}, \blue{\hat{P}}^{\hat{p},\tilde{\omega}})$ and $(I \times \blue{\hat{\Omega}}, \cal{I} \boxtimes \mathcal{\hat{F}}, \lambda \boxtimes \blue{\hat{P}}^{\hat{p},\tilde{\omega}})$, respectively. Moreover, {denote} the corresponding Loeb probability spaces of the internal probability space{s} $(\Omega, \mathcal{F}_0,P_0)$ and $(I \times \Omega, \cal{I}_0 \otimes \mathcal{F}, \lambda_0 \otimes P_0)$ by $(\Omega, \mathcal{F},P)$ and $(I \times \Omega, \cal{I} \boxtimes \mathcal{F}, \lambda \boxtimes P)$, {respectively}. The mapping $\hat{g}$ is an essentially pairwise independent random variable from $(I \times \blue{\hat{\Omega}}, \cal{I} \boxtimes \mathcal{F}, \lambda \boxtimes \blue{\hat{P}}^{\hat{p},\tilde{\omega}})$ to $S \cup \lbrace J \rbrace$ for any $\hat{p} \in \leftidx{^*}{\hat{\Delta}}$ and $\tilde{\omega}\in\tilde{\Omega}$.
\end{enumerate}
\end{proposition}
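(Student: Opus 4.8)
The plan is to reduce to Lemma 7 of \cite{RandomMatchingDiscrete} by freezing the exogenous state $\tilde\omega$, and then to reassemble the resulting per-state constructions into a single Markov kernel. The crucial structural observation is that, once $\tilde\omega \in \tilde\Omega$ and $\hat p \in {}^{*}\hat\Delta$ are fixed, the quantities $\theta_{kl}(\tilde\omega, 0, \hat p)$ form a fixed array of internal matching probabilities satisfying the feasibility and symmetry conditions in \eqref{eq:ThetaCondition}; these are precisely the data required by the purely deterministic construction of \cite{RandomMatchingDiscrete}, so for each fixed $\tilde\omega$ that lemma applies verbatim.

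First I would fix, once and for all and independently of $\tilde\omega$ and $\hat p$, the hyperfinite randomization space $\hat\Omega$ with its internal power set $\hat{\mathcal{F}}_0$, together with the internal random matching $\hat\pi \colon I\times\hat\Omega \to I$ and the induced partner-type map $\hat g$. The key point is that in the construction of \cite{RandomMatchingDiscrete} the sample space and the map $\hat\pi$ encode only the combinatorial outcome of the matching and carry no information about the probabilities $\theta$: the parameters enter solely through the internal probability measure placed on $\hat\Omega$. Concretely, $\hat\Omega$ may be taken as a hyperfinite space of admissible matchings extending the initial matching $\pi^0$ on $H=\{i:\pi^0(i)\neq i\}$, with $\hat\pi(i,\hat\omega)$ the value of that matching at $i$. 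This immediately yields the product structure in (1), the factorization $\pi(i,(\tilde\omega,\hat\omega))=\hat\pi(i,\hat\omega)$ in (3), the immersion defining $g$ in (5), and the stay-matched property (4), all uniformly in $\tilde\omega$.

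Next, for each fixed $\tilde\omega$ I would apply Lemma 7 of \cite{RandomMatchingDiscrete} with the array $\theta_{kl}(\tilde\omega, 0, \hat p)$ to obtain the internal probability measure $\hat P_0^{\hat p, \tilde\omega}$ on $(\hat\Omega, \hat{\mathcal{F}}_0)$. That lemma guarantees, for this fixed $\tilde\omega$, both the approximate marginals $\hat P_0^{\hat p, \tilde\omega}(\hat g_i = l) \simeq \theta_{kl}(\tilde\omega, 0, \hat p)$ for $\lambda$-almost every agent of type $k$, which is (5), and, after passing to the associated Loeb spaces, the essential pairwise independence of $\hat g$, which is (6). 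To obtain (2) I then form the semidirect product $P_0 = \tilde P \ltimes \hat P_0^{\hat p}$ as in \eqref{eq:ProbhatPSetting}; for this to be well defined I must check that $\hat P_0^{\hat p}$ is a genuine Markov kernel, i.e. that $\tilde\omega \mapsto \hat P_0^{\hat p, \tilde\omega}(\hat A)$ is measurable for every internal $\hat A \in \hat{\mathcal{F}}_0$. This holds because $\hat P_0^{\hat p, \tilde\omega}(\hat A)$ is given by an internal formula in the entries $\theta_{kl}(\tilde\omega, 0, \hat p)$, and these depend measurably on $\tilde\omega$ by the standing measurability assumption on $\theta$, so measurability of the measure is inherited from measurability of the parameters.

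The main obstacle I anticipate is precisely this last point: decoupling the sample space and matching map, which must be common to all $\tilde\omega$, from the probabilities, which must vary measurably with $\tilde\omega$, so that the per-state outputs of \cite{RandomMatchingDiscrete} glue into a single Markov kernel rather than an unstructured family of measures. The deterministic lemma supplies all per-$\tilde\omega$ conclusions directly, so the genuinely new work is verifying this uniformity and measurability in $\tilde\omega$, taking care that the combination of standard measurability in $\tilde\omega$ with the internal/nonstandard objects on $\hat\Omega$ is consistent; this is exactly where Assumption \ref{assum:ExistenceHyperfinite}, which places $\tilde\Omega$ itself in the hyperfinite internal world, is used.
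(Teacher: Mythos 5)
Your proposal is correct and follows essentially the same route as the paper: the paper's proof likewise fixes $\tilde\omega$, observes that the sample space $\hat\Omega$ in Lemma 7 of \cite{RandomMatchingDiscrete} is independent of the input functions (so only the measure $\hat P_0^{\hat p,\tilde\omega}$ varies with $\tilde\omega$ and $\hat p$), and then assembles the family of per-state measures into the Markov kernel and the semidirect product $P_0=\tilde P \ltimes \hat P_0^{\hat p}$. The only difference is presentational: the paper reproduces the four-step construction of Lemma 7 (product measure on $(S\cup\{J\})^I$, cardinality correction, uniform measure on admissible matchings, and the exact-law-of-large-numbers verification of Points 5 and 6) in full detail, whereas you invoke it as a black box, which is exactly how the paper itself summarizes its argument before giving the details.
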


\begin{proof}
See Appendix \ref{sec:AppendixProof}.
\end{proof}

We are now ready to give the following theorem, which is the main result of the section.
\begin{theorem} \label{theorem:ConstructionSpaceExtensionDiscrete}
Let Assumption \ref{assum:ExistenceHyperfinite} hold and $(\eta_{kl},\theta_{kl}, \xi_{kl}, \sigma_{kl}[r,s], \varsigma_{kl}[r])_{k,l,r,s \in S \times S \times S \times S }$ be \black{some} input functions. Then for any extended type distribution $\ddot{p} \in \hat{\Delta}$ and any deterministic initial condition $\Pi^0=(\alpha^0, \pi^0)$ there exists a {rich} Fubini extension $(I \times \Omega, \cal{I} \boxtimes \mathcal{F}, \lambda \boxtimes P)$ on which a discrete dynamical system $\mathbb{D}=\left(\Pi^{n}\right)_{n=\black{1}}^{\infty}$ as in Definition \ref{defi:DynamicalSystemDiscrete} can be constructed with {discrete time input processes $(\eta^n,\theta^n,\xi^n, \sigma^n, \varsigma^n)_{n\geq 1}$ coming from $(\eta_{kl},\theta_{kl}, \xi_{kl}, \sigma_{kl}[r,s], \varsigma_{kl}[r])_{k,l,r,s \in S \times S \times S \times S }$}. In particular, 
$$  \Omega = \tilde{\Omega} \times  \hat{\Omega}, \quad \mathcal{F} = \tilde{\mathcal{F}} \otimes  \hat{\mathcal{F}},\quad  P = \tilde P \ltimes \hat P,$$
where $(\hat \Omega, \hat{\mathcal{F}})$ is a measurable space and $\hat P$ a Markov kernel from $\tilde \Omega$ to $\hat \Omega$. The dynamical system $\mathbb{D}$ is also MCI according to Definition \ref{def:indsys} and with initial cross-sectional extended type distribution {$\hat{p}^0$ equal to $\ddot{p}^0$ with probability one.} 
\end{theorem}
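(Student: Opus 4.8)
The plan is to build the dynamical system fiberwise over $\tilde\Omega$ and then to glue the fibers together through the semidirect product $\tilde P \ltimes \hat P$. For each fixed $\tilde\omega \in \tilde\Omega$ the input functions $\eta_{kl}(\tilde\omega,\cdot,\cdot),\theta_{kl}(\tilde\omega,\cdot,\cdot),\xi_{kl}(\tilde\omega,\cdot,\cdot),\sigma_{kl}[r,s](\tilde\omega,\cdot,\cdot),\varsigma_{kl}[r](\tilde\omega,\cdot,\cdot)$ become deterministic functions of $(n,\hat p)$ only, so, as noted in the remark preceding the statement, we land exactly in the setting of \cite{RandomMatchingDiscrete}. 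The strategy is therefore to run the hyperfinite construction of \cite{RandomMatchingDiscrete} simultaneously for every $\tilde\omega$ on a common hyperfinite carrier $\hat\Omega$ — the carrier may be taken independent of $\tilde\omega$, since only the internal transition weights, not the underlying atoms, depend on $\tilde\omega$ — and to record the resulting family of internal measures as a single Markov kernel $\tilde\omega \mapsto \hat P_0^{\hat p}(\tilde\omega,\cdot)$. Taking Loeb transforms fiberwise and setting $P := \tilde P \ltimes \hat P$ via \eqref{eq:ProbhatPSetting} then produces the product space $\Omega = \tilde\Omega\times\hat\Omega$ with $\mathcal F = \tilde{\mathcal F}\otimes\hat{\mathcal F}$ claimed in the statement.

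The construction proceeds by induction on the period $n$. Given $\Pi^{n-1}=(\alpha^{n-1},\pi^{n-1},g^{n-1})$ together with the realized end-of-period distribution $\hat p^{n-1}$, I would realize the three substeps of Definition \ref{defi:DynamicalSystemDiscrete} in turn. For the mutation substep I attach to each agent an independent internal randomization that reassigns its type according to $\eta$ and carries the inherited partner type, giving $\bar\alpha^n,\bar g^n$ satisfying \eqref{eq:IndiMutation1}--\eqref{eq:IndiMutation2}; essential pairwise independence and the exact law of large numbers (ELLN) then fix the post-mutation distribution $\check p^n$ of \eqref{eq:PostMutationExtendedType}. The matching substep is the heart of the argument and is handled by the generalization Proposition \ref{OnlyMatchingDiscrete}: applied with the mutated type function $\bar\alpha^n$ and the inherited matching $\pi^{n-1}$ as initial data and with $\check p^n$ in place of the initial distribution, it yields the internal matching $\bar\pi^n$ and partner-type function $\bar{\bar g}^n$ realizing \eqref{eq:MatchingCondProb1}--\eqref{eq:MatchingCondProb2}, together with the kernel property and the essential pairwise independence of $\bar{\bar g}^n$. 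The break-up substep is again a fiberwise independent randomization implementing \eqref{eq:BreakUpCondProb1}--\eqref{eq:BreakUpCondProb2} through $\xi,\sigma,\varsigma$, producing $\alpha^n,\pi^n,g^n$ and, by ELLN, the end-of-period distribution $\hat p^n$ of \eqref{eq:ExtendedTypeEndOfPeriod}.

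At each substep the conditional independence identities of Definition \ref{def:indsys} are read off from the fact that, conditionally on $\tilde\omega$, the attached randomizations are taken essentially pairwise independent across agents and Markovian in the period; transferring the nonstandard estimates and passing to the Loeb hull turns the internal approximate equalities $\simeq$ into the exact identities \eqref{eq:MCIMutation}--\eqref{eq:MCIBreakUp}. The Fubini property and the richness of $(I\times\Omega,\mathcal I\boxtimes\mathcal F,\lambda\boxtimes P)$ are inherited from the fiberwise Loeb-product construction of \cite{RandomMatchingDiscrete}, using that $\lambda\boxtimes P$ restricts to the fiberwise Loeb products through \eqref{eq:ProbhatPSetting}. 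Finally, since $\Pi^0$ is deterministic, the cross-sectional distribution $\hat p^0$ is the deterministic counting distribution of $(\alpha^0,g^0)$, which we identify with the prescribed $\ddot p^0$; as all fibers share the same deterministic initialization, this holds for $\tilde P$-a.e.\ $\tilde\omega$ and hence $P$-almost surely.

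The main obstacle is the consistency between the two roles played by $\hat p^n$: the input functions are evaluated at $\hat p^n(\tilde\omega,\hat\omega)$, yet $\hat p^n$ is itself the cross-sectional distribution produced by the very randomizations those functions drive. Resolving this self-reference is exactly what ELLN buys us — fiberwise, that is $\hat P^{\tilde\omega}$-almost surely, the realized cross-sectional distribution is a deterministic element of $\hat\Delta$ and coincides with the argument fed into $\eta,\theta,\xi,\sigma,\varsigma$ — but one must check that this fiberwise identification assembles measurably in $\tilde\omega$, i.e.\ that $\tilde\omega\mapsto\hat P^{\tilde\omega}(\hat A)$ is $\tilde{\mathcal F}$-measurable for every $\hat A\in\hat{\mathcal F}$, so that $\hat P$ is a genuine Markov kernel in the sense of Definition \ref{def:MarkovKernel}. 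This measurability, absent from the purely deterministic construction of \cite{RandomMatchingDiscrete}, is the principal new difficulty, and it is precisely where the $(\mathcal F,\mathcal B([0,1]))$-measurability hypotheses imposed on the input processes in \eqref{eq:MutationProcessEta1}--\eqref{eq:MatchingAfterBreakUp1} enter.
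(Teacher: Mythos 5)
Your proposal is correct and takes essentially the same route as the paper: the paper's own strategy (sketched in the remark preceding Proposition \ref{OnlyMatchingDiscrete} and carried out in detail in the cited supplement) is exactly the fiberwise reduction to the deterministic setting of \cite{RandomMatchingDiscrete} for fixed $\tilde\omega$, a common hyperfinite carrier $\hat\Omega$ whose internal transition weights alone depend on $\tilde\omega$, assembly of the fiber measures into a Markov kernel $\hat P$ with $P=\tilde P\ltimes\hat P$, and an induction over periods in which Proposition \ref{OnlyMatchingDiscrete} handles the matching substep and the exact law of large numbers pins down the cross-sectional distributions. Your flagging of the measurability of $\tilde\omega\mapsto\hat P^{\tilde\omega}(\hat A)$ as the genuinely new point is likewise consistent with the paper, which secures it through Assumption \ref{assum:ExistenceHyperfinite} (the hyperfinite internal structure of $\tilde\Omega$) and the explicit internal construction of the kernel as in \eqref{eq:MarkovKernelProposition}.
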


\begin{proof}
See Section 2 in \cite{biagini_mazzon_oberpriller_supplement}.
\end{proof}

\black{We now} state some properties of the dynamical system $\mathbb{D}$ with input processes, which is a generalization of the results in Appendix C in \cite{RandomMatchingDiscrete}. In particular, given $\tilde{\omega} \in \tilde{\Omega}$ the following result allows to recursively calculate $\hat{p}^n, \check{p}^n, \check{\check{p}}^n$ for every $n \geq 1$, which will be useful for applications. 

 For each time $n \geq 1$ we define a map $\Gamma^n:{\tilde{\Omega}} \times \hat{\Delta} \to \hat{\Delta} $ as follows 
\begin{align}
	\Gamma^n_{kl}({\tilde{\omega}},\hat{p}) &=\sum_{k_1, l_1 \in S} \left(1-\xi_{k_1 l_1}\left({\tilde{\omega}},n,\tilde{\tilde{p}}^n\right)\right) \sigma_{k_1 l_1}[k,l]\left({\tilde{\omega}},n,\tilde{\tilde{p}}^n\right) \tilde{p}^n_{k_1l_1} \nonumber  \\
	&\quad+\sum_{k_1, l_1 \in S} \left(1-\xi_{k_1 l_1}\left({\tilde{\omega}},n,\tilde{\tilde{p}}^n\right)\right) \sigma_{k_1 l_1}[k,l]\left({\tilde{\omega}},n,\tilde{\tilde{p}}^n\right) {\theta}_{k_1l_1}\left({\tilde{\omega}},n,\tilde{p}^n\right)\tilde{p}^n_{k_1J}, \label{eq:DefinitionGamma}
\end{align}
and 
{
\begin{align}
\Gamma^n_{kJ}({\tilde{\omega}},\hat{p})&=  b_{k}\left(\tilde{\omega},n,\tilde{p}^n\right)\tilde{p}_{kJ}^n + \sum_{k_1,l_1 \in S} \xi_{k_1l_1}\left(\tilde{\omega},n,\tilde{\tilde{p}}^n\right) \varsigma_{k_1l_1}[k]\left(\tilde{\omega}, \tilde{\tilde{p}}^n\right)\tilde{p}^n_{k_1l_1} \nonumber \\
&\quad +\sum_{k_1, l_1 \in S} \xi_{k_1 l_1  }\left(\tilde{\omega},n, \tilde{\tilde{p}}^n\right) \varsigma_{k_1l_1}[k]\left(\tilde{\omega},n,\tilde{\tilde{p}}^n\right) \theta_{k_1 l_1}\left(\tilde{\omega},n,\tilde{p}^n\right)\tilde{p}_{k_1J}^n \label{eq:x}
\end{align}}
with
\begin{align*}
	\tilde{p}_{kl}^n&= \sum_{k_1,l_1 \in S } \eta_{k_1 k}\left({\tilde{\omega}},n,\hat{p}\right) \eta_{l_1 l}\left({\tilde{\omega}},n,\hat{p}\right)\hat{p}_{k_1l_1} \\
	\tilde{p}_{kJ}^n&=\sum_{l \in S} \hat{p}_{lJ} \eta_{lk}\left({\tilde{\omega}},n,\hat{p}\right),
\end{align*}
and 
\begin{align*}
	\tilde{\tilde{p}}_{kl}^n&=\tilde{p}^n_{kl}+\theta_{kl}\left({\tilde{\omega}},n,\tilde{p}^n\right)\tilde{p}^n_{kJ} \\
	\tilde{\tilde{p}}_{kJ}^n&=b_k\left({\tilde{\omega}},n,\tilde{p}^n\right) \tilde{p}^n_{kJ}.
\end{align*}
\begin{theorem} \label{theorem:PropertiesDiscrete}
	Assume that the discrete dynamical system $\mathbb{D}$ introduced in Definition \ref{defi:DynamicalSystemDiscrete} is Markov conditionally independent given $\tilde{\omega} \in \tilde{\Omega}$ according to Definition \ref{def:indsys}. Given $\tilde{\omega} \in \tilde{\Omega}$, the following holds:
\begin{enumerate}
	\item For each $n \geq 1$, $\mathbb{E}^{\hat{P}^{\tilde{\omega}}}[\hat{p}^n]= \Gamma^n ({\tilde{\omega}},\mathbb{E}^{\hat{P}^{\tilde{\omega}}}[\hat{p}^{n-1}])$.
	\item For each $n \geq 1, $
		\begin{align*}
		\mathbb{E}^{\hat{P}^{\tilde{\omega}}}[\check{p}^n_{kl} ]=\sum_{k_1, l_1 \in S}  \eta_{k_1,k}\left({\tilde{\omega}},n,\mathbb{E}^{\hat{P}^{\tilde{\omega}}}[\hat{p}^{n-1}]\right)\eta_{l_1,l}\left({\tilde{\omega}},n,\mathbb{E}^{\hat{P}^{\tilde{\omega}}}[\hat{p}^{n-1}]\right) \mathbb{E}^{\hat{P}^{\tilde{\omega}}}[\hat{p}^{n-1}_{k_1,l_1}]
	\end{align*}
	and
	\begin{align} \nonumber
		\mathbb{E}^{\hat{P}^{\tilde{\omega}}}[\check{p}_{kJ}^n]=\sum_{k_1 \in S} \eta_{k_1,k}\left({\tilde{\omega}},n,\mathbb{E}^{\hat{P}^{\tilde{\omega}}}[\hat{p}^{n-1}]\right)  \mathbb{E}^{\hat{P}^{\tilde{\omega}}}[\hat{p}^{n-1}_{k_1,J}].
	\end{align}
	\item For each $n \geq 1$, 	
	\begin{align} \nonumber
		\mathbb{E}^{\hat{P}^{\tilde{\omega}}}[\check{\check{p}}^n_{kl}] =\mathbb{E}^{\hat{P}^{\tilde{\omega}}}[\check{p}^n_{kl}]+ {\theta}_{kl}\left({\tilde{\omega}},n,\mathbb{E}^{\hat{P}^{\tilde{\omega}}}[\check{p}^n ]\right)\mathbb{E}^{\hat{P}^{\tilde{\omega}}}[\check{p}^n_{kJ}]
	\end{align}
	and
	\begin{align} \nonumber
	\mathbb{E}^{\hat{P}^{\tilde{\omega}}}[\check{\check{p}}^n_{kJ} ] 
		={b}_{k}\left({\tilde{\omega}},n,\mathbb{E}^{\hat{P}^{\tilde{\omega}}}[\check{p}^n ]\right) \mathbb{E}^{\hat{P}^{\tilde{\omega}}}[\check{p}^n_{kJ}].
	\end{align}
\item For $\lambda$-almost every agent $i$, the extended-type process $\lbrace \beta_i^n \rbrace_{n=0}^{\infty}$ is a Markov chain in $\hat{S}$ on $(I \times \hat{\Omega}, \cal{I} \boxtimes \hat{\mathcal{F}}, \lambda \boxtimes \hat{P}^{\tilde{\omega}}),$ whose transition matrix $z^n$ at time $n-1$ is given by
\begin{align}
	z^n_{(k'J)(kl)}(\tilde \omega)&= \sum_{k_1, l_1,k' \in S}\left(1-\xi_{k_1 l_1}({\tilde{\omega}},n,\tilde{\tilde{p}}^{\tilde{\omega},n})\right) \sigma_{k_1 l_1}[k,l]\left({\tilde{\omega}},n,\tilde{\tilde{p}}^{\tilde{\omega},n}\right) {\theta}_{k_1l_1}\left({\tilde{\omega}},n,\tilde{p}^{\tilde{\omega},n}\right) \nonumber \\
	 &\quad \cdot {\eta}_{k' k_1}\left({\tilde{\omega}},n,\mathbb{E}^{\hat{P}^{\tilde{\omega}}}[\hat{p}^{n-1}]\right) \label{eq:TransitionMatrix1}\\ 
	z^n_{(k'l')(kl)}(\tilde \omega)&=\sum_{k_1, l_1,k',l' \in S} \left(1-\xi_{k_1 l_1}\left({\tilde{\omega}},n,\tilde{\tilde{p}}^{\tilde{\omega},n}\right)\right) \sigma_{k_1 l_1}[k,l]\left({\tilde{\omega}},n,\tilde{\tilde{p}}^{\tilde{\omega},n}\right)\eta_{k' k_1}\left({\tilde{\omega}},n,\mathbb{E}^{\hat{P}^{\tilde{\omega}}}[\hat{p}^{n-1}]\right) \nonumber \\
	&\quad \cdot \eta_{l' l_1}\left({\tilde{\omega}},n,\mathbb{E}^{\hat{P}^{\tilde{\omega}}}[\hat{p}^{n-1}]\right)  \label{eq:TransitionMatrix2} \\
	z^n_{(k'l')(kJ)}(\tilde \omega)&=\sum_{k_1,l_1 \in S} \xi_{k_1 l_1}\left({\tilde{\omega}},n,\tilde{\tilde{p}}^{\tilde{\omega},n}\right) \varsigma_{k_1 l_1}[k]\left({\tilde{\omega}},n,\tilde{\tilde{p}}^{\tilde{\omega},n}\right) \nonumber  \\
	& \quad \quad \cdot \eta_{k'k_1}\left({\tilde{\omega}},n,\mathbb{E}^{\hat{P}^{\tilde{\omega}}}[\hat{p}^{n-1}]\right)\eta_{l'l_1}\left({\tilde{\omega}},n,\mathbb{E}^{\hat{P}^{\tilde{\omega}}}[\hat{p}^{n-1}]\right)  \label{eq:TransitionMatrix3}\\
	z_{(k'J)(kJ)}^n(\tilde \omega)&={b}_{k}\left({\tilde{\omega}},n,\tilde{p}^{\tilde{\omega},n}\right) {\eta}_{k'k}\left({\tilde{\omega}},n,\mathbb{E}^{\hat{P}^{\tilde{\omega}}}[\hat{p}^{n-1}]\right)\nonumber\\
	 & \quad +\sum_{k_1,l_1\in S} \xi_{k_1 l_1}\left({\tilde{\omega}},n,\tilde{\tilde{p}}^{\tilde{\omega},n}\right) \varsigma_{k_1 l_1}[k]\left({\tilde{\omega}},n,\tilde{\tilde{p}}^{\tilde{\omega},n}\right)\eta_{k_1l_1}\left({\tilde{\omega}},n,\tilde{p}^{\tilde{\omega},n}\right)\nonumber \\
	 & \quad \quad  \cdot \eta_{k'k_1}\left({\tilde{\omega}},n,\mathbb{E}^{\hat{P}^{\tilde{\omega}}}[\hat{p}^{n-1}]\right).  \label{eq:TransitionMatrix4}
\end{align}
\item For $\lambda$-almost every $i$ and every $\lambda$-almost every $j$, the Markov chains $\lbrace \beta_i^n \rbrace_{n=0}^{\infty}$ and $\lbrace \beta_j^n \rbrace_{n=0}^{\infty}$ are independent on $( \hat{\Omega},  \hat{\mathcal{F}}, \hat{P}^{\tilde{\omega}})$.
\item For $\hat{P}^{\tilde{\omega}}$-almost every $\hat{\omega} \in \hat{\Omega}$, the cross sectional extended type process $\lbrace \beta^n_{\hat{\omega}} \rbrace_{n=0}^{\infty}$ is a Markov chain on $(I, \cal{I}, \lambda)$ with transition matrix $z^n$ at time $n-1$, which is defined in \eqref{eq:TransitionMatrix1}- \eqref{eq:TransitionMatrix4}.
\item We have $\hat{P}^{\tilde{\omega}}$-a.s. that
\begin{align*}
	\mathbb{E}^{\hat{P}^{\tilde{\omega}}}[\check{p}^n_{kl}]=\check{p}^n_{kl}, \quad
	\mathbb{E}^{\hat{P}^{\tilde{\omega}}}[\check{\check{p}}^n_{kl}]=\check{\check{p}}^n_{kl} \quad \text{ and } \quad
	\mathbb{E}^{\hat{P}^{\tilde{\omega}}}[\hat{p}^n_{kl}]&=\hat{p}^n_{kl}.
\end{align*}
\end{enumerate}
\end{theorem}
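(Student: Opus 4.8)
The plan is to fix $\tilde{\omega} \in \tilde{\Omega}$ once and for all and to reduce every assertion to the corresponding statement in Appendix C of \cite{RandomMatchingDiscrete}. The point, as observed in the Remark following the introduction of the input processes, is that for fixed $\tilde{\omega}$ the functions $(\eta,\theta,\xi,\sigma,\varsigma)$ become deterministic functions of $(n,\hat{p})$, that $\hat{P}^{\tilde{\omega}}$ is a genuine probability measure on $(\hat{\Omega},\hat{\mathcal{F}})$, and that, by Proposition \ref{OnlyMatchingDiscrete} and Theorem \ref{theorem:ConstructionSpaceExtensionDiscrete}, the restricted system lives on the rich Fubini extension $(I\times\hat{\Omega},\mathcal{I}\boxtimes\hat{\mathcal{F}},\lambda\boxtimes\hat{P}^{\tilde{\omega}})$. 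The hypotheses \eqref{eq:MCIMutation}--\eqref{eq:MCIBreakUp} of Definition \ref{def:indsys} then reproduce verbatim the Markov conditional independence assumptions used in \cite{RandomMatchingDiscrete}, so that all of their combinatorial and measure-theoretic arguments apply with $\hat{P}^{\tilde{\omega}}$ in place of $P$. The whole theorem is thus a $\tilde{\omega}$-by-$\tilde{\omega}$ transfer; the only thing to check is that each tool invoked there is available in the present setting.

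The core is a single induction on $n$ that simultaneously yields items 1, 2, 3 and 7. The induction hypothesis is that $\hat{p}^{n-1}$ is $\hat{P}^{\tilde{\omega}}$-a.s.\ constant and equal to $\mathbb{E}^{\hat{P}^{\tilde{\omega}}}[\hat{p}^{n-1}]$, the base case being that $\hat{p}^0$ is deterministic by assumption. Within period $n$ I would process the three sub-steps in order. For the mutation step, the individual conditional laws \eqref{eq:IndiMutation1}--\eqref{eq:IndiMutation2} depend only on $\hat{p}^{n-1}$, which is now deterministic; essential pairwise independence of the post-mutation types (cf.\ the corresponding property in Proposition \ref{OnlyMatchingDiscrete} and Theorem \ref{theorem:ConstructionSpaceExtensionDiscrete}) together with the exact law of large numbers on the Fubini extension forces the cross-sectional distribution $\check{p}^n$ in \eqref{eq:PostMutationExtendedType} to be a.s.\ constant and equal to the expression in item 2. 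Feeding this deterministic $\check{p}^n$ into the matching probabilities $\theta_{kl}(\tilde{\omega},n,\check{p}^n)$ and again invoking essential pairwise independence and the exact law of large numbers gives $\check{\check{p}}^n$ deterministic and equal to the formula in item 3; the identical argument applied to the break-up and type-change probabilities \eqref{eq:BreakUpCondProb0.1}--\eqref{eq:BreakUpCondProb2} produces $\hat{p}^n$ deterministic and equal to $\Gamma^n(\tilde{\omega},\hat{p}^{n-1})$, which is exactly item 1 and closes the induction. Item 7 is precisely the a.s.\ constancy established at each sub-step.

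For items 4 and 5 I would fix $\lambda$-almost every agent $i$ (and a pair $i,j$) and compose the three one-step conditional kernels of a single period. Conditioning an agent of extended type $(k',l')$ or $(k',J)$ at the end of period $n-1$, the mutation kernel \eqref{eq:IndiMutation1}--\eqref{eq:IndiMutation2}, the matching kernel \eqref{eq:MatchingCondProb1}--\eqref{eq:MatchingCondProb2}, and the break-up kernel \eqref{eq:BreakUpCondProb0.1}--\eqref{eq:BreakUpCondProb2} are chained by summing over the intermediate types $k_1,l_1$; because the cross-sectional distributions entering their arguments are deterministic by the induction above, the resulting one-step law depends only on the current extended type, which is the Markov property, and carrying out the summation reproduces exactly the entries \eqref{eq:TransitionMatrix1}--\eqref{eq:TransitionMatrix4}. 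The conditional independence \eqref{eq:MCIMutation}--\eqref{eq:MCIBreakUp} across $i$ and $j$ at each sub-step propagates to independence of the chains $\{\beta_i^n\}$ and $\{\beta_j^n\}$, giving item 5. Finally, item 6 follows by transporting items 4, 5 and 7 through the Fubini property: essential pairwise independence of the agent chains makes the cross-sectional process $\{\beta^n_{\hat{\omega}}\}$ a Markov chain on $(I,\mathcal{I},\lambda)$ with the same transition matrix $z^n$ for $\hat{P}^{\tilde{\omega}}$-almost every $\hat{\omega}$, once more by the exact law of large numbers.

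The main obstacle is the apparent circularity between the Markov structure and the exact law of large numbers: the matching and break-up transition probabilities depend on the random cross-sectional distributions $\check{p}^n$ and $\check{\check{p}}^n$, so the transition matrix $z^n$ is deterministic only once these distributions are known to concentrate, while concentration is in turn what lets one treat each sub-step as an independent experiment. This is resolved precisely by the layered induction above, in which a.s.\ constancy is established sub-step by sub-step before the next sub-step's probabilities are read off. The delicate technical point on which everything rests is the validity of the exact law of large numbers at each sub-step, i.e.\ that the relevant families of indicator variables are essentially pairwise independent and jointly measurable on the rich Fubini extension; this is exactly what the nonstandard construction underlying Proposition \ref{OnlyMatchingDiscrete} and Theorem \ref{theorem:ConstructionSpaceExtensionDiscrete} guarantees, and checking that it transfers to the fixed-$\tilde{\omega}$ measure $\hat{P}^{\tilde{\omega}}$ is the one genuinely new ingredient beyond \cite{RandomMatchingDiscrete}.
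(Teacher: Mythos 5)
Your proposal is correct and follows essentially the same route as the paper: the paper's proof (deferred to its supplement) likewise fixes $\tilde{\omega}\in\tilde{\Omega}$, observes that the input functions become deterministic so that the system on $(I\times\hat{\Omega},\mathcal{I}\boxtimes\hat{\mathcal{F}},\lambda\boxtimes\hat{P}^{\tilde{\omega}})$ is exactly a Markov conditionally independent system in the sense of \cite{RandomMatchingDiscrete}, and then runs the Appendix-C arguments of that reference (sub-step-by-sub-step induction via the exact law of large numbers, composition of the three conditional kernels to obtain $z^n$) with $\hat{P}^{\tilde{\omega}}$ in place of $P$. Your identification of the key point---that the a.s.\ constancy of $\check{p}^n$, $\check{\check{p}}^n$, $\hat{p}^n$ must be established before the next sub-step's transition probabilities can be treated as deterministic, with the ELLN on the rich Fubini extension justified by the nonstandard construction---is precisely the content of the paper's argument.
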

\begin{proof}
See Section 3 in \cite{biagini_mazzon_oberpriller_supplement}. 
\end{proof}
\begin{remark}
By using Remark \ref{remark:Interpretatation}, the results in Theorem \ref{theorem:PropertiesDiscrete} also hold on the product space $\Omega = \hat{\Omega} \times \tilde{\Omega}$ by conditioning on ${\bar{\mathcal{F}}}$ with respect to the probability measure $P$.
\end{remark}

\section{Application: A dynamic directed random matching model for the evolution of asset price bubbles} \label{sec:tradingvolume}
We now use the random matching mechanism described in Section \ref{sec:main} to describe interactions among investors in the setting of Section \ref{sec:FormationOfABubble}.  
We consider an atomless probability space $(I,\cal{I},\lambda)$ representing the space of investors. We introduce the space of investors' types $S=\{1,2,3\}$, where investors of type $1$, $2$, $3$ are respectively optimistic, neutral and pessimistic, and the space $\hat{\Delta} =(\green{p_{ij}})_{i=1,2,3, j=1,2,3,J}$ of \black{processes with values in the space of matrices with $3$ rows and $4$ columns\footnote{\black{This space can be endowed with a topology on the space of matrices.}}} representing the extended type distributions. We denote with $\green{p_{ij}^{{n}}}$ the fraction of investors of type $i=1,2,3$ {at time $\green{t_n}$} matched with a partner of type $j=1,2,3$, and with \green{$p_{iJ}^{{n}}$} the fraction of unmatched agents of type $i$ \green{at time $t_n$}. 
For any $k=0,\dots,N$, we then have 
\begin{equation}\label{eq:pi3}
\green{p^k_i = \sum_{j=1}^3 p^{k}_{ij}+p_{iJ}^k.}
\end{equation}
Let $(\tilde{\Omega}, \tilde{\mathcal{F}}, \tilde{\mathbb{F}}=(\tilde{\mathcal{F}}^{\green{i}})_{i=0,...,N}, \tilde{P})$ be a filtered probability space from Section \ref{sec:tradingimpact} on which the stochastic processes $(F^{\green{i}})_{i=0,...N}$, $(M^{\green{i}})_{i=0,...N}$, $(\Lambda^{\green{i}})_{i=0,...N}$, $(\kappa^{\green{i}})_{i=0,...N}, (\Theta^{\green{i}})_{i = 0, \dots, N}$ are  defined. Then, Theorem \ref{theorem:ConstructionSpaceExtensionDiscrete} provides the existence of a rich Fubini extension $(I \times \Omega, \cal{I} \boxtimes \mathcal{F}, \lambda \boxtimes P)$ for a discrete dynamical system $\mathbb{D}=\left(\Pi^{n}\right)_{n=\black{1}}^{\infty}$ as in Definition \ref{defi:DynamicalSystemDiscrete} with 
$$  \Omega = \tilde{\Omega} \times  \hat{\Omega}, \quad \mathcal{F} = \tilde{\mathcal{F}} \otimes  \hat{\mathcal{F}},\quad  P = \tilde P \ltimes \hat P,$$
where $(\hat \Omega, \hat{\mathcal{F}})$ is a measurable space on which the stochastic process \green{$\hat{p}^{i}=(\hat p^k_{i})_{k = 0, \dots, N}, i=1,2,3,$} can be constructed as in the proof of Theorem \ref{theorem:ConstructionSpaceExtensionDiscrete}, and $\hat P$ a Markov kernel from $\tilde \Omega$ to $\hat \Omega$ according to Definition \ref{def:MarkovKernel}.

By Theorem \ref{theorem:PropertiesDiscrete}, the dynamics of the fraction of optimistic, neutral and pessimistic agents are identified by the conditional transition probabilities $(\eta_{kl},\theta_{kl}, \xi_{kl}, \sigma_{kl}[r,s], \varsigma_{kl}[r])_{k,l,r,s \in S \times S \times S \times S }$. Next, we specify the form of these functions in this setting.

We do not make any particular assumptions on $(\theta_{kl}, \xi_{kl}, \varsigma_{kl}[r])_{k,l,r,s \in S \times S \times S}$, which identify the conditional probabilities of matching, break-up and post break-up change of type. We just emphasize that their evolution depends on some state of $\tilde \Omega$. We focus instead on modeling the functions $\eta_{k_1,k_2}$ and $\sigma_{k_1 k_2}[l_1,l_2]$, $k_1, k_2, l_1, l_2 =1,2,3,$ appearing in \eqref{eq:IndiMutation1} and \eqref{eq:BreakUpCondProb1}, respectively. Such quantities represent the probabilities of type change, independent of matching \black{or} after a new match, conditional to the state of $\tilde \Omega$ and to the current values of the fractions of investors' types. {They} depend on the state of $\tilde \Omega$, on the values of the current fractions of optimistic, neutral and pessimistic investors, and on the type of the agent to whom they are matched in the case of post-match type change, as follows:
\begin{enumerate}[(i)]
\item A match with an optimistic agent may induce an upgrade in type (i.e. \black{from} $3$ to $1$ or $2$; or from $2$ to $1$). 
\item A match with a pessimistic agent may induce a downgrade of type (i.e. \black{from} $1$ to $2$ or $3$; or from $2$ to $3$). 
\item The propensity of an agent to become \emph{more optimistic}, i.e. to change type from $3$ to $1, 2$ or from $2$ to $1$, is increasing with respect to $p^1-p^3$. On the contrary, the propensity of an agent to become \emph{more pessimistic} is increasing with respect to $p^3-p^1$. This holds for both the type changes, conditional or unconditional to the match.
\end{enumerate}

Before giving an example of such conditional probabilities, we comment on the above model.

\begin{remark}\label{remark:changeviews}
The main features of the model described above are the following:

\begin{enumerate}[(i)]
\item All the processes determining the matching, breaking-up and type change probabilities depend on some underlying stochastic processes on the probability space $(\tilde \Omega, \tilde{\mathcal{F}}, \tilde  P) $. These factors may represent socio-economic indicators as well as the fundamental price of the assets or the impact of public news. In this way, the birth of the bubble may be determined by changes in the driving factors, leading to an higher probability of matching with optimistic traders or of changing type from pessimistic to optimistic and vice versa.

\item Investors can change their views with a probability depending on {the number of} buyers and sellers in the market, {see \eqref{eq:Xp}.} {This assumption reflects that}, with an increasing price, investors {may be} prone to think that the price will grow further at least on the short or medium term. For example in the model of \cite{scheinkman2003overconfidence}  agents are willing to pay prices exceeding their own valuation of the fundamental value of a bubbly asset because of speculation opportunities in the near future.

\item Traders can also influence each other when they meet, with a probability again depending on the difference between the number of buyers and sellers. This reflects some phenomena which are typically observed during \black{both} the blowing-up {and the bursting} \black{phase.} {W}hen the bubble grows, \black{heard behavior may be induced by} a tendency to mimic the actions of acquaintances making gains and by word-of-mouth spread of information regarding the fast increase of the stock price. \black{This will eventually} fuel further up the prices, see \cite{lux1995herd} for a description and formalization of this phenomenon {and  \cite{Bayer2014}, where such mechanisms are documented in the US 2007 Housing bubble.} Similar attitudes, but in a different direction, characterize the investors' behavior starting the burst and {speeding up} the decrease of the price after the burst.

\end{enumerate}
\end{remark}

We conclude the section by giving a concrete example of a possible choice of  $\eta_{k_1,k_2}$ and $\sigma_{k_1 k_2}[l_1,l_2]$, $k_1, k_2, l_1, l_2 =1,2,3$. 

\begin{example} \label{Example1} 
Fix $\tilde \omega \in \tilde \Omega$ and $j \in \{1, \dots, N \}$, representing time. We set 
$$\eta^{\tilde\omega, j}_{k_1,k_2}:=\eta_{k_1 k_2}(\tilde{\omega},j, \cdot)$$ 
and $$\sigma^{\tilde\omega, j}_{k_1 k_2}[l_1,l_2]:=\sigma_{k_1 k_2}[l_1,l_2](\tilde{\omega},j, \cdot).$$
We model such conditional probabilities in the following way.
\begin{enumerate}[(i)]
\item {Consider} an increasing function $f^j:\mathbb{R}_+ \to [0,1/2]$ such that $f^{j}(0)=0$ and set
\begin{equation}\label{fplusfminus}
f^{j}_+=f^{ j}\left(({p_1^{j}-p_3^{j}})^+\right), \qquad f^{j}_-=f^{j}\left(({p_3^{j}-p_1^{j}})^+\right).
\end{equation} 
Given the random variables $F_{ijk}: (\tilde \Omega, \black{\tilde{\mathcal{F}}}) \to \left([0,1/2], \mathcal{B}([0,1/2])\right)$, $i,j,k=1,2,3,$ the type change probabilities after a match are then defined as follows:
\begin{itemize}
\item After a match of two agents of the same type who stay in a relationship we have $$\sigma^{\tilde\omega, j}_{kk}(r,s) =\delta_r(k)\delta_s(k),$$ i.e. they both maintain their types.
\item After a match of two agents who are respectively optimistic and neutral and stay in a relationship, we {assume} 
$$\sigma^{\tilde\omega, j}_{12}(k,\ell)=0 \text{ if } k=3  \text{ or }\ell=3 \quad \text{and} \quad \sigma_{12}^{\tilde\omega, j}(2,1)=0,$$
 i.e. none of the agents changes his type to pessimistic and they do not switch types.
Moreover, we set
$$ 
\sigma^{\tilde\omega, j}_{12}(1,1)=F_{121}(\tilde \omega)+f^{j}_+, \quad \sigma_{12}^{\tilde\omega, j}(2,2)=F_{122}(\tilde \omega)+f^{j}_-, \quad \sigma_{12}^{\tilde\omega, j}(1,2)=1-\sigma^{\tilde\omega, j}_{12}(1,1)-\sigma_{12}^{\tilde\omega, j}(2,2),
$$
{i.e.} the probability that the neutral agent changes to optimistic due to the match is given by a random term depending only on $\tilde \Omega$ plus a term which is strictly positive only if the number of buyers is higher than the number of sellers, and increasing with respect to their difference. The opposite holds for a possible change of the optimistic agent to neutral.
\item After a match of two agents who are respectively neutral and pessimistic and stay in a relationship, we assume that
$$\sigma^{\tilde\omega, j}_{23}(k,\ell)=0 \text{ if } k=1 \text{ or } \ell=1 \quad \text{and}\quad \sigma_{23}^{\tilde\omega, j}(3,2)=0,$$ 
i.e. we exclude the possibility {that} one of them may become optimistic. In analogy to the previous case, we fix
$$ 
\sigma^{\tilde\omega,j}_{23}(2,2)=F_{232}(\tilde \omega)+f^{j}_+, \quad \sigma^{\tilde\omega, j}_{23}(3,3)=F_{233}(\tilde \omega)+f^{j}_-, \quad \sigma^{\tilde\omega, j}_{23}(2,3)=1-\sigma^{\tilde\omega,j}_{23}(2,2)-\sigma^{\tilde\omega, j}_{23}(3,3).
$$
\item {For a} match of two agents who are respectively neutral and pessimistic and stay in a relationship, we {put}
$$\sigma^{\tilde\omega,j}_{13}(3,1)=\sigma^{\tilde\omega, j}_{13}(3,2)=\sigma_{13}^{\tilde\omega,j}(2,2)=0,$$
 {i.e.} the agents do not switch their types and do not become neutral. We also set
\begin{align}
& \sigma_{13}^{\tilde \omega, j}(1,1)=F_{131}(\tilde \omega)+(f^{j}_+)^2, \quad \sigma^{\tilde\omega, j}_{13}(1,2)=F_{132}(\tilde \omega)+f^{j}_+(1-f^{j}_+),  \notag \\ & 
 \sigma_{13}^{\tilde \omega, j}(3,3)=F_{133}(\tilde \omega)+(f^{j}_-)^2,\quad \sigma^{\tilde\omega, j}_{13}(2,3)=F_{132}(\tilde \omega)+f^{j}_-(1-f^{j}_-),\notag \\ &
  \sigma^{\tilde\omega, j}_{13}(1,3)=1-\sigma_{13}^{\tilde \omega, j}(1,1)- \sigma^{\tilde\omega, j}_{13}(1,2)- \sigma_{13}^{\tilde \omega, j}(3,3)-\sigma^{\tilde\omega, j}_{13}(2,3)
\end{align}
consistently with the construction above.
\item {Condition \eqref{eq:conditionsigma} holds, i.e.} $\sigma_{k\ell}^{\tilde \omega, j}(r,s)=\sigma_{\ell k}^{\tilde \omega, j}(s,r)$.\end{itemize}
\item {Consider} an increasing function $g^{j}:\mathbb{R}_+ \to [0,1/2]$ such that $g^{ j}(0)=0$, and introduce $g^{ j}_+$ and $g^{ j}_-$ analogously to \eqref{fplusfminus}.
We define 
\begin{equation}\notag
B^{j}= \left( \begin{array}{ccc}
1 - g^{ j}_- &g^{ j}_-(1-g^{ j}_-) & (g^{ j}_-)^2  \\
g^{ j}_+ & 1 - g^{ j}_+ - g^{ j}_- &g^{ j}_-   \\
(g^{ j}_+)^2 & g^{ j}_+(1-g^{ j}_+) & 1 - g^{ j}_+  \\
 \end{array} \right),
\end{equation}
and then the matrix of the probabilities $\eta^{\tilde\omega, j}_{k, l}=[B^{j}]_{k l}+[C(\omega)]_{k l}$, $k, l  =1,2,3$, where 
$$
C_{ij}(\omega): (\tilde \Omega, \tilde{\mathcal{F}}) \to \black{\left([0,1/2], \mathcal{B}([0,1/2])\right)}
$$
are random variables, $i,j=1,2,3$. 
\end{enumerate}
\end{example}

\begin{example}
We focus again on the post-matching change of type described as in Example \ref{Example1}: agents may change their views after a match because of the information from other traders. In particular, during the blowing-up phase of the bubble, upgrade of types may be induced by matches with optimistic agents (i.e. buyers). In the current example we include the idea that the tendency of an agent to switch to more optimistic forecasts after a match also depends on the number of optimistic agents she has already met (and vice versa).
We include these considerations as follows. 
We let again $N$ be the finite number of time periods. The type\footnote{This is a small notational difference with respect to the setting of Section \ref{sec:main}, where $0$ is not included in the indices set. In this way the representation in \eqref{eq:typeswithnumbertypes} is easier.} of an agent is identified by a number $k \in \tilde{S}:= \{0, 1, \dots, 3 (N+1)^3 -1 \}$ defined as
\begin{equation}\label{eq:typeswithnumbertypes}
k = n_o + n_n (N+1) + n_p (N+1)^2 + (v-1) (N+1)^3,
\end{equation} 
where $n_o$, $n_n$, $n_p$ are the number of optimistic, neutral and pessimistic investors that the agent has met, respectively, and $v=1,2,3$ indicates an optimistic, neutral and pessimistic view, respectively\footnote{Clearly, not all $k$ in \eqref{eq:typeswithnumbertypes} represent feasible types: for example, at the $n$-th step, only types represented by $k = n_o + n_n (N+1) + n_p (N+1)^2 + v (N+1)^3$ with $n_o+n_n+n_p \le n$ are possible. When a type is infeasible, the fraction of agents of that type is zero.}. This means that, after a match, an agent of type $k$ in \eqref{eq:typeswithnumbertypes} changes to type
\begin{equation}\notag
\bar k = \bar{n}_o + \bar{n}_n (N+1) + \bar{n}_p (N+1)^2 + (\bar{v}-1) (N+1)^3,
\end{equation} 
where $\bar{n}_i= n_i+1$ for $i = o,n,p$ and $\bar{n}_j= n_j$ for $j \ne i$. Moreover, $\bar{v} \ne v$ if and only if the investor changes forecasts. 

Note that the views of an agent of type $k$ are immediately identified as optimistic if $k < (N+1)^3$, neutral if $k \in [(N+1)^3, 2(N+1)^3)$ and pessimistic if $ k \ge 2(N+1)^3$.
We can also recover the fractions of optimistic, neutral and pessimistic investors at time $t_k$, denoted above by $\green{p^k_i}$, $i=1,2,3,$ respectively, by
$$
\green{p_i^k = \sum_{l=i (N+1)^3}^{(i+1)(N+1)^3-1} \tilde{p}^k_l, \quad i=1,2,3, \quad k=0,...,N,}
$$
where $\green{\tilde{p}^k_l}$ is the fraction of investors of type $l \in \tilde{S}$ at time $t_k$. We then extend Example \ref{Example1} by defining $\sigma^{\tilde\omega, j}_{k_1 k_2}[l_1,l_2]$, $k_1, k_2, l_1, l_2 \in \tilde S, \ l=1,...N$, by also including the number of optimistic, neutral and pessimistic agents already met: pessimistic agents of type $l =  n_o + n_n (N+1) + n_p (N+1)^2 + 2 (N+1)^3$ which encounter an optimistic agent may switch to neutral or optimistic with a probability increasing with respect to $n_o$. This is in line with Remark \ref{remark:changeviews} (iii). 

As an example, for two agents of types
$$
k_1= n_o + n_n (N+1) + n_p (N+1)^2  + 2 (N+1)^3 \quad \text{ and } \quad k_2= m_o + m_n (N+1) + m_p (N+1)^2,
$$
and defining
$$
l_1= (n_o + 1) + n_n (N+1) + n_p (N+1)^2 \quad \text{ and } \quad l_2= m_o + m_n (N+1) + (m_p+1) (N+1)^2,
$$
we can set
 $$
 \sigma^{\tilde\omega, j}_{k_1 k_2}[l_1,l_2] = \frac{n_o}{n_o+n_n+n_p}F_{131}(\tilde \omega)+(f^{j}_+)^2,
 $$
where $F_{131}(\cdot)$ and $f^{j}_+$ are defined in Example \ref{Example1}. Here,  $\sigma^{\tilde\omega, j}_{k_1 k_2}[l_1,l_2](\tilde{\omega})$ is the probability that a pessimistic agent (note that $k_1 > 2(N+1)^3$) that has already met $n_o$ optimistic, $n_n$ neutral and $n_p$ pessimistic investors and meets a further optimistic investor (note that $ k_2 < (N+1)^3$), becomes optimistic (note that $l_1 < (N+1)^3$). This probability is increasing with respect to $n_o$, so that the more optimistic investors the agent has already met, the more she is prone to switch to optimistic views.
\end{example}

\subsection{Absence of arbitrage}
In \black{this section}, we {provide a setting under which} the financial market model considered in Section \ref{sec:tradingvolume} is arbitrage-free, {i.e{.} it admits} an equivalent martingale measure $Q$ for $S$, see Theorem 1.7 in \cite{follmer_stochastic_finance}. For simplicity, we assume that $\green{\kappa^{k}=0}$ for all $k=\black{0},...,N$. Then, by \eqref{eq:DefMarketPriceProcess} and \eqref{eq:Xp} it follows that 
\begin{align}
	S^{k}&= S^{{k-1}}+F^{k}-F^{{k-1}}+2 \Lambda^{k}M^{k} (X^{k}-X^{{k-1}}) \nonumber \\
	&=S^{{k-1}}+F^{k}-F^{{k-1}}+ 2 \Lambda^{k}M^{k} \left[ \Theta^{k}(p^{k}_1-p^{k}_3)- \Theta^{{k-1}}(p^{{k-1}}_1- p^{{k-1}}_3) \right] \label{eq:RewrittenS}
\end{align}
for any $k=1,...,N$.
For $P=\tilde{P} \ltimes \hat{P} $ given as in \eqref{eq:ProbhatPSetting}, we define a measure $Q$ of the form
\begin{equation} \label{eq:DefinitionQ}
	Q:=\tilde{Q} \ltimes \hat{P},
\end{equation}
where $\tilde{Q}$ is a probability measure on $\tilde{\Omega}$. 

\begin{assumption} \label{eq:IndependencAssumption}
Let \green{$F=(F^{k})_{k=\black{0},...,N}, \Lambda=(\Lambda^{k})_{k=\black{0},...,N}, M=(M^{k})_{k=\black{0},...,N}, \Theta=(\Theta^{k})_{k=\black{0},...,N}$ be stochastic processes defined on $(\tilde{\Omega},\tilde{\mathcal{F}},Q)$} with $Q$ in \eqref{eq:DefinitionQ} such that
\begin{enumerate}
	\item The stochastic processes $\Lambda M:=(\green{\Lambda^{k} M^{k}})_{k=1,...,N}$ and $X=(X^{\green{k}})_{k=1,...N}$ are conditionally independent under $Q$, i.e. $\green{\Lambda^{k}M^{k}}$ and $\green{X^{k}}$ are conditionally independent given $\green{\mathcal{F}^{{k-1}}}$ for every $k=1,...,N$.
	\item For every $i=k,...,N$, $\Theta^{\green{k}}$ and $\green{p^{k}_1-p^{k}_3}$ are conditionally independent under $Q$ given \green{$\mathcal{F}^{{k-1}}$.}
	\item The stochastic processes $F, \Theta$ are $(\tilde{Q},\tilde{\mathbb{F}})$-martingales.
\end{enumerate}
\end{assumption}

\begin{proposition} \label{prop:MarketArbitrageFree}
	Let Assumption \ref{eq:IndependencAssumption} hold. If $Q=\tilde{Q}\ltimes \hat{P}$ is a martingale measure {for} \green{$p_1-p_3:={(p^{{k}}_1-p^{{k}}_3)_{k=0,...{,}N}}$} and $\tilde{Q}$ is equivalent to $\tilde{P}$, then \black{$Q$ is an equivalent martingale measure for the discounted asset price $S$. Hence,} the market model introduced in Section \ref{sec:tradingvolume} is arbitrage-free.
\end{proposition}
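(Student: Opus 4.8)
The plan is to show directly that $Q = \tilde{Q} \ltimes \hat{P}$ is an equivalent martingale measure for $S$ and then invoke the fundamental theorem of asset pricing (Theorem 1.7 in \cite{follmer_stochastic_finance}). Since the money market account is $B \equiv 1$, the discounted price coincides with $S$, so it suffices to verify that $S$ is a $Q$-martingale for the filtration $(\mathcal{F}^{k})$ and that $Q \sim P$. The equivalence is immediate: by \eqref{eq:DefinitionQ} and \eqref{eq:ProbhatPSetting} the measures $Q = \tilde{Q} \ltimes \hat{P}$ and $P = \tilde{P} \ltimes \hat{P}$ share the same Markov kernel $\hat{P}$, so the Radon--Nikodym derivative on $\Omega$ reduces to $d\tilde{Q}/d\tilde{P}$ on $\tilde{\Omega}$; hence $\tilde{Q} \sim \tilde{P}$ yields $Q \sim P$.

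For the martingale property I would start from the recursion \eqref{eq:RewrittenS} and compute $\mathbb{E}^{Q}[\Delta S^{k} \mid \mathcal{F}^{k-1}]$. The fundamental-price increment $F^{k} - F^{k-1}$ vanishes in conditional expectation, because $F$ is a $(\tilde{Q},\tilde{\mathbb{F}})$-martingale by Assumption \ref{eq:IndependencAssumption}(3); since $F$ depends only on $\tilde{\omega}$ and $Q$ carries the kernel $\hat{P}$, this martingale property lifts to the product filtration. It then remains to handle the term $2\Lambda^{k}M^{k}\Delta X^{k}$, where $\Delta X^{k} = X^{k} - X^{k-1}$ and $X^{k} = \Theta^{k}(p_{1}^{k} - p_{3}^{k})$ by \eqref{eq:Xp}. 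Using the conditional independence of $\Lambda^{k}M^{k}$ and $X^{k}$ given $\mathcal{F}^{k-1}$ from Assumption \ref{eq:IndependencAssumption}(1), together with the $\mathcal{F}^{k-1}$-measurability of $X^{k-1}$, I would factor
$$\mathbb{E}^{Q}[\Lambda^{k}M^{k}\Delta X^{k} \mid \mathcal{F}^{k-1}] = \mathbb{E}^{Q}[\Lambda^{k}M^{k} \mid \mathcal{F}^{k-1}]\, \mathbb{E}^{Q}[\Delta X^{k} \mid \mathcal{F}^{k-1}],$$
so that the whole problem reduces to showing that $X$ is a $Q$-martingale.

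To close the argument I would verify $\mathbb{E}^{Q}[X^{k} \mid \mathcal{F}^{k-1}] = X^{k-1}$. By the conditional independence of $\Theta^{k}$ and $p_{1}^{k} - p_{3}^{k}$ given $\mathcal{F}^{k-1}$ (Assumption \ref{eq:IndependencAssumption}(2)),
$$\mathbb{E}^{Q}[\Theta^{k}(p_{1}^{k} - p_{3}^{k}) \mid \mathcal{F}^{k-1}] = \mathbb{E}^{Q}[\Theta^{k} \mid \mathcal{F}^{k-1}]\, \mathbb{E}^{Q}[p_{1}^{k} - p_{3}^{k} \mid \mathcal{F}^{k-1}] = \Theta^{k-1}(p_{1}^{k-1} - p_{3}^{k-1}),$$
where the last equality uses that $\Theta$ is a $(\tilde{Q},\tilde{\mathbb{F}})$-martingale (Assumption \ref{eq:IndependencAssumption}(3)) and that $Q$ is by hypothesis a martingale measure for $p_{1} - p_{3}$. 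Hence $X$ is a $Q$-martingale, the volume term drops out, and $\mathbb{E}^{Q}[\Delta S^{k} \mid \mathcal{F}^{k-1}] = 0$ for every $k = 1,\dots,N$.

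The main obstacle I anticipate is the careful bookkeeping of the filtration on the product space: one must argue that the $\tilde{\mathbb{F}}$-martingale properties of $F$ and $\Theta$ under $\tilde{Q}$ genuinely transfer to $(\mathcal{F}^{k})$-martingale properties under $Q$, and that the conditional-independence factorizations in Assumption \ref{eq:IndependencAssumption} are correctly interpreted on $\Omega = \tilde{\Omega} \times \hat{\Omega}$. Both points hinge on the product/kernel structure $Q = \tilde{Q} \ltimes \hat{P}$ and on the compatibility of $(\mathcal{F}^{k})$ with $(\tilde{\mathcal{F}}^{k})$, cf. Remark \ref{remark:Interpretatation}. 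Once this transfer is pinned down, the remaining steps are the routine computations sketched above.
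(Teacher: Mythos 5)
Your proposal is correct and follows essentially the same route as the paper's own proof: both start from the recursion \eqref{eq:RewrittenS}, use Assumption \ref{eq:IndependencAssumption}(1) to factor out $\mathbb{E}^{Q}[\Lambda^{k}M^{k}\mid\mathcal{F}^{k-1}]$, Assumption \ref{eq:IndependencAssumption}(2)--(3) together with the hypothesis on $p_1-p_3$ to show the signed-volume term has vanishing conditional increment, and the shared kernel $\hat{P}$ to get $Q\sim P$ from $\tilde{Q}\sim\tilde{P}$. The only difference is presentational: you isolate the intermediate claim that $X$ is a $Q$-martingale and flag the filtration-transfer issue explicitly, whereas the paper performs the identical computation in one chain of equalities and treats the transfer from $(\mathcal{F}^{k},Q)$ to $(\tilde{\mathcal{F}}^{k},\tilde{Q})$ implicitly.
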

\begin{proof}
Let $Q=\tilde{Q}\ltimes \hat{P}$ be a martingale measure for \green{$p_1-p_3$} with respect to $\mathbb{F}$. Under Assumption \ref{eq:IndependencAssumption} and by using \eqref{eq:RewrittenS} we get 
\begin{align}
&\mathbb{E}^Q\left[ S^{k}-S^{{k-1}} \vert \mathcal{F}^{{k-1}} \right] \nonumber \\
&=\mathbb{E}^Q\left[ F^{k}-F^{{k-1}} \vert \mathcal{F}^{{k-1}} \right] +2\mathbb{E}^Q\left[ \Lambda^{k}M^{k} \vert \mathcal{F}^{k-1} \right] \mathbb{E}^Q\left[\Theta^{k}(p_1^k-p_3^k)- \Theta^{k-1}(p^{k-1}_1- p^{{k-1}}_3) \vert \mathcal{F}^{{k-1}} \right] \nonumber \\
&=\mathbb{E}^{\tilde{Q}}\left[ F^k-F^{k-1} \vert \tilde{\mathcal{F}}^{k-1} \right] +2\mathbb{E}^{\tilde{Q}}\left[ \Lambda^{k}M^{k} \vert \tilde{\mathcal{F}}^{k-1} \right] \left( \mathbb{E}^Q\left[\Theta^{k}(p^{k}_1-p^k_3) \vert \mathcal{F}^{{k-1}} \right]- \Theta^{{k-1}}(p^{{k-1}}_1- p^3_{{k-1}})\right) \nonumber \\
&= 2\mathbb{E}^{\tilde{Q}}\left[ \Lambda^{k}M^{k} \vert \tilde{\mathcal{F}}^{{k-1}} \right] \bigg( \mathbb{E}^{\tilde{Q}}\left[\Theta^{k} \vert \tilde{\mathcal{F}}^{{k-1}} \right] \mathbb{E}^Q\left[p^{k}_1-p^{k}_3 \vert \mathcal{F}^{{k-1}} \right]- \Theta^{{k-1}}(p^{{k-1}}_1- p^{{k-1}}_3)\bigg) \nonumber \\
&= 2\mathbb{E}^{\tilde{Q}}\left[ \Lambda^{k}M^{k} \vert \tilde{\mathcal{F}}^{{k-1}} \right] \Theta^{{k-1}}\left(p^{{k-1}}_1- p^{{k-1}}_3-( p^{{k-1}}_1- p^{{k-1}}_3)\right)\nonumber  \\ 
&=0. \label{eq:CalculationsMartingaleProperty}
\end{align}
Moreover, by the definition of $Q$ it is clear that $Q$ is equivalent to $P$ if $\tilde{Q}$ is equivalent to $\tilde{P}$.
\end{proof}
In the following, we analyze under which conditions \green{$p_1-p_3$} is a martingale under $Q$ with respect to $\mathbb{F}$. 
For the sake of simplicity, we assume that the agents break up immediately, by setting $\xi_{kl} = 1$ for any $k, l = 1,2,3$. Thus, we only consider the probability $\varsigma_{kl}^n[r]$ {that when two agents of types $k,l$ break up immediately after matching, the first agent becomes of type $r$. \black{Similarly to} Example \ref{Example1} we choose suitable functions \black{$\tilde\eta_{ij},\theta_{ij},\tilde\varsigma_{ilj}: \tilde{\Omega} \times \mathbb{N} \to [0,1]$, for any $i,j,l=1,2,3$, $i \ne j$, and define}
\begin{align}
	\eta_{ij}(\tilde{\omega},k,\green{p^{{k-1}}})&:=\black{\tilde\eta_{ij}(\tilde{\omega},k)+}f_{ij}\left(\green{p^{{k-1}}_{1J}-p^{{k-1}}_{3J}}\right)\black{,} \label{eq:ExampleFunction1} \\
	\theta_{i\black{l}}(\tilde{\omega},k,\green{\tilde{p}^{{k}}})&:= \theta(\tilde{\omega},k) \tilde{p}^{k}_{\black{l}J}\black{,} \label{eq:ExampleFunction2}\\
	\varsigma_{il}[j](\tilde{\omega}, k,\green{\tilde{p}^{{k}}})&:=	 \black{\delta_{\lbrace l\rbrace}(j)\left(\black{\tilde\varsigma_{ilj}(\tilde{\omega},k)+}g_{ilj}\left(\green{{\tilde{p}}^{{k}}_{1J}-{\tilde{p}}^{{k}}_{3J}}\right)\right)},\label{eq:ExampleFunction3}
\end{align}
for $k=1,\dots,N$, where $f_{ij}, g_{ijl}: \mathbb{R} \to [0,1]$,  in such a way that  \eqref{eq:ThetaCondition} holds and that $\eta_{ij}, \varsigma_{il}[j] \in [0,1]$. Here \green{${p}^{{k-1}}$} is the distribution of agents types after the $(k-1)$-th time step and \green{$\tilde{p}^{{k}}$} is the distribution of types after the random change \black{at} the $k$-th time step. \black{Finally, in view of  \eqref{eq:EtaSumUpTo1} and \eqref{eq:VarSigmaSumUpTto1}, we define
\begin{equation}\label{eq:etaii}
\eta_{ii}(\tilde{\omega},k,\green{p^{{k-1}}}) = 1 - \sum_{j=1, j \ne i}^3 \eta_{ij}(\tilde{\omega},k,\green{p^{{k-1}}}) 
\end{equation}
and 
\begin{equation}\label{eq:varsigmaili}
\varsigma_{il}[i](\tilde{\omega}, k,\green{\tilde{p}^{k}})= 1 - \sum_{j=1, j \ne i}^3 \varsigma_{il}[j](\tilde{\omega}, k,\green{\tilde{p}^{k}}). 
\end{equation}
\black{Note that \eqref{eq:ExampleFunction3} and \eqref{eq:varsigmaili} imply that
\begin{equation}\label{eq:allindicessame}
\varsigma_{ii}[i](\tilde{\omega}, k,\green{\tilde{p}^{{k}}})=1 - \sum_{j=1, j \ne i}^3 \varsigma_{ii}[j](\tilde{\omega}, k,\green{\tilde{p}^{{k}}}) = 1,
\end{equation}
that is, when two agents of the same type meet, they keep their type.}
}
\begin{remark}
\black{In the construction \eqref{eq:ExampleFunction1}-\eqref{eq:ExampleFunction3}, type changes are governed by a stochastic driver defined on the space $\tilde \Omega$ plus a term which only depends on the former distribution of types.} Also note that $\varsigma_{ij}[l]$ depends in general on the distribution $\tilde{\tilde{p}}$ immediately after the break up, which coincides with the distribution $\tilde{p}$ before the matching, as agents break up immediately.
\end{remark}
\begin{lemma}
	Consider a random matching mechanism with immediate break-up, where the functions $\eta_{ij},\theta_{ij},\varsigma_{ij}[l]$ for $i,j,l=1,2,3\black{,}$ are defined in \eqref{eq:ExampleFunction1}-\eqref{eq:ExampleFunction3}. Let $Q$ be given as in \eqref{eq:DefinitionQ}. Then it holds
	\begin{align} \label{eq:SimplificationWithImmediateBreakUp}
	\mathbb{E}^Q\left[p^{k}_1- p^{k}_3 \vert \mathcal{F}^{{k-1}}\right]&=\mathbb{E}^{\tilde{Q}}\left[ \Gamma_{1J}^{k}(\cdot, p^{{k-1}})-\Gamma_{3J}^{k}(\cdot, p^{{k-1}})\vert \tilde{\mathcal{F}}^{{k-1}}\right],
\end{align}
with
\black{\small{\begin{align}
&\Gamma^{k}_{iJ}(\tilde{\omega}, p^{{k-1}})\nonumber \\
&=\left( 1- \theta(\tilde{\omega},k) \right) \black{F_i^{\tilde \omega}(p^{{k-1}})} + \theta(\tilde{\omega},k) \black{F_{i}^{\tilde \omega}(p^{{k-1}})}\black{F_{i}^{\tilde \omega}(p^{{k-1}})} \nonumber \\
& \quad +   \sum_{k_1 = 1, k_1 \ne i}^3  \left(\tilde\varsigma_{k_1 i i}(\tilde{\omega},k)+g_{k_1 i i}\left( \black{F_1^{\tilde \omega}(p^{{k-1}})}-\black{F_{3}^{\tilde \omega}(p^{k-1})}\right) + 1 - \tilde\varsigma_{i k_1 k_1}(\tilde{\omega},k)-g_{i k_1 k_1} \left( \black{F_{\tilde \omega}^1(p^{{k-1}})}-\black{F_{3}^{\tilde \omega}(p^{{k-1}})}\right)\right) \nonumber \\
&\quad \quad \cdot \black{F_{i}^{\tilde \omega}(p^{{k-1}})}\black{F_{k_1}^{\tilde \omega}(p^{{k-1}})}\theta(\tilde{\omega},k),\label{eq:GammaExample}
\end{align}}}
 for $i=1,2,3$ and $k \geq 2$, {where
 \begin{align} \label{eq:DefinitionF^1}
 	\black{F_i^{\tilde \omega}(p^{{k-1}})}&:=  \sum_{l=1, \black{l \ne i}}^3 p^{{k-1}}_{lJ} \left(\black{\tilde\eta_{li}(\tilde{\omega},k)+}f_{li}\left(p^{{k-1}}_{1J}-p^{{k-1}}_{3J}\right)\right) \notag \\
	& \quad \black{+p^{{k-1}}_{iJ} \left(1-\sum_{l=1, \black{l \ne i}}^3\left(\black{\tilde\eta_{il}(\tilde{\omega},k)+}f_{il}\left(p^{{k-1}}_{1J}-p^{{k-1}}_{3J}\right)\right)\right)}.
 \end{align}}
\end{lemma}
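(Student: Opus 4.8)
The plan is to reduce everything to the one-step recursion of Theorem~\ref{theorem:PropertiesDiscrete}, read off the unmatched component of $\Gamma^k$ for the specific input functions, and then transfer the resulting pointwise identity from the kernel $\hat P^{\tilde\omega}$ to the measure $Q$. First I would exploit the immediate break-up assumption $\xi_{kl}\equiv 1$. Since in the third sub-step every matched pair breaks up and there is no further matching within the period, at the end of period $k$ no agent is matched, so $\hat p^k_{il}=0$ for all $l\in S$. By \eqref{eq:pi3} this gives $p_i^k=p_{iJ}^k$ for $i=1,2,3$, whence $p_1^k-p_3^k=p_{1J}^k-p_{3J}^k$; in particular only the $\Gamma_{iJ}$ components of the recursion will matter, which is why the matched components drop out (consistently with $(1-\xi)\sigma\equiv 0$).

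Second, fixing $\tilde\omega$ and working under $\hat P^{\tilde\omega}$, I would combine items~1 and~8 of Theorem~\ref{theorem:PropertiesDiscrete}. Item~8 (the exact law of large numbers) gives that $\hat p^{k-1}$ and $\hat p^{k}$ equal their conditional means $\hat P^{\tilde\omega}$-a.s., so $\mathbb{E}^{\hat P^{\tilde\omega}}[\hat p^{k-1}]=p^{k-1}$, and item~1 yields $\mathbb{E}^{\hat P^{\tilde\omega}}[\hat p^{k}]=\Gamma^{k}(\tilde\omega,p^{k-1})$. Combining, $p_{iJ}^k=\Gamma_{iJ}^k(\tilde\omega,p^{k-1})$ holds $\hat P^{\tilde\omega}$-a.s., hence $Q$-a.s., because $Q=\tilde Q\ltimes\hat P$ uses the \emph{same} kernel $\hat P$ as $P$ and differs from it only in the marginal on $\tilde\Omega$.

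Third, I would make $\Gamma_{iJ}^k$ explicit by substituting \eqref{eq:ExampleFunction1}--\eqref{eq:ExampleFunction3} into the general formula \eqref{eq:x} with $\xi\equiv 1$. Because each period starts from a fully unmatched population, the post-mutation distribution is supported on the unmatched states: $\tilde p^{k}_{k_1 l_1}=0$ for $l_1\in S$, while $\tilde p^{k}_{kJ}=\sum_{l}p^{k-1}_{lJ}\eta_{lk}=F_k^{\tilde\omega}(p^{k-1})$ as in \eqref{eq:DefinitionF^1} (using \eqref{eq:etaii}). This kills the $\sigma$-sum in \eqref{eq:x}; moreover $b_k=1-\theta(\tilde\omega,k)$ by \eqref{eq:ExampleFunction2} since $\sum_l\tilde p^k_{lJ}=1$, and $\theta_{k_1l_1}=\theta(\tilde\omega,k)F_{l_1}^{\tilde\omega}(p^{k-1})$. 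Splitting the remaining $\varsigma$-sum according to $k_1=i$ versus $k_1\ne i$ and evaluating $\varsigma_{k_1l_1}[i]$ via \eqref{eq:ExampleFunction3}, \eqref{eq:varsigmaili} and \eqref{eq:allindicessame} then reproduces exactly the three terms of \eqref{eq:GammaExample}. The bookkeeping with the Kronecker deltas in \eqref{eq:ExampleFunction3} is the one place requiring care, but it is routine.

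Finally, I would transfer this pointwise-in-$\tilde\omega$ identity to the conditional expectation under $Q$. By the previous steps $p_1^k-p_3^k=\Gamma_{1J}^k(\tilde\omega,p^{k-1})-\Gamma_{3J}^k(\tilde\omega,p^{k-1})$ $Q$-a.s., and the right-hand side is a function of $\tilde\omega$ alone. It thus remains to show that for a $\tilde{\mathcal{F}}$-measurable integrand $\phi$ one has $\mathbb{E}^Q[\phi\mid\mathcal{F}^{k-1}]=\mathbb{E}^{\tilde Q}[\phi\mid\tilde{\mathcal{F}}^{k-1}]$; this is the main obstacle, since $\Gamma^k_{iJ}(\cdot,p^{k-1})$ is only $\tilde{\mathcal{F}}^{k}$-measurable (it involves the time-$k$ inputs) and genuinely needs to be projected onto $\tilde{\mathcal{F}}^{k-1}$. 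It follows from the product structure $Q=\tilde Q\ltimes\hat P$ together with the compatibility of the kernel with the filtration, namely that $\tilde\omega\mapsto\hat P^{\tilde\omega}(\hat A)$ is $\tilde{\mathcal{F}}^{k-1}$-measurable for $\hat A\in\hat{\mathcal{F}}^{k-1}$ (the matching up to period $k-1$ uses only the $\tilde{\mathbb{F}}$-adapted inputs up to time $k-1$). Testing against rectangles $\tilde A\times\hat A$ with $\tilde A\in\tilde{\mathcal{F}}^{k-1}$, $\hat A\in\hat{\mathcal{F}}^{k-1}$ and using \eqref{eq:ProbhatPSetting}, the defining identity of the conditional expectation reduces to $\mathbb{E}^{\tilde Q}\big[(\phi-\mathbb{E}^{\tilde Q}[\phi\mid\tilde{\mathcal{F}}^{k-1}])\,\hat P^{\tilde\omega}(\hat A)\mid\tilde{\mathcal{F}}^{k-1}\big]=0$, which holds by the measurability just noted. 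Applying this with $\phi=\Gamma_{1J}^k(\cdot,p^{k-1})-\Gamma_{3J}^k(\cdot,p^{k-1})$ yields \eqref{eq:SimplificationWithImmediateBreakUp}.
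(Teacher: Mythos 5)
Your proposal is correct and follows essentially the same route as the paper's proof: both arguments rest on Points 1 and 7 of Theorem \ref{theorem:PropertiesDiscrete} (the one-step recursion plus the exact law of large numbers), on the reduction $p_i^k=p_{iJ}^k$ obtained from \eqref{eq:pi3} under immediate break-up, and on substituting \eqref{eq:ExampleFunction1}--\eqref{eq:ExampleFunction3} into \eqref{eq:x} with $\xi\equiv 1$, which is exactly how \eqref{eq:GammaExample} and \eqref{eq:DefinitionF^1} arise. The one place where you do more than the paper is the final conditioning step: the paper computes the unconditional identity $\mathbb{E}^Q[p^{k}]=\mathbb{E}^{\tilde Q}[\Gamma^{k}(\cdot,p^{k-1})]$ and then simply asserts its conditional analogue \eqref{eq:SimplificationWithImmediateBreakUp}, whereas you first upgrade the recursion to a pointwise $Q$-a.s.\ identity (applying the exact law of large numbers at both times $k-1$ and $k$, and noting that the kernel $\hat P$ is shared by $P$ and $Q$) and then justify $\mathbb{E}^Q[\phi\,\vert\,\mathcal{F}^{k-1}]=\mathbb{E}^{\tilde Q}[\phi\,\vert\,\tilde{\mathcal{F}}^{k-1}]$ for $\tilde{\mathcal{F}}$-measurable $\phi$ via the rectangle argument, using the measurability of $\tilde\omega\mapsto\hat P^{\tilde\omega}(\hat A)$ for $\hat A\in\hat{\mathcal{F}}^{k-1}$. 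This fills a step the paper leaves implicit, at the cost of invoking a kernel--filtration compatibility that the paper never formally states (but which is needed for the paper's assertion as well). Two cosmetic slips to fix: the exact law of large numbers is Point 7 of Theorem \ref{theorem:PropertiesDiscrete}, not ``item 8''; and the term that vanishes in \eqref{eq:x} because the population enters each period fully unmatched is the $\varsigma$-weighted sum over matched pairs $\tilde p^{k}_{k_1 l_1}$, not a ``$\sigma$-sum'' (the $\sigma$'s occur only in \eqref{eq:DefinitionGamma}, i.e.\ in the matched components $\Gamma_{kl}$, which vanish wholesale because $1-\xi\equiv 0$).
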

\begin{proof}
From the definition of $Q$ in \eqref{eq:DefinitionQ} we get
\begin{align}
\mathbb{E}^Q[{p}^{k}]&=\int_{\tilde{\Omega}} \mathbb{E}^{\hat{P}^{\tilde{\omega}}}[p^{k}]d \tilde{Q}(\tilde{\omega})	\nonumber\\
&=\int_{\tilde{\Omega}} \Gamma^{k}(\tilde{\omega}, \mathbb{E}^{\hat{P}^{\tilde{\omega}}}[p^{{k-1}}]) d \tilde{Q}(\tilde{\omega}) \nonumber \\
&=\int_{\tilde{\Omega}} \Gamma^{k}(\tilde{\omega}, p^{{k-1}}) d \tilde{Q}(\tilde{\omega}) \nonumber \\
&=\mathbb{E}^{\tilde{Q}}\left[ \Gamma^{k}(\cdot, p^{{k-1}})\right] \label{eq:Representation}
\end{align}
by \black{P}oint 1. and 7. in Theorem \black{\ref{theorem:PropertiesDiscrete}}.
By \eqref{eq:pi3} \black{we have}
\black{
\begin{align}
	\mathbb{E}^Q\left[ p^{k}_1- p^{k}_3 \vert \mathcal{F}^{{k-1}} \right]&= \mathbb{E}^Q \left[ \sum_{j=1}^3 p^{k}_{1j}+ p^{k}_{1J} - \sum_{j=1}^3 p^{k}_{3j}-p^{k}_{3J} \vert \mathcal{F}^{{k-1}}\right]\nonumber \\
	&= \mathbb{E}^Q \left[  p^{k}_{1J} - p^{k}_{3J}\vert \mathcal{F}^{{k-1}} \right],\nonumber
\end{align}
as the agents immediately break up.}
\black{Then by} \eqref{eq:Representation} it follows
\begin{align}
	\mathbb{E}^Q\left[p^{k}_1- p^{k}_3 \vert \mathcal{F}^{{k-1}}\right]&=\mathbb{E}^{\tilde{Q}}\left[ \Gamma_{1J}^{k}(\cdot, p^{{k-1}})- \Gamma_{3J}^{k}(\cdot, p^{{k-1}})\vert \tilde{\mathcal{F}}^{{k-1}}\right], \label{eq:ConditionalExpectationLemma}
\end{align}
where we use that \black{$\Gamma_{ij}^{k}(\tilde{\omega},\hat{p})=0$ for all $i,j=1,2,3, \tilde{\omega} \in \tilde{\Omega}, \hat{p} \in \hat{\Delta}$ by \eqref{eq:DefinitionGamma}, as}{ $\xi_{k_1 l_1} = 1$ for any $k_1, l_1 =1,2,3$}. {In particular,  by \eqref{eq:x}, \eqref{eq:ExampleFunction1}-\eqref{eq:ExampleFunction3} and \eqref{eq:varsigmaili}-\eqref{eq:allindicessame}, in this setting $\Gamma_{t_k}^{iJ}$ is given by}
\black{\begin{align}
&\Gamma^{k}_{iJ}(\tilde{\omega}, p^{{k-1}})\nonumber \\
&=\left( 1- \theta(\tilde{\omega},k) \right) \tilde{p}_{iJ}^{{k}} + \theta(\tilde{\omega},k) \tilde{p}_{iJ}^{{k}}\tilde{p}_{iJ}^{{k}} \nonumber \\
& \quad +   \sum_{k_1 = 1, k_1 \ne i}^3  \left(\tilde\varsigma_{k_1 i i}(\tilde{\omega},k)+g_{k_1 i i}\left( \tilde{p}^{\black{k}}_{\black{1J}}-\tilde{p}^{\black{k}}_{\black{3J}}\right) + 1 - \tilde\varsigma_{i k_1 k_1}(\tilde{\omega},k)-g_{i k_1 k_1} \left( \tilde{p}_{1J}^{{k}}-\tilde{p}_{3J}^{{k}}\right)\right) \nonumber \\
&\quad \quad \cdot \tilde{p}_{iJ}^{{k}}\tilde{p}_{k_1J}^{{k}}\theta(\tilde{\omega},k),\label{eq:ApplicationGamma1}
\end{align}}
with \black{${\tilde{p}^{{k}}_{iJ}= \black{F_i^{\tilde \omega}(p^{k-1}})},$ where $\black{F_i{\tilde \omega}(p_{t_{k-1}})}$ is defined in  \eqref{eq:DefinitionF^1}.}
The result follows by putting together \eqref{eq:ConditionalExpectationLemma}, {\eqref{eq:ApplicationGamma1} and} \black{\eqref{eq:DefinitionF^1}}.
\end{proof}
{We now give an example where the functions $\black{(\tilde{\eta}_{ij})_{i,j=1,2,3}},\theta,\black{(\tilde{\varsigma}_{ijk})_{i,j,k=1,2,3}}$ \black{can be chosen to} guarantee the existence of a probability measure $Q$ of the form \eqref{eq:DefinitionQ} which is a martingale measure for $p^1-p^3$}. Proposition \ref{prop:MarketArbitrageFree} {then implies that such a \black{measure} $Q$ is an equivalent martingale measure for the market price $S$ of the asset, and consequently that the market model is arbitrage{-}free}.
\begin{example} \label{example:MarketArbitrageFree}
Let \begin{align}
\tilde{\Omega}:=  \prod_{k=1}^N \left(\tilde{\Omega}_k \times \tilde{\tilde{\Omega}}_k\right),
 \end{align}
 with $\tilde{\Omega}_k:=\lbrace \tilde{\omega}_1^k, \tilde{\omega}_2^k\rbrace$ for $k=1,...,N$ and \black{$\tilde{\tilde{\Omega}}_k:=\prod_{r=1}^3 \tilde{\tilde{\Omega}}_{k,r}$} where \black{$\tilde{\tilde{\Omega}}_{k,r}=\{\tilde{\tilde{\omega}}^k_{i_r}, \dots,  \tilde{\tilde{\omega}}^k_{i_{r+1}}\}$ with $i_1=1, i_2 = l_1, i_3=l_2, i_4 = l$ for finite numbers $1<l_1<l_2<l$, endowed with $\sigma$-algebras $\tilde{\mathcal{F}}^{\black{k}}$ and $\tilde{\tilde{\mathcal{F}}}^{\black{k}},$ respectively. Without mentioning any further, we assume that each $\sigma$-algebra is generated by the subsets of the corresponding space.  Moreover, we denote by $\omega, \tilde{\omega}^k$ elements of $\Omega, \tilde{\Omega}^k$ for $k=1,...,N$, respectively.}  
 Let the processes $\Theta$ and $F$ be defined on \black{$\prod_{k=1}^{N} \tilde{\tilde{\Omega}}_{k,1}$ and $\prod_{k=1}^{N} \tilde{\tilde{\Omega}}_{k,2}$}, respectively, and $\Lambda$ and $M$ be defined on \black{$\prod_{k=1}^N \tilde{\tilde{\Omega}}_{k,3}$}. \black{Introduce $P:=\tilde{P} \ltimes \hat{P}$  by choosing 
 \begin{align} \label{eq:ProducProbabilitiesPQ}
 	\tilde{P}:=\bigotimes_{k=1}^N \left(\tilde{P}_k \otimes \tilde{\tilde{P}}_k\right)   
 \end{align}
 with $\tilde{P}_k$ probability measure on $\tilde{\Omega}_k$ and  $\tilde{\tilde{P}}_k = \prod_{r=1}^3 \tilde{\tilde{P}}_{k,r}$ where $\tilde{\tilde{P}}_{k,1}$, $\tilde{\tilde{P}}_{k,2}$ and $\tilde{\tilde{P}}_{k,3}$ are probability measures on $\tilde{\tilde{\Omega}}_{k,1}$, $\tilde{\tilde{\Omega}}_{k,2}$ and $\tilde{\tilde{\Omega}}_{k,3}$, respectively, for any $k=1, \dots, N$, such that $\Theta$ and $F$ are martingales under $\prod_{k=1}^N\tilde{\tilde{P}}_{k,1}$ and $\prod_{k=1}^N\tilde{\tilde{P}}_{k,2}$, respectively}. This implies that $\Theta$ and $F$ are also martingales under $P$. \black{Moreover, we assume that}
 \begin{equation}\label{eq:qandpequiv}
 \tilde{P}_k(\tilde{\omega}_i^k)>0 \quad \text{ for any $i=1,2$, {$k=1,\dots,N$}}
 \end{equation}
  and $\tilde{\tilde{P}}_k(\tilde{\tilde{\omega}}_i^k)>0$ for any $i=1,...,l$,  {$k=1,\dots,N$}.

 We define 
\begin{align} \label{eq:DefinitionIntensitiesExample}
 	\theta(\tilde{\omega},k):=\theta_1(k) \textbf{1}_{\lbrace \tilde{\omega}^k=\tilde{\omega}^k_1\rbrace} + \theta_2(k) \textbf{1}_{\lbrace \tilde{\omega}^k=\tilde{\omega}^k_2\rbrace}
\end{align}
{and
\begin{equation} \label{eq:pushtowards3or1}
\tilde{\eta}_{ij}(\tilde \omega,k) = \begin{cases}
 \tilde{\eta}_{1}^{13}(k)\textbf{1}_{\lbrace \tilde{\omega}^k=\tilde{\omega}^k_1\rbrace} +  \tilde{\eta}_{2}^{13}(k)\textbf{1}_{\lbrace \tilde{\omega}^k=\tilde{\omega}^k_2\rbrace} \qquad &\text{ for $i=1$, $j=3$}, \\
 \tilde{\eta}_{1}^{31}(k)\textbf{1}_{\lbrace \tilde{\omega}^k=\tilde{\omega}^k_1\rbrace} +  \tilde{\eta}_{2}^{31}(k)\textbf{1}_{\lbrace \tilde{\omega}^k=\tilde{\omega}^k_2\rbrace} \qquad &\text{ for $i=3$, $j=1$}, \\
  \tilde{\eta}_{1}^{21}(k)\textbf{1}_{\lbrace \tilde{\omega}^k=\tilde{\omega}^k_1\rbrace} +  \tilde{\eta}_{2}^{21}(k)\textbf{1}_{\lbrace \tilde{\omega}^k=\tilde{\omega}^k_2\rbrace} \qquad &\text{ for $i=2$, $j=1$}, \\
 \tilde{\eta}_{1}^{23}(k)\textbf{1}_{\lbrace \tilde{\omega}^k=\tilde{\omega}^k_1\rbrace} +  \tilde{\eta}_{2}^{23}(k)\textbf{1}_{\lbrace \tilde{\omega}^k=\tilde{\omega}^k_2\rbrace} \qquad &\text{ for $i=2$, $j=3$}, \\
 0 \qquad &\text{ for all other indices},
\end{cases}
\end{equation}
for some functions $	\tilde{\eta}_{1}^{13}, \tilde{\eta}_{2}^{13},  \tilde{\eta}_{1}^{31}, \tilde{\eta}_{2}^{31}, \tilde{\eta}_{1}^{21}, \tilde{\eta}_{2}^{21}, \tilde{\eta}_{1}^{23}, \tilde{\eta}_{2}^{23} : \{1,\dots,N\} \to [0,1/2].$ The functions \eqref{eq:ExampleFunction3} are defined by
\begin{equation} \label{eq:pushtowards3or1aftermatch}
\tilde{\varsigma}_{ijl}(\tilde \omega,k) = \begin{cases}
 \tilde{\varsigma}_{1}^{13}(k)\textbf{1}_{\lbrace \tilde{\omega}^k=\tilde{\omega}^k_1\rbrace} +  \tilde{\varsigma}_{2}^{13}(k)\textbf{1}_{\lbrace \tilde{\omega}^k=\tilde{\omega}^k_2\rbrace} \qquad &\text{ for $i=1$, $j=3$, $l=3$}, \\
 \tilde{\varsigma}_{1}^{31}(k)\textbf{1}_{\lbrace \tilde{\omega}^k=\tilde{\omega}^k_1\rbrace} +  \tilde{\varsigma}_{2}^{31}(k)\textbf{1}_{\lbrace \tilde{\omega}^k=\tilde{\omega}^k_2\rbrace} \qquad &\text{ for $i=3$, $j=1$, $l=1$}, \\
 0 \qquad &\text{ for all other indices},
\end{cases}
\end{equation}
for some functions 
\begin{equation}\label{eq:functionslessonehalf}
\tilde{\varsigma}_{1}^{31}, \tilde{\varsigma}_{2}^{31}, \tilde{\varsigma}_{1}^{13}, \tilde{\varsigma}_{2}^{13} : \{1,\dots,N\} \to [0,1/2].
\end{equation}}

We assume that the functions $f_{ij}: \mathbb{R} \to [0,1]$, $i,j=1,2,3$, \black{$i \ne j$}, which appear in \eqref{eq:ExampleFunction1} satisfy
\begin{align}
	 f_{2,1}(x),f_{3,1}(x),\black{f_{3,2}}(x) \in (0,1/2], \quad f_{1,2}(x)=f_{2,3}(x)=f_{1,3}(x)=0,\label{eq:f11} \\
	 \black{f_{3,1}(x)+\black{f_{3,2}}(x) \le 1/2}, \label{eq:sumnomorethanonehalf}
\end{align}
for any $x>0$. 

Moreover, we assume that $g_{ijj}: \mathbb{R} \to [0,1]$ , $i,j=1,2,3$, \black{$i \ne j$}, appearing in \eqref{eq:ExampleFunction3} is such that
\black{\begin{align}
	g_{2 1 1}(x), g_{3 1 1}(x), g_{3 2 2}(x) & \in (0,1/2],  \label{eq:ConditionStrictlyPositive}\\
	g_{1 3 3}(x), g_{1 2 2}(x), g_{2 3 3}(x) & = 0 \label{eq:ConditionG1}
\end{align}}
for any $x>0$.

Finally, we assume that \eqref{eq:f11} and \eqref{eq:ConditionStrictlyPositive}-\eqref{eq:ConditionG1} hold switching the indices $1$ and $3$ when $x<0$.
\end{example}

\begin{remark}
In equations \eqref{eq:DefinitionIntensitiesExample}-\eqref{eq:functionslessonehalf} the probabilities governing type changes which are defined on $(\tilde \Omega, \tilde{\mathcal{F}})$ are identified only by two possible states at every time. Moreover, we only allow for random type changes before the matching to pessimistic and optimistic type. For post matchings type change, we assume that neutral investors cannot change their type. Such assumptions allow for less lengthy computations in the following. An extension to a more general case, see for example Section \ref{sec:simulations}, can be easily provided. In particular, when the probabilities are identified by an higher number of states, we have more degree of freedom for defining the measure $\tilde Q$. 

{Assumptions \eqref{eq:f11} and \eqref{eq:ConditionStrictlyPositive}-\eqref{eq:ConditionG1} are in line with the model in Section \ref{sec:tradingvolume}. In particular, under \eqref{eq:f11}, if there are more optimistic than pessimistic investors, agents can switch to more pessimistic views \black{only due to some exogenous, stochastic effects modeled by $\tilde{\eta}$. Furthermore, there is instead a strictly positive term, depending on the difference between optimistic and pessimistic investors, which increases the probability that pessimistic or neutral traders} switch to more optimistic forecasts. The reverse happens if there are more pessimistic than optimistic traders. \\
Similar considerations hold for the upgrade/downgrade probabilities in \eqref{eq:ExampleFunction3}.
Under the hypothesis \green{$p_1>p_3$}, there is always a strictly positive \black{term which increases the probability} of upgrade of type, see \eqref{eq:pushtowards3or1aftermatch}\black{,} whereas the probability downgrade of type \black{after matching is only given by a stochastic, exogenous effect represented by $\tilde{\varsigma}$}, see \eqref{eq:pushtowards3or1aftermatch}. Again, the reverse holds when \green{$p_1<p_3$}.}

Further bounds on conditions on the involved functions are necessary to guarantee that the type change probabilities remain in $(0,1)$.
\end{remark}

{The following lemma is a straight forward consequence of the construction of $\tilde \Omega$ and $\tilde P$, together with the definition of the processes $\Theta$, $F$, $\Lambda$ and $M$ in Example \ref{example:MarketArbitrageFree} and with \eqref{eq:qandpequiv}.}
{\begin{lemma}\label{lem:assumptionQ}
In the setting of Example \ref{example:MarketArbitrageFree}, \black{introduce the probability measure $Q:=\tilde{Q} \ltimes \hat{P}$ with 
 \begin{align} \label{eq:ProducProbabilitiesQ}
 	\tilde{Q}:=\bigotimes_{k=1}^N \left(\tilde{Q}_k \otimes \tilde{\tilde{P}}_k\right),   
 \end{align}
 where $\tilde{\tilde{P}}_k$ is defined in \eqref{eq:ProducProbabilitiesPQ} and $ \tilde{Q}_k(\tilde{\omega}_i^k)>0 \text{ for any $i=1,2$, {$k=1,\dots,N$}}$. Then $Q$ satisfies Assumption \ref{prop:MarketArbitrageFree} and is equivalent to $P$.}
\end{lemma}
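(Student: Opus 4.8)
The plan is to exploit the product structure of $(\tilde\Omega, \tilde{\mathcal{F}}, \tilde{P})$ built in Example \ref{example:MarketArbitrageFree}, under which, for each $k$, the four factors $\tilde{\tilde\Omega}_{k,1}, \tilde{\tilde\Omega}_{k,2}, \tilde{\tilde\Omega}_{k,3}$ and the two-point space $\tilde\Omega_k$ are mutually independent and each carries a distinct ingredient of the model: $\Theta$ lives on $\prod_k \tilde{\tilde\Omega}_{k,1}$, $F$ on $\prod_k \tilde{\tilde\Omega}_{k,2}$, the pair $(\Lambda, M)$ on $\prod_k \tilde{\tilde\Omega}_{k,3}$, whereas the matching, mutation and break-up intensities $\theta, \tilde\eta, \tilde\varsigma$ in \eqref{eq:DefinitionIntensitiesExample}--\eqref{eq:pushtowards3or1aftermatch} are functions of $\tilde\omega^k \in \tilde\Omega_k$ only. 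Consequently $p_1^k - p_3^k$, being generated by the random matching driven by these intensities together with $\hat\Omega$, is measurable with respect to the $\sigma$-algebra generated by $\hat\Omega$ and the factors $\tilde\Omega_1, \dots, \tilde\Omega_N$, hence independent of the factors carrying $\Theta, F, \Lambda, M$.

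First I would establish the equivalence $Q \sim P$. Since $Q = \tilde{Q} \ltimes \hat{P}$ and $P = \tilde{P} \ltimes \hat{P}$ share the same kernel $\hat{P}$, by the argument already used in the proof of Proposition \ref{prop:MarketArbitrageFree} it suffices to show $\tilde{Q} \sim \tilde{P}$. As $\tilde{Q}$ and $\tilde{P}$ differ only by replacing the factor $\tilde{P}_k$ by $\tilde{Q}_k$ on $\tilde\Omega_k$ while retaining each $\tilde{\tilde{P}}_k$, and since both $\tilde{P}_k$ and $\tilde{Q}_k$ assign strictly positive mass to each of the atoms $\tilde\omega_1^k, \tilde\omega_2^k$ by \eqref{eq:qandpequiv} and by the hypothesis $\tilde{Q}_k(\tilde\omega_i^k) > 0$, the measures $\tilde{P}_k$ and $\tilde{Q}_k$ are mutually equivalent; taking the product over $k$ preserves equivalence, giving $\tilde{Q} \sim \tilde{P}$.

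Next I would verify the three conditions of Assumption \ref{eq:IndependencAssumption}. For condition (3), I observe that $\tilde{Q}$ and $\tilde{P}$ have identical marginals on $\prod_k (\tilde{\tilde\Omega}_{k,1} \times \tilde{\tilde\Omega}_{k,2})$, namely $\bigotimes_k (\tilde{\tilde{P}}_{k,1} \otimes \tilde{\tilde{P}}_{k,2})$, because only the $\tilde\Omega_k$ factors are altered. Since $\Theta$ and $F$ are measurable with respect to these unchanged factors and are $(\tilde{P}, \tilde{\mathbb{F}})$-martingales by construction, the relevant conditional expectations are unchanged under $\tilde{Q}$, so $F$ and $\Theta$ remain $(\tilde{Q}, \tilde{\mathbb{F}})$-martingales. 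For conditions (1) and (2), I would use the mutual independence of the factors together with the fact that the filtration $\mathcal{F}^{k-1}$ factorizes across the independent components, so that conditioning on it does not create dependence between variables supported on disjoint factors. Concretely, $\Lambda^k M^k$ is a function of the $\tilde{\tilde\Omega}_{\cdot,3}$ coordinates, whereas $X^k = \Theta^k(p_1^k - p_3^k)$ is a function of the remaining coordinates and $\hat\Omega$, which yields their conditional independence given $\mathcal{F}^{k-1}$; likewise $\Theta^k$ is a function of the $\tilde{\tilde\Omega}_{\cdot,1}$ coordinates while $p_1^k - p_3^k$ is a function of the $\tilde\Omega_\cdot$ coordinates and $\hat\Omega$, giving the conditional independence required in (2).

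The hard part will be making the measurability bookkeeping precise: one must argue carefully that the random matching, and hence $p_1^k - p_3^k$, depends on $\tilde\Omega$ only through the two-point factors $\tilde\Omega_k$ and not through the $\tilde{\tilde\Omega}_{k,r}$ factors --- this rests on the intensities in \eqref{eq:DefinitionIntensitiesExample}--\eqref{eq:pushtowards3or1aftermatch} being functions of $\tilde\omega^k$ alone --- and that the conditioning $\sigma$-algebra $\mathcal{F}^{k-1}$ genuinely factorizes, so that the elementary product-independence argument may be applied after conditioning.
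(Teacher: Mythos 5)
Your proposal is correct and follows exactly the route the paper intends: the paper gives no explicit proof, stating only that the lemma is a ``straightforward consequence'' of the product construction of $\tilde\Omega$ and $\tilde P$, the placement of $\Theta$, $F$, $\Lambda$, $M$ on disjoint factors, and the positivity condition \eqref{eq:qandpequiv}, and your argument is precisely that consequence spelled out (equivalence from strictly positive atoms on a finite product, the martingale property from unchanged marginals on the $\tilde{\tilde\Omega}_{k,1}$ and $\tilde{\tilde\Omega}_{k,2}$ factors, and the two conditional independence conditions from the fact that the intensities \eqref{eq:DefinitionIntensitiesExample}--\eqref{eq:pushtowards3or1aftermatch} depend on $\tilde\omega$ only through the two-point factors $\tilde\Omega_k$, so that $p_1-p_3$ lives on $\sigma$-algebras disjoint from those carrying $\Lambda M$ and $\Theta$).
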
}
{We then get the following result.}
\begin{proposition}\label{prop:existencemeasure}
	In the setting of Example \ref{example:MarketArbitrageFree}, there exist functions \black{$\tilde{\eta}_{i}^{31},  \tilde{\eta}_{i}^{13}, \tilde{\eta}_{i}^{21}, \tilde{\eta}_{i}^{23},\theta_i,\tilde\varsigma^{31}_i,\tilde\varsigma^{13}_i$, $i=1,2,$ appearing in {\eqref{eq:DefinitionIntensitiesExample}-\eqref{eq:pushtowards3or1aftermatch}}} such that the market is arbitrage-free.
	\end{proposition}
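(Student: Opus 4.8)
The plan is to exhibit a candidate equivalent martingale measure $Q=\tilde Q\ltimes\hat P$ of the product form \eqref{eq:ProducProbabilitiesQ} and to reduce the required absence of arbitrage to a one-step drift condition on $p_1-p_3$. By Lemma \ref{lem:assumptionQ}, any such $\tilde Q$ with $\tilde Q_k(\tilde\omega_i^k)>0$ is equivalent to $\tilde P$ and the associated $Q$ verifies Assumption \ref{eq:IndependencAssumption}; by Proposition \ref{prop:MarketArbitrageFree} it then remains only to choose the exogenous data $\tilde\eta_i^{\cdot},\theta_i,\tilde\varsigma_i^{\cdot}$ together with the weights $q_i^k:=\tilde Q_k(\tilde\omega_i^k)$ so that $p_1-p_3$ is a $Q$-martingale. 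By \eqref{eq:SimplificationWithImmediateBreakUp} and since immediate break-up forces $p_i^{k-1}=p_{iJ}^{k-1}$, this martingale property is equivalent to the family of one-step identities
\[
\mathbb E^{\tilde Q}\!\left[\Gamma_{1J}^{k}(\cdot,p^{k-1})-\Gamma_{3J}^{k}(\cdot,p^{k-1})\,\middle|\,\tilde{\mathcal F}^{k-1}\right]=p_{1J}^{k-1}-p_{3J}^{k-1},\qquad k=1,\dots,N.
\]

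First I would insert the choices \eqref{eq:pushtowards3or1}--\eqref{eq:ConditionG1} into \eqref{eq:GammaExample} and \eqref{eq:DefinitionF^1}. Writing $F_i:=F_i^{\tilde\omega}(p^{k-1})$ and using $f_{12}=f_{13}=f_{23}=0$, the symmetry \eqref{eq:conditionsigma} and \eqref{eq:allindicessame}, a direct computation should collapse the difference to
\[
\Gamma_{1J}^{k}-\Gamma_{3J}^{k}=(F_1-F_3)+\theta\,F_2\bigl(F_1\,g_{211}+F_3\,g_{322}\bigr)+2\,\theta\,F_1F_3\bigl(\tilde\varsigma^{31}-\tilde\varsigma^{13}+g_{311}\bigr),
\]
where $\theta,\tilde\varsigma^{31},\tilde\varsigma^{13}$ carry the dependence on $\tilde\omega^{k}\in\{\tilde\omega_1^k,\tilde\omega_2^k\}$ and $g_{211},g_{311},g_{322}$ are evaluated at $p_{1J}^{k-1}-p_{3J}^{k-1}$. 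Since $\tilde Q_k$ charges only the two atoms and, for fixed $v=p^{k-1}$, the integrand depends on $\tilde\omega^k$ solely through the $\tilde\Omega_k$-coordinate, the conditional expectation equals the convex combination $q_1^k D(\tilde\omega_1^k,\cdot)+q_2^k D(\tilde\omega_2^k,\cdot)$, with $D$ the right-hand side above. The identity to be solved thus reads $q_1^kD(\tilde\omega_1^k,v)+q_2^kD(\tilde\omega_2^k,v)=v_{1J}-v_{3J}$ for every value $v$ attained by $p^{k-1}$.

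Finally I would solve this system. The sign conditions \eqref{eq:f11}--\eqref{eq:ConditionG1} show that, for $v_{1J}>v_{3J}$, the $f$- and $g$-contributions to $D$ create a strict up-drift; I would use the purely exogenous down-drift parameters $\tilde\eta^{13},\tilde\eta^{23},\tilde\varsigma^{13}$ (and their index-swapped counterparts for $v_{1J}<v_{3J}$) to make $\tilde\omega_1^k$ a net ``bullish'' and $\tilde\omega_2^k$ a net ``bearish'' scenario, and then fix $q_1^k$ by an intermediate-value argument so that the two drifts cancel. The main obstacle is that a single history-independent choice of $(q^k,\tilde\eta_i^k,\theta_i^k,\tilde\varsigma_i^k)$ must enforce this balance simultaneously at every point of the support of $p^{k-1}$: as $D$ is genuinely nonlinear in $v$ and $f,g$ are prescribed only through the inequalities \eqref{eq:f11}--\eqref{eq:ConditionG1}, the per-state drift cannot be made $v$-independent outright. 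I would therefore proceed recursively in $k$, using the step-$k$ parameters to annihilate the drift on the finite support generated by steps $1,\dots,k-1$, and at each stage check that the resulting entries stay in $[0,1/2]$ so that \eqref{eq:EtaSumUpTo1}, \eqref{eq:ThetaCondition} and \eqref{eq:VarSigmaSumUpTto1} hold, while $q_i^k\in(0,1)$ keeps $\tilde Q\sim\tilde P$. The strict positivity in \eqref{eq:ConditionStrictlyPositive} and the strict inequalities \eqref{eq:qandpequiv} are what provide the slack needed to keep these finitely many equations solvable.
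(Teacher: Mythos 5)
Your proposal follows the paper's own route almost step for step: the reduction to a one-step drift condition on $p_1-p_3$ via Lemma \ref{lem:assumptionQ} and Proposition \ref{prop:MarketArbitrageFree}, the representation of the conditional expectation as a two-atom convex combination $q(k)\,a_1+(1-q(k))\,a_2$ in the $\tilde\Omega_k$-coordinate, and the closing straddling argument (make one state a net up-drift and the other a net down-drift, then pick the weight in $(0,1)$ to balance) are exactly what the paper does in \eqref{eq:ExplanationProposition1}--\eqref{eq:SecondCondition}, where the bullish/bearish scenarios are produced by the choices \eqref{eq:ratioeta} and \eqref{eq:choicesfortildeeta} together with taking $\theta_2$ small. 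Your collapsed expression for $\Gamma^k_{1J}-\Gamma^k_{3J}$ is correct: using $F_1+F_2+F_3=1$ it agrees with the paper's \eqref{eq:ai}, except that your coefficient $2$ on the $F_1F_3$ term is in fact the right one (a matched $1$--$3$ pair contributes through both of its members), while \eqref{eq:ai} omits it; this discrepancy is immaterial for the argument, since only the signs of the individual terms are used.

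The one place where you depart from the paper is the ``main obstacle'' you raise: that a single history-independent choice of $q(k)$ (and of $\tilde\eta_i,\theta_i,\tilde\varsigma_i$) must satisfy the balance equation simultaneously at every attainable value of $p^{k-1}$. The paper's proof does not engage with this point at all: it fixes a realization of $p^{k-1}$, solves the scalar equation for $q(k)$, and verifies the straddling inequalities \eqref{eq:FirstCondition}; that the resulting $q(k)=\bigl(p_1^{k-1}-p_3^{k-1}-a_2\bigr)/(a_1-a_2)$ is in general an $\tilde{\mathcal F}^{k-1}$-measurable quantity rather than the constant required by the product form \eqref{eq:ProducProbabilitiesQ} is left unaddressed. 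So your concern is legitimate, but be aware that your proposed remedy does not close it either: at stage $k$ the support of $p^{k-1}$ can contain up to $2^{k-1}$ points (one per history of the $\tilde\Omega_1\times\cdots\times\tilde\Omega_{k-1}$ coordinates), while each stage supplies only a fixed, bounded number of free scalars, so the system you would need to solve recursively is generically over-determined. In summary, up to the point that the paper's own proof actually reaches, your argument is a faithful --- indeed slightly more careful --- reproduction of it; the additional difficulty you flag is real, but it is resolved neither by your recursion nor by the paper.
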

\begin{proof}
	Let $\tilde{Q}$ be of the form \black{\eqref{eq:ProducProbabilitiesQ}}. \black{Moreover, define
\begin{align}
q(k):={\tilde Q}\left(\prod_{l=1}^{k-1}\tilde{\Omega}_{l}\times \tilde{\omega}^k_1 \times \prod_{l=k+1}^{N}\tilde{\Omega}_{l} \times \prod_{l=1}^{N}\tilde{\tilde{\Omega}}_{l} \right) =\tilde{Q}_k(\tilde{\omega}_1^k). \label{eq:DefinitionEquivalentMeasure}
\end{align}
for any $k=1,\dots,N$.}
	By Proposition \ref{prop:MarketArbitrageFree} and Lemma \ref{lem:assumptionQ}, we need to find $q(k)$ for $k \in \lbrace 1,...,N \rbrace$ such that
	\begin{align}
	& p_1^{{k-1}}-p_3^{{k-1}} \nonumber \\
	&=\mathbb{E}^Q\left[p^{k}_1- p^{k}_3 \vert \mathcal{F}^{{k-1}}\right]\nonumber \\
	&=\mathbb{E}^{\tilde{Q}}\left[ \Gamma_{1J}^{k}(\cdot, p^{{k-1}})-\Gamma_{3J}^{k}(\cdot, p^{{k-1}})\vert \tilde{\mathcal{F}}^{{k-1}}\right] \label{eq:ExplanationProposition1} \\
	&=\mathbb{E}^{\tilde{Q}_k}\left[ \Gamma_{1J}^{k}(\cdot, p^{{k-1}})-\Gamma_{3J}^{k}(\cdot, p^{{k-1}})\vert \tilde{\mathcal{F}}^{{k-1}}\right] \notag\\
 &=q(k) a_1 + (1-q(k)) a_2, \label{eq:ConditionConditionalExpectation}
\end{align}
where
\black{\small{\begin{align}
&a_i=\left( 1- \theta_i(k) \right)  \left[\black{F^1_{i}(p^{{k-1}})}-\black{F^{3}_{i}(p^{{k-1}})}\right] \notag \\
&\quad + \theta_i(k) \left(\black{F^{1}_{i}(p^{{k-1}})}\black{F^{1}_{i}(p^{{k-1}})} -\black{F^{3}_{i}(p^{{k-1}})}\black{F^{3}_{i}(p^{{k-1}})}\right) \nonumber \\
& \quad + \theta_i(k) \Big( \left[g_{2 1 1}\left( \black{F^1_i(p^{{k-1}})}-\black{F^{3}_i(p^{{k-1}})}\right) + \left(1 - g_{1 2 2}\left( \black{F^1_i(p^{{k-1}})} - \black{F^1_i(p^{{k-1}})}\right)\right) \right]\black{F^{1}_{i}(p^{{k-1}})}\black{F^{2}_{i}(p^{{k-1}})} \notag \\
& \qquad- \left[g_{2 3 3}\left( \black{F^1_i(p^{{k-1}})}-\black{F^{3}_i(p^{{k-1}})}\right) + \left(1 - g_{3 2 2}\left( \black{F^1_i(p^{{k-1}})} - \black{F^1_i(p^{{k-1}})}\right)\right) \right]\black{F^{3}_{i}(p^{{k-1}})}\black{F^{2}_{i}(p^{{k-1}})} \nonumber \\
& \quad \quad+ \left[\tilde{\varsigma}_{i}^{31}(k)+g_{3 1 1}\left( \black{F^1_{i}(p^{{k-1}})}-\black{F^{3}_{i}(p^{{k-1}})}\right) - \tilde{\varsigma}_{i}^{13}(k)-g_{1 1 3}\left( \black{F^1_{i}(p^{{k-1}})} -\black{F^3_{i}(p^{{k-1}})} \right)\right]\black{F^{1}_{i}(p^{{k-1}})}\black{F^{3}_{i}(p^{{k-1}})}\Big),\label{eq:ai}
\end{align}}
with
\begin{align}
	\black{F^1_{i}(p^{{k-1}})} &= p^{{k-1}}_{2J}\left(\black{\tilde\eta_{i}^{21}(k)+}f_{21}\left(p^{{k-1}}_{1J}-p^{{k-1}}_{3J}\right)\right)+ p^{{k-1}}_{3J}\left(\black{\tilde\eta_{i}^{31}(k)+}f_{31}\left(p^{{k-1}}_{1J}-p^{{k-1}}_{3J}\right)\right)\nonumber \\
	&\quad \black{+p^{{k-1}}_{1J} \left(1-f_{12}\left(p^{{k-1}}_{1J}-p^{{k-1}}_{3J}\right)-\tilde\eta_{i}^{13}(k)-f_{13}\left(p^{{k-1}}_{1J}-p^{{k-1}}_{3J}\right)\right)}\label{eq:F1i}
\end{align}
and
\begin{align}
	\black{F^3_{i}(p^{{k-1}})} &= p^{{k-1}}_{2J}\left(\black{\tilde\eta_{i}^{23}(k)+}f_{23}\left(p^{{k-1}}_{1J}-p^{{k-1}}_{3J}\right)\right)+ p^{{k-1}}_{1J}\left(\black{\tilde\eta_{i}^{13}(k)+}f_{13}\left(p^{{k-1}}_{1J}-p^{{k-1}}_{3J}\right)\right)\nonumber \\
	&\quad \black{+p^{{k-1}}_{3J} \left(1-f_{32}\left(p^{k-1}_{1J}-p^{{k-1}}_{3J}\right)-\tilde\eta_{i}^{31}(k)-f_{31}\left(p^{{k-1}}_{1J}-p^{{k-1}}_{3J}\right)\right)}\label{eq:F3i}
\end{align}
for $i=1,2$. \\ Note that \eqref{eq:ExplanationProposition1} comes from \eqref{eq:SimplificationWithImmediateBreakUp}, whereas \eqref{eq:ConditionConditionalExpectation} follows by  \eqref{eq:GammaExample} and  \eqref{eq:DefinitionIntensitiesExample}-\eqref{eq:pushtowards3or1aftermatch}}.

By \eqref{eq:ConditionConditionalExpectation} it follows that \black{$q(k)$ has to satisfy}
\begin{align}
	q(k)=\frac{p_{1}^{{k-1}}-p_{3}^{{k-1}}-a_2}{a_1-a_2}= \frac{a_2+p^{{k-1}}_3-p^{{k-1}}_1}{a_2-a_1},
\end{align}
and that 
\begin{align} \label{eq:FirstCondition}
	a_1< p^{{k-1}}_1-p^{{k-1}}_3  \quad \text{ and }\quad  a_2>p^{{k-1}}_1-p^{{k-1}}_3  
\end{align}
or
\begin{align} \label{eq:SecondCondition}
	a_1> p^{{k-1}}_1-p^{{k-1}}_3 \quad \text{ and }\quad a_2 < p^{{k-1}}_1-p^{{k-1}}_3 
\end{align}
to guarantee that $q(k) \in (0,1)$.
Without loss of generality we assume from now on that $p^{{k-1}}_1-p^{{k-1}}_3 >0$, as identical considerations hold with opposite sign if $p^{{k-1}}_1-p^{{k-1}}_3<0$. The goal is to find conditions on \black{$\tilde{\eta}_{i}^{31}(k),  \tilde{\eta}_{i}^{13}(k), \tilde{\eta}_{i}^{21}(k), \tilde{\eta}_{i}^{23}(k) ,\theta_i(k),\tilde\varsigma^{31}_i(k),\tilde\varsigma^{13}_i(k)$, $i=1,2$}, such that \eqref{eq:FirstCondition} holds. In order to do that, we focus on $a_2$. 

{We first derive} conditions which guarantee that $a_2>p^{{k-1}}_1-p^{{k-1}}_3$. \black{Choose $\tilde\eta^{31}_2(k)$ and $\tilde\eta^{13}_2(k)$ such that
\begin{equation}\label{eq:ratioeta}
\frac{\tilde\eta^{13}_2(k)}{\tilde\eta^{31}_2(k)}<\frac{p^{{k-1}}_{3}}{p^{{k-1}}_{1}},
\end{equation}
and (to simplify the following computations) $\tilde\eta^{21}_2(k)=\tilde\eta^{23}_2(k)=0$. Also fix $\tilde\varsigma^{31}_2(k) > \tilde\varsigma^{13}_2(k)$. Then by \eqref{eq:F1i}-\eqref{eq:F3i}, by assumption \eqref{eq:f11} and since $p^{{k-1}}_1-p^{{k-1}}_3 >0$, we have that
\small{\begin{align}\notag
F^1_{2}(p^{{k-1}})-F^3_{2}(p^{{k-1}})&=p^{{k-1}}_{2J}f_{21}\left(p^{{k-1}}_{1J}-p^{{k-1}}_{3J}\right)+ p^{{k-1}}_{3J}\left(\tilde\eta_{2}^{31}(k)+f_{31}\left(p^{{k-1}}_{1J}-p^{{k-1}}_{3J}\right)\right) +p^{{k-1}}_{1J} \left(1-\tilde\eta_{2}^{13}(k)\right) \notag \\
&\quad- p^{{k-1}}_{1J}\tilde\eta_{2}^{13}(k)-p^{{k-1}}_{3J} \left(1-f_{32}\left(p^{{k-1}}_{1J}-p^{{k-1}}_{3J}\right)-\tilde\eta_{2}^{31}(k)-f_{31}\left(p^{{k-1}}_{1J}-p^{{k-1}}_{3J}\right)\right)\label{eq:F1minusF2i} \\
& \ge p^{{k-1}}_{1J}\left(1-2\tilde\eta_{2}^{13}(k)\right)-p^{{k-1}}_{3J}\left(1-2\tilde\eta_{2}^{31}(k)\right) > p^{{k-1}}_{1J}-p^{{k-1}}_{3J},\label{eq:F1biggerF3} 
\end{align}}
where the last inequality follows from \eqref{eq:ratioeta}.\\
Similarly, by \eqref{eq:ai}, \eqref{eq:F1biggerF3}, by assumptions \eqref{eq:ConditionStrictlyPositive}-\eqref{eq:ConditionG1} and again since $p^{{k-1}}_1-p^{{k-1}}_3 >0$, we get
\begin{align}
a_2\ge&\left( 1- \theta_2(k) \right) \left [\black{F^1_{2}(p^{{k-1}})}-\black{F^{3}_{2}(p^{{k-1}})}\right] \notag \\
& \quad + \theta_2(k) \Big( F^{2}_{2}(p^{{k-1}})\left(F^{1}_{2}(p^{{k-1}})-F^{3}_{2}(p^{{k-1}}) \right)+ \left(\tilde{\varsigma}_{2}^{31}(k) - \tilde{\varsigma}_{2}^{13}(k)\right)\black{F^{1}_{2}(p^{{k-1}})}\black{F^{3}_{2}(p^{{k-1}})}\Big),\notag \\
>&\left( 1- \theta_2(k) \right)  [\black{F^1_{2}(p^{{k-1}})}-\black{F^{3}_{2}(p^{{k-1}})}] \notag.
\end{align}
}
 In order to guarantee that $a_2 > p^{{k-1}}_1- p^{{k-1}}_3,$ we then choose $\theta_2(k)$ such that
\begin{equation*}
	1- \theta_2(k)  > \frac{p^{{k-1}}_1-p^{{k-1}}_3}{\black{F^1_{2}(p^{{k-1}})}- \black{F^{3}_{2}(p^{{k-1}})}},
\end{equation*}
By \eqref{eq:F1biggerF3} this is possible if $\theta_2$ is small enough.\\
Next, we derive conditions which guarantee that $a_2< p^{{k-1}}-1-p^{{k-1}}_3$. \black{Choose 
\begin{equation}\label{eq:choicesfortildeeta}
\tilde\eta_{2}^{31}(k)=\tilde\eta_{2}^{21}(k)=0 \text{ and }\tilde\eta_{2}^{13}(k)=\tilde\eta_{2}^{23}(k)=1/2.
\end{equation} 
From \eqref{eq:F1i}-\eqref{eq:F3i} we derive
\begin{align}
	\black{F^1_{2}(p^{{k-1}})}-\black{F^{3}_{2}(p^{{k-1}})} &= p^{{k-1}}_{2J}f_{21}\left(p^{{k-1}}_{1J}-p^{{k-1}}_{3J}\right)+p^{{k-1}}_{3J}f_{31}\left(p^{{k-1}}_{1J}-p^{{k-1}}_{3J}\right)+ p^{{k-1}}_{1J}\left(1-\black{\frac{1}{2}}\right)\notag \\
	&\quad -\frac{1}{2}p^{{k-1}}_{1J}-\frac{1}{2}p^{{k-1}}_{2J}-p^{{k-1}}_{3J}\left(1- f_{31}\left(p^{{k-1}}_{1J}-p^{{k-1}}-{1J}\right)- f_{32}\left(p^{{k-1}}_{1J}-p^{{k-1}}_{1J}\right)\right)\notag \\
	&\le 0,  \label{eq:differenceFnegative}
\end{align} 
where the first equality and \eqref{eq:differenceFnegative} follow from \eqref{eq:choicesfortildeeta} and \eqref{eq:f11}-\eqref{eq:sumnomorethanonehalf}, respectively. Looking now at \eqref{eq:ai} with $i=2$, we can then choose $ \theta_2(k)$ small enough so that  $a_2 < p^{k-1}_1-p^{{k-1}}_3$.} This concludes the proof.
\end{proof}

\black{\begin{remark}
The parameters $\tilde{\eta}_{2}^{31}(k),  \tilde{\eta}_{2}^{13}(k), \tilde{\eta}_{2}^{21}(k), \tilde{\eta}_{2}^{23}(k) ,\theta_2(k),\tilde\varsigma^{31}_2(k),\tilde\varsigma^{13}_2(k)$ given in the proof of Proposition \ref{prop:existencemeasure} guarantee the existence of an equivalent martingale measure in the setting of Example \ref{example:MarketArbitrageFree}. Other choices are of course possible under which an equivalent martingale measure still exists: for example, one can show that \eqref{eq:choicesfortildeeta} can be relaxed. 
\end{remark}}

\subsection{Numerical simulations}\label{sec:simulations}
In this section we provide some numerical simulations of the model defined in Equations \eqref{eq:BubbleEvolution} and \eqref{eq:Xp}, where $p_1$ and $p_3$ are governed by the dynamical system introduced in Section \ref{sec:main} with matching and type change probabilities having dynamics given by \eqref{eq:ExampleFunction1}-\eqref{eq:ExampleFunction3}. In particular, we consider the following setting. We discretize the time interval $[0,T]$ in $N$ subintervals. The processes $\Lambda=(\Lambda^n)_{i=0,\dots,N}$ and $M=(M^n)_{n=0,\dots,N}$ appearing in \eqref{eq:DefMarketPriceProcess} are binomial models, defined on the probability space $(\tilde \Omega, \tilde{\mathcal{F}}, \tilde P)$, which approximate two geometric Brownian motions with drift equal to zero and volatilities $\sigma_{\Lambda}>0$ and $\sigma_M>0$. That is, $\Lambda^n:=Y^n \Lambda^{n-1}$, $n=1,\dots,N$ and constant $\Lambda^0>0$, where $Y^n$ is a random variable defined on $(\tilde \Omega, \tilde{\mathcal{F}})$
	such that 
	\begin{equation*}
	\tilde{P}(Y^n=u)=p, \quad 	\tilde{P}(Y^n=1/u)=1-p,
	\end{equation*}
where 
$$
u := e^{\sigma_{\Lambda} T/N}, \qquad p = \frac{1-d}{u-d},
$$
and $M$ is defined analogously. The functions $f_{ij}$ and $g_{ijj}$,  $i,j=1,2,3,$ and \black{$i \ne j$}, which appear in \eqref{eq:ExampleFunction1} and \eqref{eq:ExampleFunction3}, respectively, are defined by 
\begin{equation}\label{eq:functionssimulations}
	f_{ij}(x):=g_{ijj}(x):=\begin{cases}
	\frac{1}{3} (x^+)^{0.4}  \quad &\text{if $i=2,j=1$ or $i=3,j=2$ }  \\
	\frac{1}{3} (-x^-)^{0.4}  \quad &\text{if $i=1,j=2$ or $i=2,j=3$ }  \\
	\left(\frac{1}{3} (x^+)^{0.4}\right)^2  \quad &\text{if $i=3,j=1$}  \\
	\left(\frac{1}{3} (-x^-)^{0.4}\right)^2  \quad &\text{if $i=1,j=3$ }  
	\end{cases}
\end{equation}
for any $x\in \mathbb{R}$. Note that the choice of these functions are coherent with our framework, where investors might have a type upgrade or downgrade when $p_1-p_3>0$ or $p_1-p_3<0$, respectively. Moreover, since  $|p_1-p_3|\le 1$, a direct switch from optimistic to pessimistic views or vice versa is more unlikely to happen. Also note that, since the processes $\tilde\eta_{ij}$ and $\tilde\varsigma_{ilj}$, $i,j,l=1,2,3$, are bounded by $1/4$, \eqref{eq:functionssimulations} guarantees that the values in both \eqref{eq:ExampleFunction1} and \eqref{eq:ExampleFunction3} are bounded by $1/2$. In this way, the sum for fixed $i$ does not exceed $1$.
Furthermore, we have:
\begin{enumerate}
\item The process $\Theta=(\Theta^n)_{n=0,\dots,N}$ in \eqref{eq:Xp} is defined by $\Theta^n:= 2/\pi \arctan(Z_{\Theta}^n)$, where $Z_{\Theta}=(Z_{\Theta}^n)_{n=0,\dots,N}$ is a binomial model on $(\tilde \Omega, \tilde{\mathcal{F}}, \tilde P)$ approximating a geometric Brownian motion with drift equal to zero and volatility $\sigma_{\Theta}>0$. Note that this choice guarantees that the dynamics of $\Theta$ stay in $(0,1)$.
\item The process $\theta=(\theta^n)_{n=0,\dots,N}$ in \eqref{eq:ExampleFunction2} is defined by $\theta^n:= 2/\pi \arctan(Z_{\theta}^n)$, where $Z_{\theta}=(Z_{\theta}^n)_{n=0,\dots,N}$ is a binomial model on $(\tilde \Omega, \tilde{\mathcal{F}}, \tilde P)$ approximating a geometric Brownian motion with drift equal to zero and volatility $\sigma_{\theta}>0$.
\item For $i,j=1,2,3,$ the processes $\tilde\eta_{ij}=(\tilde\eta_{ij}^n)_{n=0,\dots,N}$ in \eqref{eq:ExampleFunction1} are defined by $\tilde\eta_{ij}^n:= 2/\pi \arctan(Z_{\eta,i,j}^n)$, where $Z_{\eta,i,j}=(Z_{\eta,i,j}^n)_{n=0,\dots,N}$ is a binomial model on $(\tilde \Omega, \tilde{\mathcal{F}}, \tilde P)$ approximating a geometric Brownian motion with drift equal to zero and volatility $\sigma_{\eta}>0$.
\item For $i,j=1,2,3,$ the processes $\tilde\varsigma_{ij}=(\tilde\varsigma_{ij}^n)_{n=0,\dots,N}$ in \eqref{eq:ExampleFunction3} are defined by $\tilde\varsigma_{ij}^n:= 2/\pi \arctan(Z_{\varsigma,i,j}^n)$, where $Z_{\varsigma,i,j}=(Z_{\varsigma,i,j}^n)_{n=0,\dots,N}$ is a binomial model on $(\tilde \Omega, \tilde{\mathcal{F}}, \tilde P)$ approximating a geometric Brownian motion with drift equal to zero and volatility $\sigma_{\varsigma}>0$.
 \end{enumerate}
The processes $Z_{\Theta}, Z_{\theta}, Z_{\eta,i,j}, Z_{\varsigma,i,j}$, $i,j=1,2,3$, introduced above, are all independent of each other. We choose parameters $N=100$, $T=1$, $p^0_1=p_2^0=p_3^0=1/3$, $\Lambda^0=M^0=1$, $\sigma_{\Lambda}=\sigma_M=0.3$, $\Theta^0=5$, $\sigma_\Theta=0.2$, $\eta^0_{ij}=\varsigma^0_{ijl}=0.2$, $i,j,l=1,2,3$, $\sigma_{\eta}=\sigma_{\varsigma}=0.4$, $\theta=0.5$, $\sigma_{\theta}=0.2$. We also set the reversion process in  \eqref{eq:BubbleEvolution} to be constant, i.e. specifically, $\kappa^n=0.01$ for all $n=1,\dots,N$. In Figure 1 we show some trajectories of  the process $(p^n_1-p_3^n)_{n = 0, \dots,N}$ on the right and the bubble process $\beta$ on the left. We decided to divide the trajectories in three couples of panels for the reader's convenience.
We note the following:
\begin{itemize}
\item In our model, bubbles can be negative: this is a difference with respect to the classical martingale theory of bubbles introduced by Cox and Hobson \cite{CoxHobson} and Loewenstein and Willard \cite{LoewensteinWillard} and mainly developed by Jarrow, Protter et al. \cite{JarrowProtter2009}, \cite{JarrowProtter2011},  \cite{JarrowProtter2007},   \cite{JarrowProtter2010}, \cite{JarrowKchiaProtter} and \cite{Biagini}, where the fundamental value is always smaller than the market value. However, this is not in contrast with absence of arbitrages and is in agreement with real markets, where asset prices might sometimes be underestimated, see for example \cite{filardo2011impact} and \cite{siegel2003asset}.
\item Bubbles increase and also burst with different speeds. In particular, our model allows both for hard landing (i.e. steep and fast decrease) and soft landing (i.e. soft and slow decrease) after the burst of the bubble, see \cite{becker2002soft} and \cite{wan2018prevention} for an analysis on hard and soft landing. 
\item Due to the presence of the functions $f_{ij}$ and $g_{ijj}$,  $i,j,l=1,2,3$, in our model investors may change type independently or after a match according to the current phase of the market, that is, if prices are rising, they have an higher probability to switch to more optimistic views, and vice versa. This drives both the ascending and the descending phase of the bubble. However, there are some exogenous factors, which we model on the space $(\tilde \Omega, \tilde{\mathcal{F}}, \tilde{P})$, that may influence the bubble as well, and may anticipate or delay the burst. In particular, they can make the increase of the bubble before the burst (respectively the decrease after the burst) more or less steep. 
\end{itemize}

\begin{figure}\label{fig:trajectories}
\centering
\begin{minipage}{.5\textwidth}
  \centering
  \includegraphics[scale=0.3]{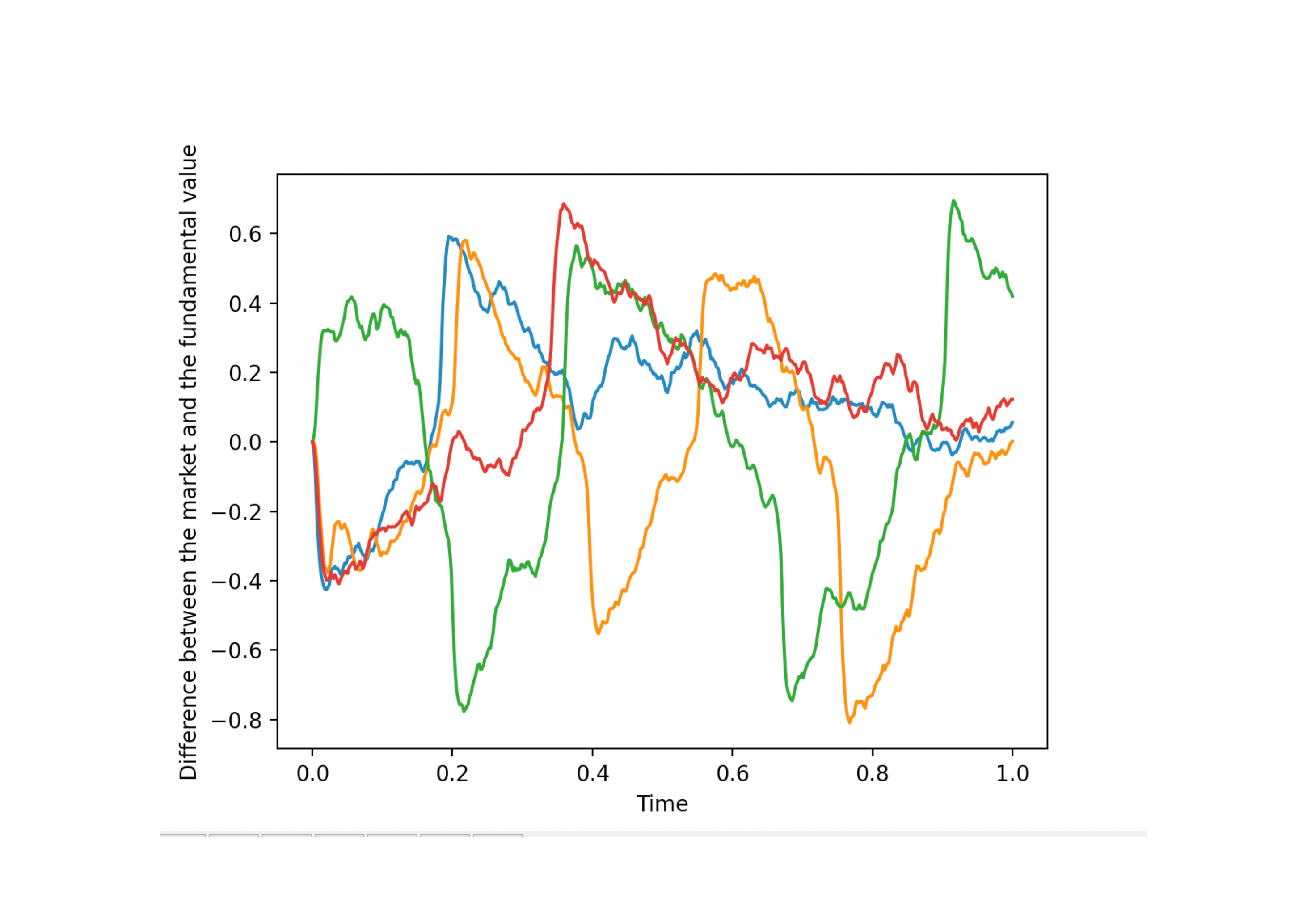}
\end{minipage}%
\begin{minipage}{.5\textwidth}
  \centering
  \includegraphics[scale=0.3]{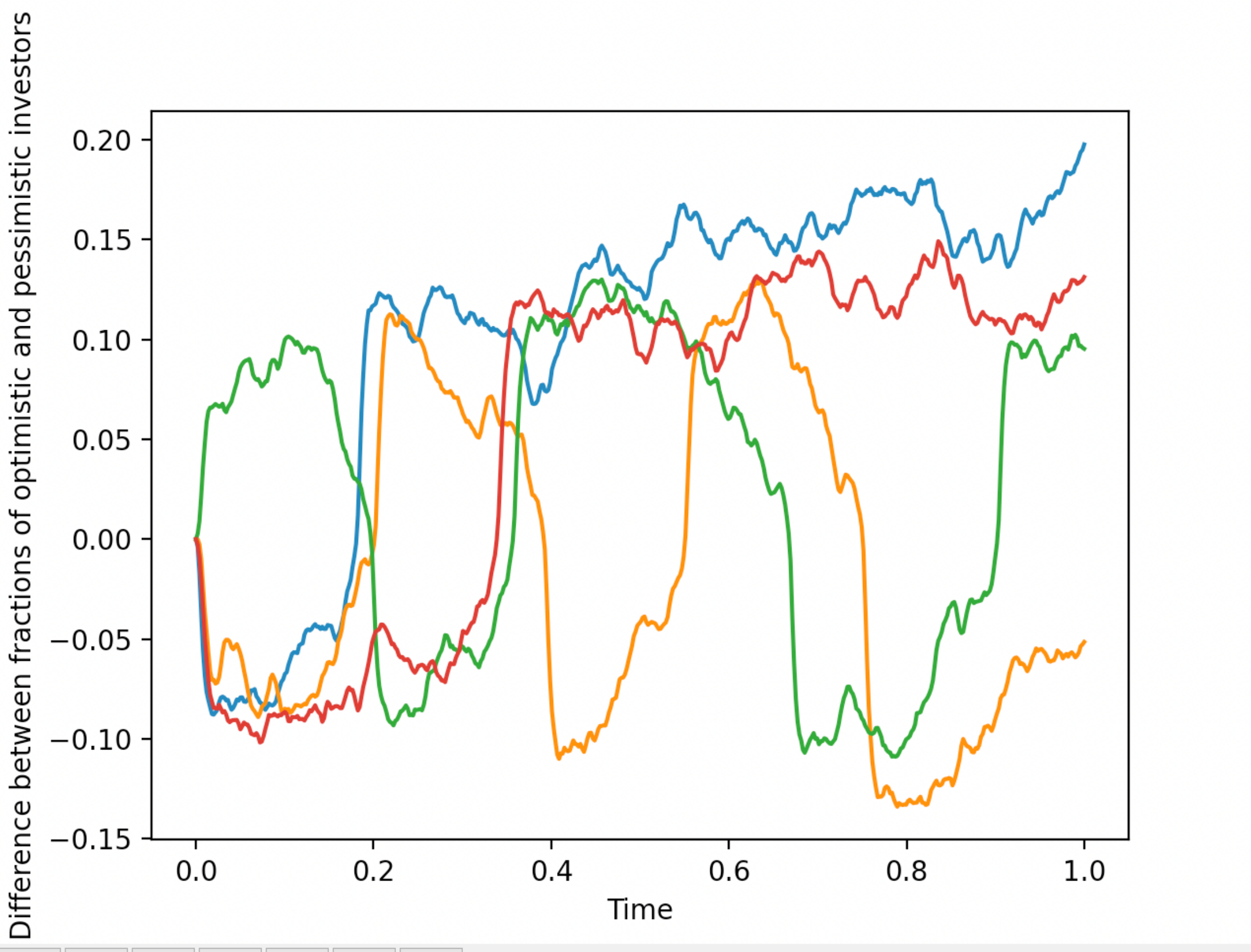}
\end{minipage}
\centering
\begin{minipage}{.5\textwidth}
  \centering
  \includegraphics[scale=0.3]{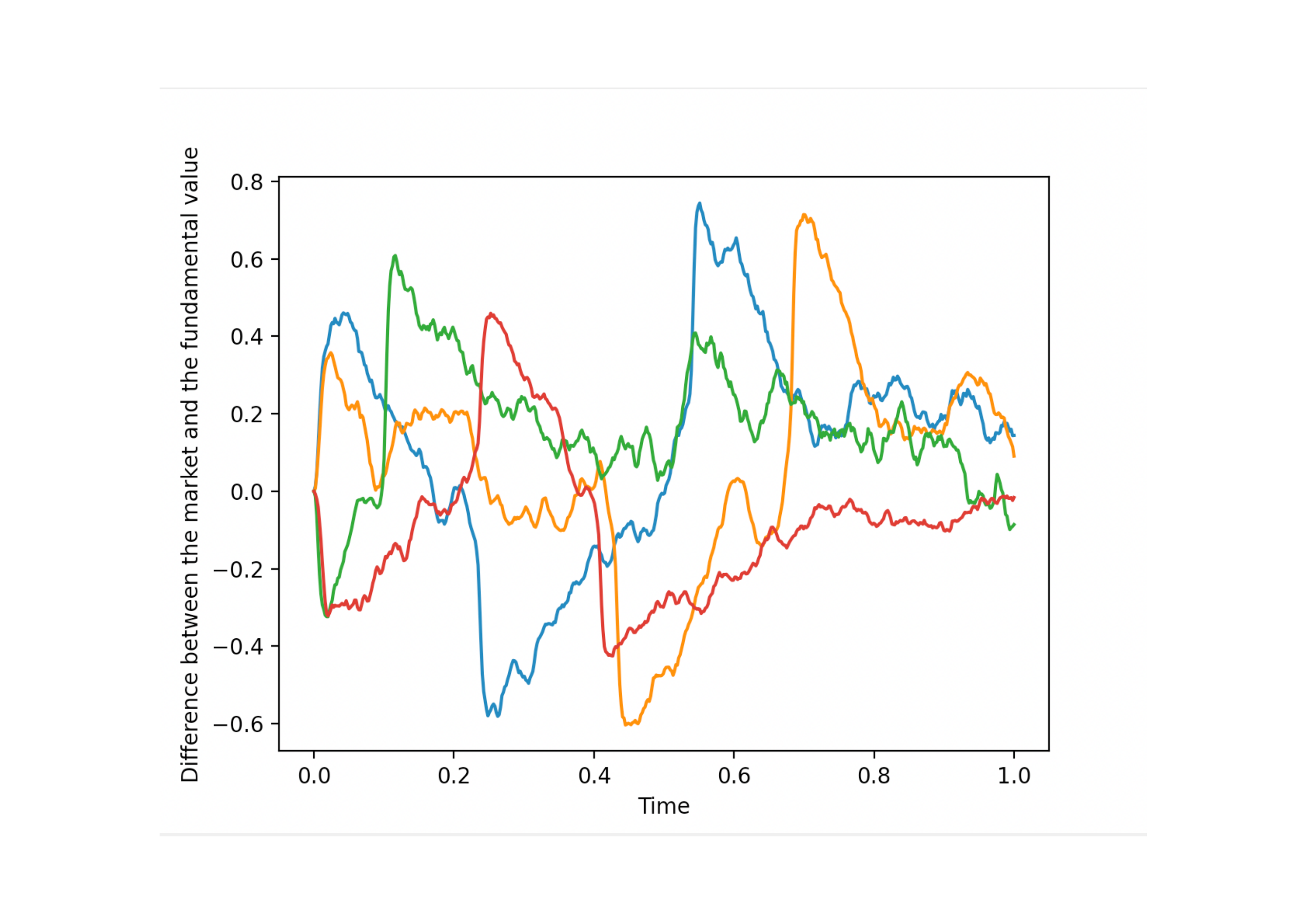}
\end{minipage}%
\begin{minipage}{.5\textwidth}
  \centering
  \includegraphics[scale=0.3]{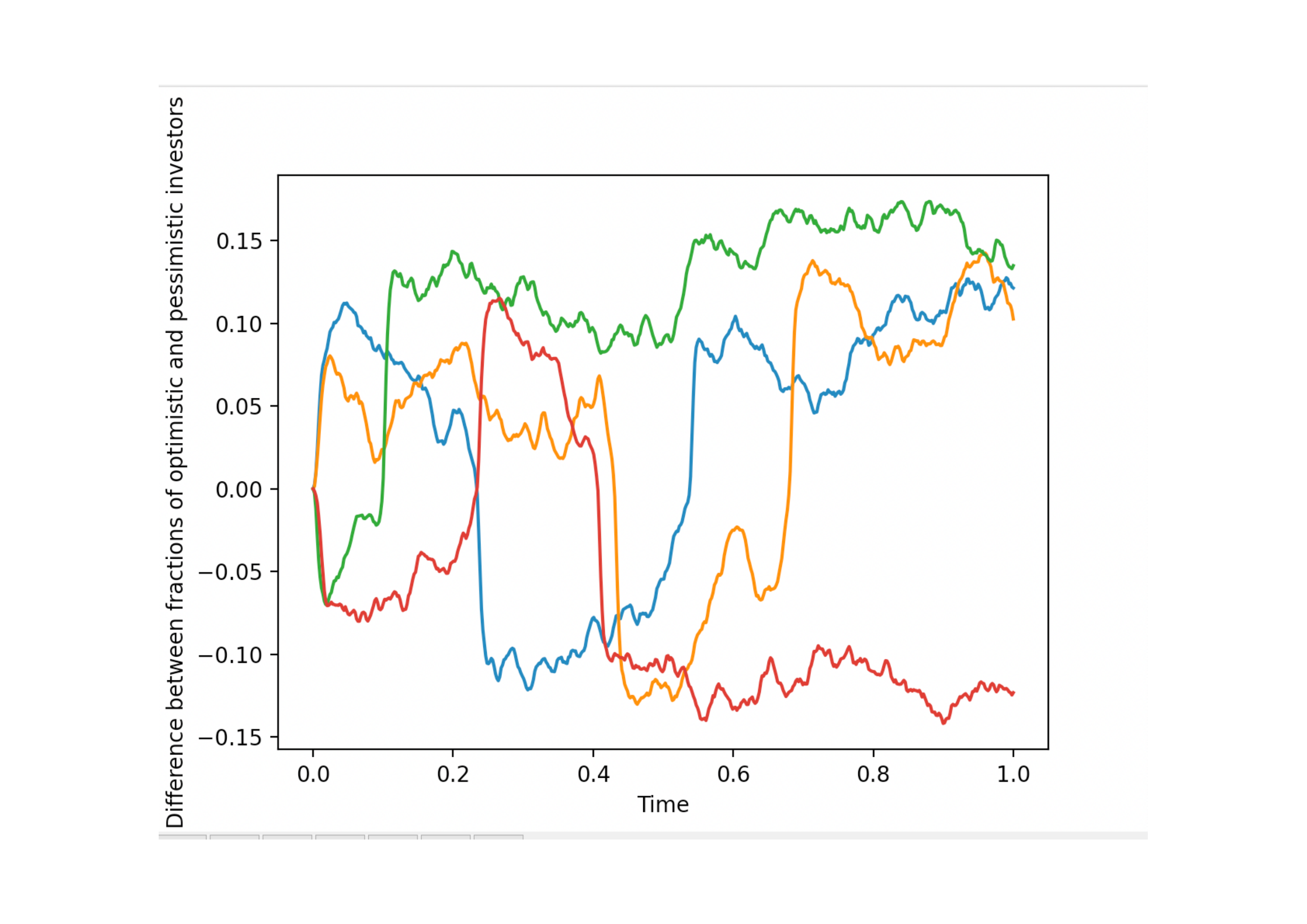}
\end{minipage}
\centering
\begin{minipage}{.5\textwidth}
  \centering
  \includegraphics[scale=0.3]{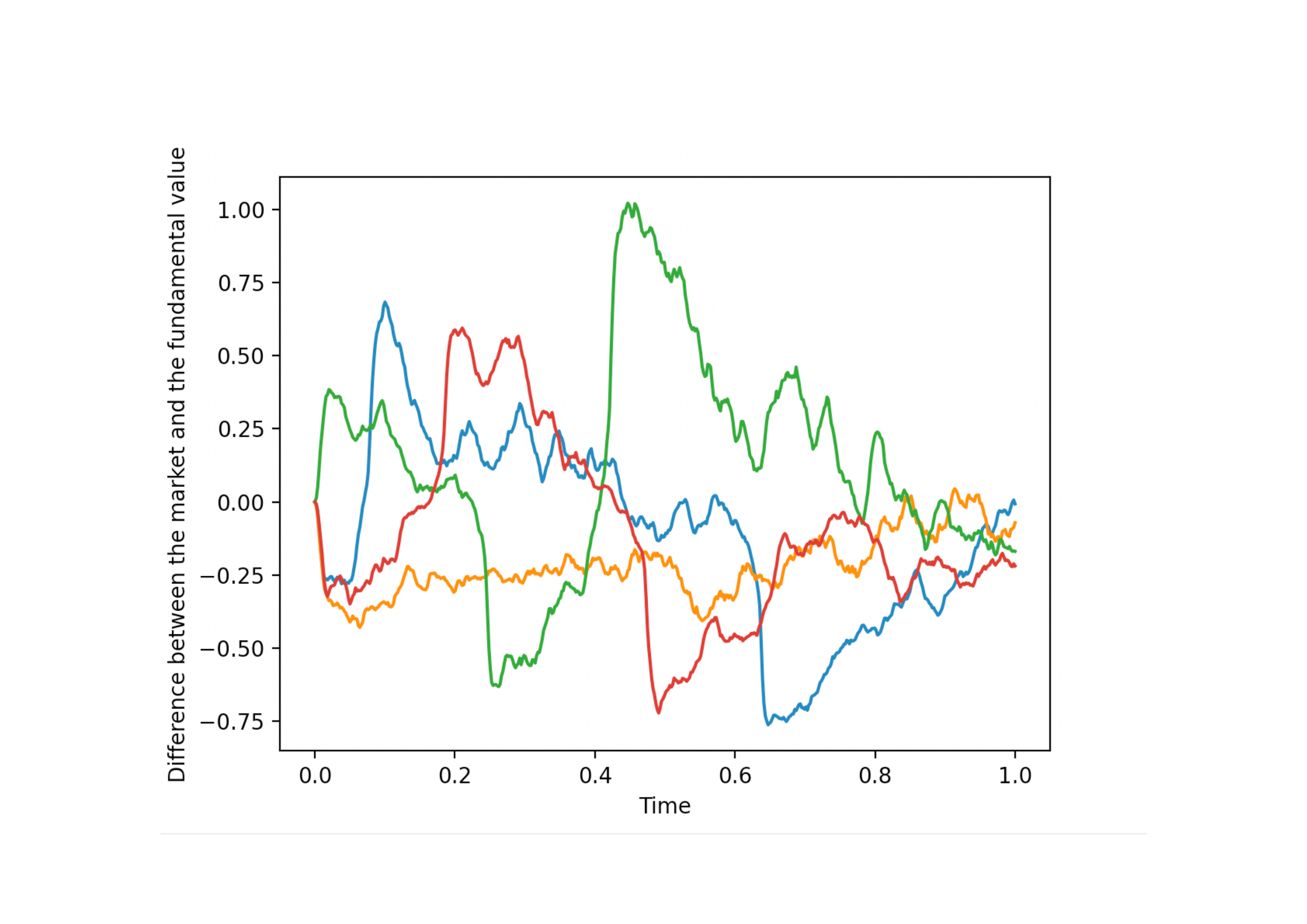}
\end{minipage}%
\begin{minipage}{.5\textwidth}
  \centering
  \includegraphics[scale=0.3]{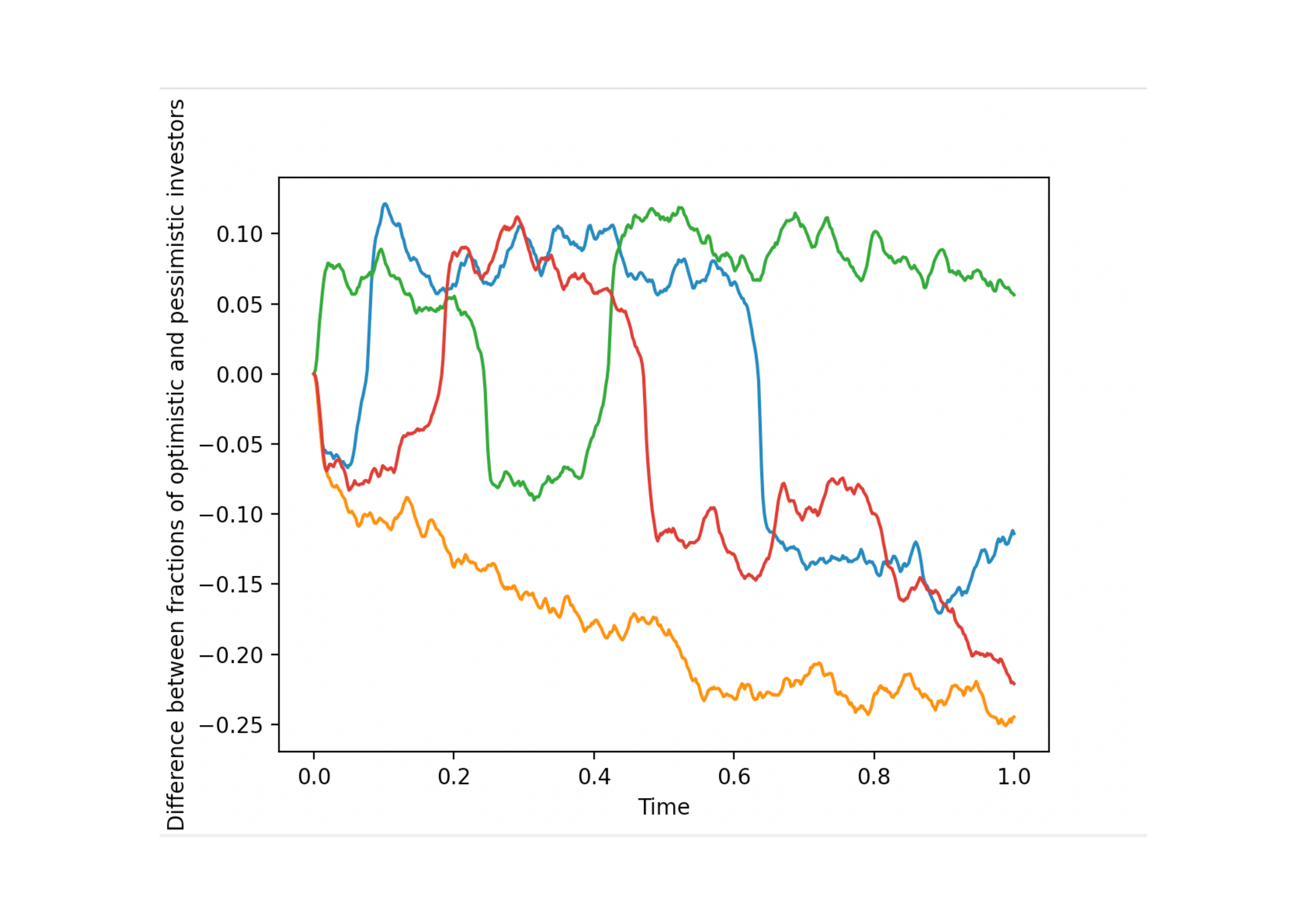}
\end{minipage}
\caption{Some trajectories of the process $(p^n_1-p_3^n)_{n = 0, \dots,N}$, in the right panels, and of the bubble in \eqref{eq:BubbleEvolution} in the left panels. For any couple of panels, a trajectory of a given color on the right drives the bubble trajectory of the same color on the left, together with the realizations of the processes $\Lambda$, $M$ and $\theta$. Parameters are:  $N=100$, $T=1$, $p^0_1=p_2^0=p_3^0=1/3$, $\Lambda^0=M^0=1$, $\sigma_{\Lambda}=\sigma_M=0.3$, $\Theta^0=5$, $\sigma_\Theta=0.2$, $\eta^0_{ij}=\varsigma^0_{ijl}=0.2$, $i,j,l=1,2,3$, $\sigma_{\eta}=\sigma_{\varsigma}=0.4$, $\theta=0.5$, $\sigma_{\theta}=0.2$, $\kappa=0.01$.}
\end{figure}

In Figure 2 we plot the function $t \to \frac{1}{n} \sum_{i=1}^n \beta_t^i$, where $\beta_t^i$ is the value for the $i$-th simulated trajectory of the bubble, $i=1,\dots,n$.  The left and right panels show the sample average for $n=100\,000$  and $n=1\,000\,000$, respectively.  

In Figure 3 we plot instead the function above for a sample of $1\,000\,000$ trajectories in the case when the values of the fractions of investors are $p^0_1=4/9, p_2^0=2/9, p_3^0=1/3$ \footnote{We let anyway the bubble start from zero here: one can assume a sudden jump of $p^1$ at initial time, or that $\theta=0$ before time $0$.}. In this case, we see that at the beginning, the bubble blows up on average, because of the actions of the functions $f_{ij}$ and $g_{ijj}$,  $i,j, l=1,2,3$. However, as the number of pessimistic investors changing their views starts to decrease, the bubble slows down, and then bursts on average because of the action of the mean reverting term $-\kappa \beta$ in  \eqref{eq:BubbleEvolution}.

Figure 3 clearly shows that, when the fraction of pessimistic and optimistic investors are different, the bubble is not a martingale under the measure for which we simulate the processes. However, we find a measure under which the expectation of the bubble at time $t_1$ is very close to its value at $t_0$ by tuning the parameters. In particular, this measure is identified by letting the binomial processes driving $\tilde\eta_{13}$ and $\tilde\varsigma_{133}$ increasing with probability $0.95$ and the ones driving $\tilde\eta_{31}$ and $\tilde\varsigma_{311}$ increasing with probability $0.1$, at the first time step. We see that the average value of $\beta^1$ is $0.1$ under the first measure and close to $10^{-5}$ under the new measure.
\begin{figure}\label{fig:averages}
\centering
\begin{minipage}{.5\textwidth}
  \centering
  \includegraphics[scale=0.3]{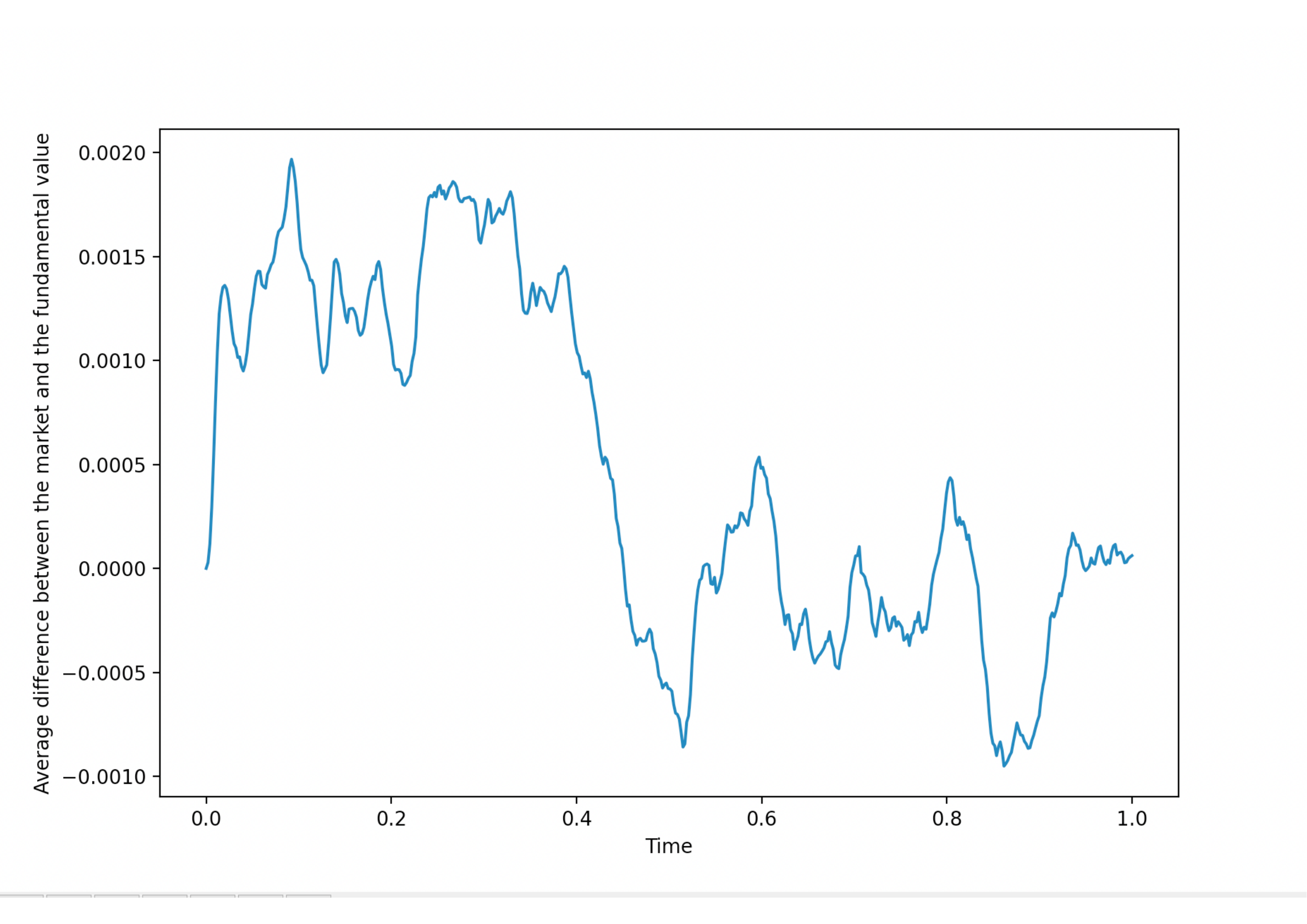}
\end{minipage}%
\begin{minipage}{.5\textwidth}
  \centering
  \includegraphics[scale=0.35]{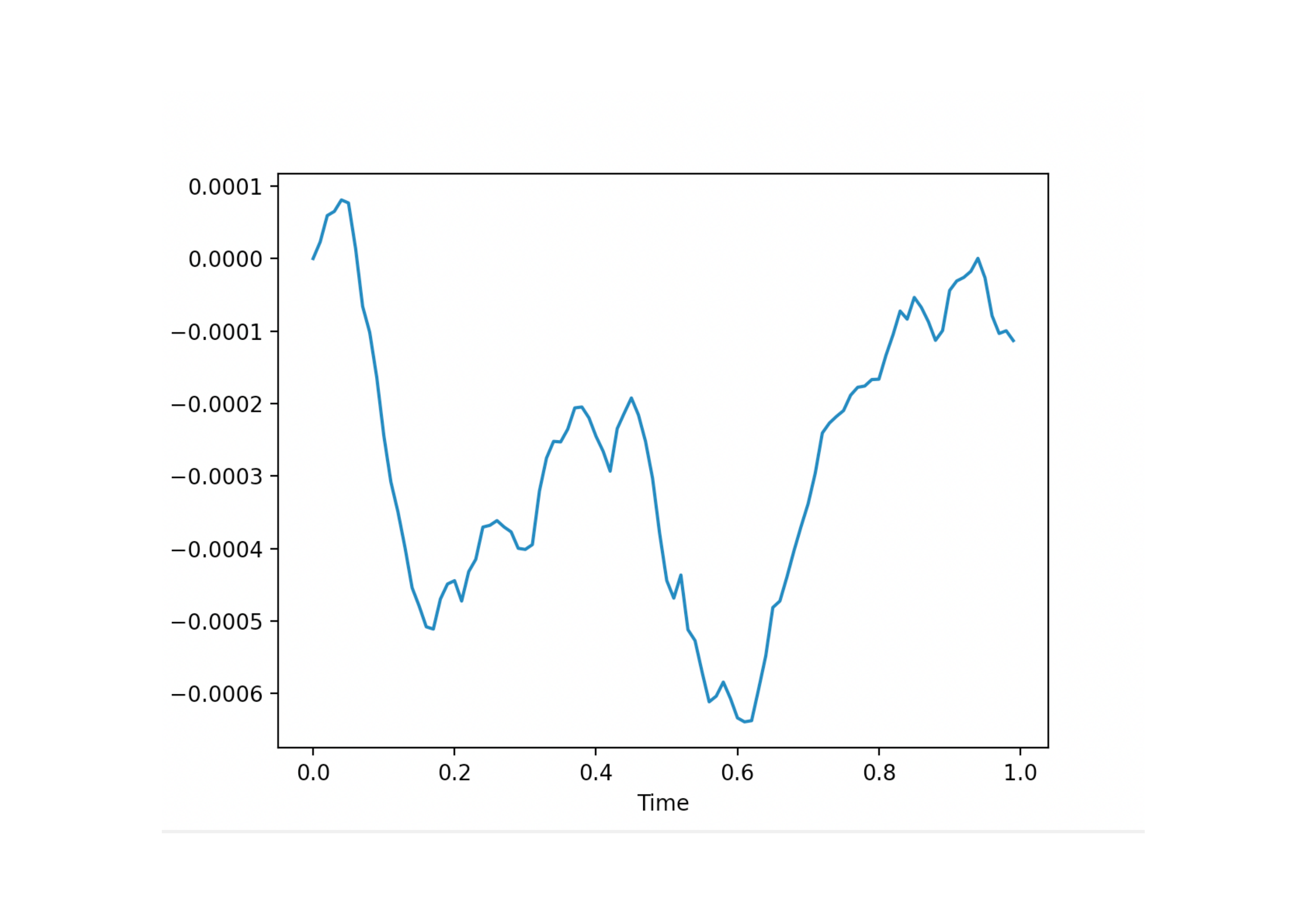}
\end{minipage}
\caption{Average of a sample of trajectories of the bubble process in \eqref{eq:BubbleEvolution}. The left and right panels show the sample average for $100\,000$  and $1\,000\,000$ simulations, respectively.   Parameters are:  $N=100$, $T=1$, $p^0_1=p_2^0=p_3^0=1/3$, $\Lambda^0=M^0=1$, $\sigma_{\Lambda}=\sigma_M=0.3$, $\Theta^0=5$, $\sigma_\Theta=0.2$, $\eta^0_{ij}=\varsigma^0_{ijl}=0.2$, $i,j,l=1,2,3$, $\sigma_{\eta}=\sigma_{\varsigma}=0.4$, $\theta=0.5$, $\sigma_{\theta}=0.2$, $\kappa=0.01$.}
\end{figure}

\begin{figure}\label{fig:averagesmoreoptimistic}
  \includegraphics[scale=0.5]{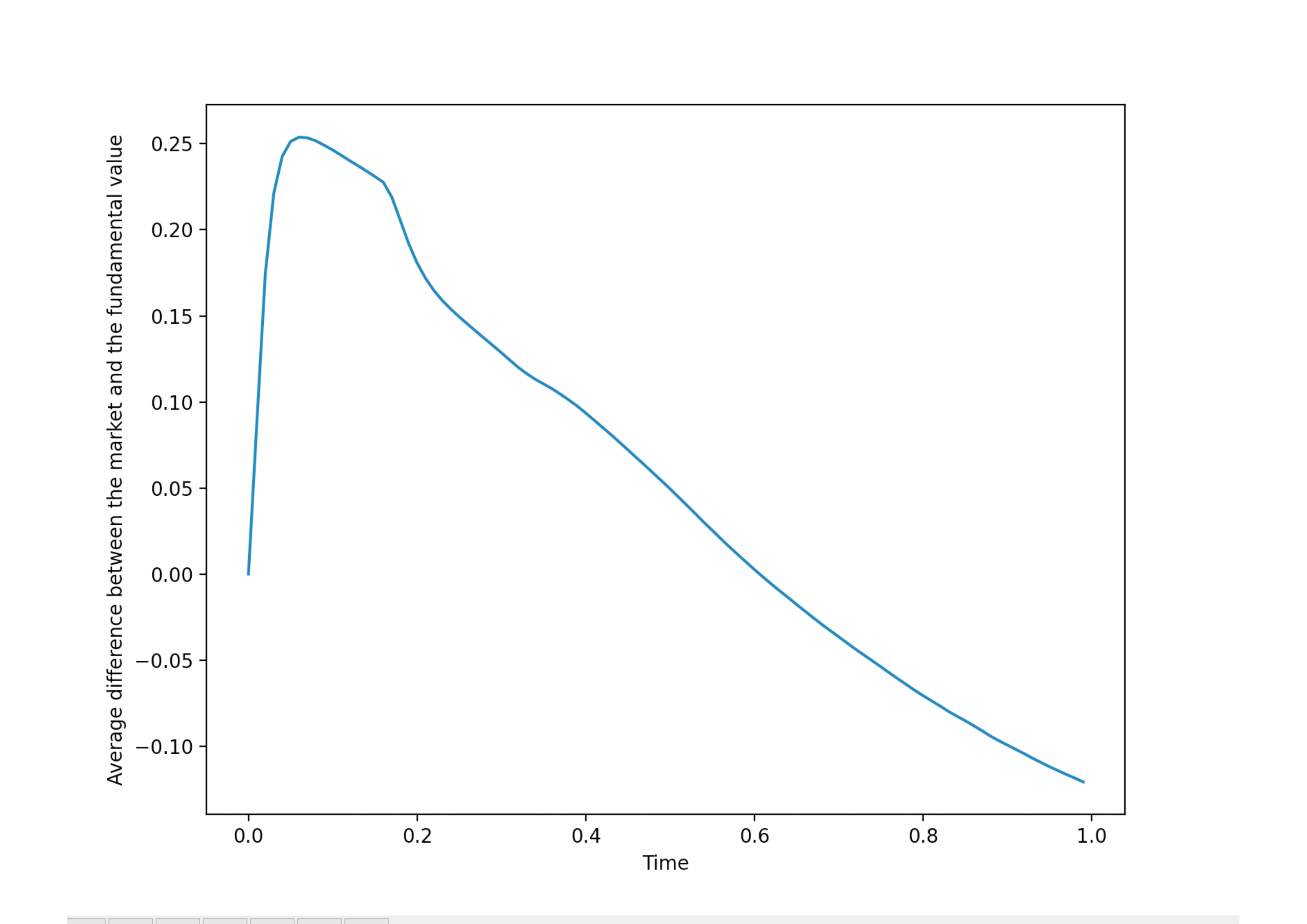}
\caption{Average of a sample of trajectories of the bubble process in \eqref{eq:BubbleEvolution} of size $1\,000\,000$.   Parameters are:  $N=100$, $T=1$, $p^0_1=4/9, p_2^0=2/9, p_3^0=1/3$, $\Lambda^0=M^0=1$, $\sigma_{\Lambda}=\sigma_M=0.3$, $\Theta^0=5$, $\sigma_\Theta=0.2$, $\eta^0_{ij}=\varsigma^0_{ijl}=0.2$, $i,j,l=1,2,3$, $\sigma_{\eta}=\sigma_{\varsigma}=0.4$, $\theta=0.5$, $\sigma_{\theta}=0.2$, $\kappa=0.01$.}
\end{figure}

\section{Conclusions}
We have modeled the formation of asset price bubbles by introducing a random matching mechanism among agents in a discrete time version of the model in \cite{JarrowProtter2012}. In order to do it, we extend results of \cite{RandomMatchingDiscrete} to a stochastic setting. In particular, via the introduction of a Markov kernel, we are able to construct the probability space where the asset price process is defined as the product of the space $\hat \Omega$ of the random matching and the space $\tilde \Omega$ of the factors which may influence the transition probabilities. This approach allows to isolate and model the self exciting mechanism governing the blow up of the bubble and the exogenous factors impacting the bursting phase of the bubble. In Section \ref{sec:simulations} we present numerical experiment showing how this approach is able to capture important behavioral features of asset price bubbles. 

\begin{appendix}
\section{Proof of Proposition \ref{OnlyMatchingDiscrete}}	\label{sec:AppendixProof}
For every fixed $\tilde{\omega} \in \tilde{\Omega}$ we construct the measure $\hat{P}^{\tilde{\omega}}$ as in the proof of Lemma 7 in \cite{RandomMatchingDiscrete}. Furthermore, as the construction of the space $\hat{\Omega}$ in Lemma 7 in \cite{RandomMatchingDiscrete} is independent of the input functions, we can also follow their approach. Then, the definitions of $\Omega, \mathcal{F}$ and $P_0$ in Points 1. and 2. in the Proposition allow us to finish the proof.
For the readers's convenience we show the proof in details in the following.
\begin{proof}
Let $(I, \cal{I}_0, \lambda_0)$ be the hyperfinite counting probability space with its Loeb space $(I, \cal{I}, \lambda)$. The proof consists of four steps.\\
\textbf{Step 1}: For each $k \in S$, $\tilde{\omega} \in \tilde \Omega$ and $\hat{p} \in \leftidx{^*}{\hat{\Delta}}$, let 
$$
b_k^{0}(\tilde{\omega})=b_k({\tilde{\omega},0,\hat{p}}):=1-\sum_{r \in S} \blue{\theta}_{kr}({\tilde{\omega},0,\hat{p}})
$$
and $I_k=\lbrace i \in I: \alpha^0(i)=k, \pi^0(i)=i \rbrace$. 

For each $i \in I_k$, $\tilde{\omega} \in \tilde{\Omega}$ and $\hat{p} \in \leftidx{^*}{\hat{\Delta}}$ define a probability $\zeta_i^{\hat{p},\tilde{\omega}}$ on $S \cup \lbrace J \rbrace $ such that 
$$
\zeta_i^{\hat{p},\tilde{\omega}}(l):=\theta_{kl}({\tilde{\omega},0,\hat{p}}) \text{ for } l \in S\text{ and }\zeta_i^{\hat{p},\tilde{\omega}}(J):=\delta_J(l) \text{ for }l \in S \cup \lbrace J \rbrace.
$$
 
 Let ${\hat{\Omega}}_0=(S \cup \lbrace J \rbrace)^{I}$ be the internal set of all the internal functions from $I$ to $S \cup \lbrace J \rbrace$. For any $\tilde{\omega} \in \tilde{\Omega}$ and $\hat{p} \in \leftidx{^*}{\hat{\Delta}}$, also let $\mu_0^{\hat{p},\tilde{\omega}}$ be the internal product probability measure $\prod_{i \in I}\zeta_i^{\hat{p},\tilde{\omega}}$ on $(\blue{\hat{\Omega}}_0, \mathcal{A}_0)$, where $\mathcal{A}_0$ is the internal power set of $\blue{\hat{\Omega}}_0$. For each fixed $\blue{\hat{\omega}_0} \in \blue{\hat{\Omega}}_0$ and $k,l \in S$, the agents in the set $\bar{A}_{kl}^{\blue{\hat{\omega}_0}}=\lbrace i \in I_k: \blue{\hat{\omega}_0}(i)=l \rbrace$ are now supposed to be matched with agents in $\bar{A}_{lk}^{\blue{\hat{\omega}_0}}$. 
 
\textbf{Step 2}: The issue now is that $\bar{A}_{lk}^{\blue{\hat{\omega}_0}}$ and $\bar{A}_{kl}^{\blue{\hat{\omega}_0}}$ might fail to have the same internal cardinality, for $k \neq l$, and $\bar{A}_{kk}^{\blue{\hat{\omega}_0}}$ may fail to have an even internal cardinality, which would allow an internal full matching on $\bar{A}_{kk}^{\blue{\hat{\omega}_0}}$. Scope of the second step of the proof is to fix such a problem. For $k,l \in S$ with $k \neq l$, let
\begin{equation*}
	C_{kl}^{\blue{\hat{\omega}_0}}=\lbrace A_{kl}: A_{kl} \subseteq \bar{A}_{kl}^{\blue{\hat{\omega}_0}}, A_{kl} \text{ is internal and }\vert A_{kl} \vert=\min \lbrace \vert \bar A_{kl}^{\blue{\hat{\omega}_0}}\vert, \vert \bar A_{lk}^{\blue{\hat{\omega}_0}}\vert \rbrace  \rbrace.
\end{equation*}
For any $k \in S$, let $C_{kk}^{\blue{\hat{\omega}_0}}$ be the family of the sets of the form $\bar{A}^{\blue{\hat{\omega}_0}}_{kk} \setminus \lbrace i \rbrace$ for $i \in \bar{A}^{\blue{\hat{\omega}_0}}_{kk}$ if $ \vert \bar{A}^{\blue{\hat{\omega}_0}}_{kk} \vert$ is odd, and $C_{kk}^{\blue{\hat{\omega}_0}}$ the set with one element $\bar{A}^{\blue{\hat{\omega}_0}}_{kk}$ if $ \vert \bar{A}^{\blue{\hat{\omega}_0}}_{kk} \vert$ is even. Set  $C^{\blue{\hat{\omega}_0}}:=\prod_{k,l \in S} C_{kl}^{\blue{\hat{\omega}_0}}$. Define an internal probability measure $\mu^{  \blue{\hat{\omega}_0}}$ on $C^{\blue{\hat{\omega}_0}}$ with internal power set $\cal{C}^{\omega_0}$ by letting $\mu^{ \blue{\hat{\omega}_0}}(\textbf{A})= \frac{1}{\vert C^{\blue{\hat{\omega}_0}} \vert}$ for $\textbf{A} \in C^{\blue{\hat{\omega}_0}}$. 

Let 
\begin{equation*}
	\blue{\hat{\Omega}}_1:=\lbrace (A_{kl})_{k,l \in S}: A_{kl} \subseteq I \text{ and } A_{kl} \text{ is internal, where } k,l \in S \rbrace.
\end{equation*}
The probability measure $\mu^{\blue{\hat{\omega}_0}}$ can be trivially extended to the common sample space $\blue{\hat{\Omega}}_1$ with its internal power set by letting $\mu^{\blue{\hat{\omega}_0}}(\textbf{A})=0$ for $\textbf{A} \in \blue{\hat{\Omega}}_1 \setminus C^{\blue{\hat{\omega}_0}}$. \\
Given the hyperfinite internal probability space $(\blue{\hat{\Omega}_0}, \mathcal{A}_0, \mu_0^{\hat{p},\tilde{\omega}})$ and the internal transition probability $\mu^{\blue{\hat{\omega}_0}}$, $\blue{\hat{\omega}_0} \in \blue{\hat{\Omega}}_0$, we can define the internal probability measure $\mu_1^{\hat{p},\tilde{\omega}}$ on $\blue{\hat{\Omega}_0 \times \hat{\Omega}_1}$ with its internal power set by letting $\mu_1^{\hat{p},\tilde{\omega}}(\blue{\hat{\omega}_0}, \textbf{A})=\mu_0^{\hat{p},\tilde{\omega}}(\blue{\hat{\omega}_0}) \times \mu^{\blue{\hat{\omega}_0}}(\textbf{A})$ for any $\blue{\hat{\omega}_0} \in \blue{\hat{\Omega}}_0$ and $\textbf{A} \in \blue{\hat{\Omega}_1}$. \\
\textbf{Step 3}: For any fixed $\blue{\hat{\omega}_0} \in \blue{\hat{\Omega}}_0$ and $\textbf{A}^{ \blue{\hat{\omega}_0}}=(A_{kl})_{k,l \in S} \in C^{\blue{\hat{\omega}_0} }$, we consider the internal partial matchings on $I$ that match agents from $A_{kl}$ to $A_{lk}$. Let $B_k^{ \blue{\hat{\omega}_0}}=I_k \setminus \left( \bigcup_{l \in S} A_{kl}^{ \blue{\hat{\omega}_0}} \right) $, which is the set of initially unmatched agents who remain unmatched. Let $\bar{B}_k^{ \blue{\hat{\omega}_0}}$ denote the set $\lbrace i \in I_k: \blue{\hat{\omega}_0}(i)=J \rbrace$; then it is clear that $B_k^{ \blue{\hat{\omega}_0}}= \bar{B}_k^{ \blue{\hat{\omega}_0}} \cup \bigcup_{l \in S }\left( \bar{A}_{kl}^{ \blue{\hat{\omega}_0}} \setminus \bar{A}_{kl}^{ \blue{\hat{\omega}_0}}\right)$. Let $B^{ \blue{\hat{\omega}_0}}= \bigcup_{k =1}^K B_k^{ \blue{\hat{\omega}_0}}$. For each $k \in S$, let $\blue{\hat{\Omega}}^{ \blue{\hat{\omega}_0}, \textbf{A}^{ \blue{\hat{\omega}_0}}}_{kk}$ be the internal set of all the internal full matchings on $A_{kk}^{ \blue{\hat{\omega}_0}}$. Let $\mu_{kk}^{ \blue{\hat{\omega}_0}, \textbf{A}^{ \blue{\hat{\omega}_0}}}$ be the internal counting probability measure on $\blue{\hat{\Omega}}^{ \blue{\hat{\omega}_0}, \textbf{A}^{ \blue{\hat{\omega}_0}}}_{kk}$. For $k,l \in S$ with $k<l$, let $\mu^{ \blue{\hat{\omega}_0}, {A}^{ \blue{\hat{\omega}_0}}}_{kl}$ be the internal set of all the internal bijections from $A_{kl}^{ \blue{\hat{\omega}_0}}$ to $A_{lk}^{ \blue{\hat{\omega}_0}}$. Let $\mu^{ \blue{\hat{\omega}_0}, \textbf{A}^{ \blue{\hat{\omega}_0}}}_{kl}$ be the internal counting probability on $A_{kl}^{ \blue{\hat{\omega}_0}}$. Let $\blue{\hat{\Omega}}_2$ be the internal set of all the internal partial matchings from $I$ to $I$. Define $\blue{\hat{\Omega}}_{2}^{ \blue{\hat{\omega}_0}, \textbf{A}^{ \blue{\hat{\omega}_0}}}$ to be the set of $\phi \in \blue{\hat{\Omega}}_2$, such that
\begin{enumerate}
	\item the restriction $\phi \vert_H= \pi^0\vert_H$, where $H$ is the set $\lbrace i: \pi^0(i) \neq i \rbrace$ of initially matched agents. 
	\item $\lbrace i \in I_k: \phi(i)=i \rbrace = B_{k}^{ \blue{\hat{\omega}}_0}$ for each $k \in S$. 
	\item The restriction $\phi \vert_{A_{kk}^{ \blue{\hat{\omega}}_0}} \in \blue{\hat{\Omega}}_{kk}^{ \blue{\hat{\omega}}_0, \textbf{A}^{ \blue{\hat{\omega}}_0}}$ for $k \in S$.
	\item For $k,l \in S$ with $k<l$, $\phi \vert_{A_{kl}^{ \blue{\hat{\omega}_0}}} \in \blue{\hat{\Omega}}_{kk}^{ \blue{\hat{\omega}_0}, \textbf{A}^{ \blue{\hat{\omega}_0}}}$.
	\end{enumerate}
	We now define an internal probability measure $\mu_2^{ \blue{\hat{\omega}_0}, \textbf{A}^{ \blue{\hat{\omega}_0}}}$ on $\blue{\hat{\Omega}}_2$ such that
	\begin{enumerate}
		\item for $\phi \in \blue{\hat{\Omega}}_2^{ \blue{\hat{\omega}_0}, \textbf{A}^{ \blue{\hat{\omega}_0}}}$,
		\begin{equation*}
			\mu_2^{ \blue{\hat{\omega}_0}, \textbf{A}^{ \blue{\hat{\omega}_0}}}(\phi)=\prod_{1 \leq k \leq l \leq K, A_{kl}^{ \blue{\hat{\omega}_0}} \neq \emptyset} \mu_{kl}^{ \blue{\hat{\omega}_0}, \textbf{A}^{ \blue{\hat{\omega}_0}}}(\phi \vert_{A_{kl}^{ \blue{\hat{\omega}_0}}}).
		\end{equation*}
		\item For $\phi \notin \blue{\hat{\Omega}}_2^{ \blue{\hat{\omega}_0}, \textbf{A}^{ \blue{\hat{\omega}_0}}}$, $\mu_2^{ \blue{\hat{\omega}_0}, \textbf{A}^{ \blue{\hat{\omega}_0}}}(\phi)=0$.
	\end{enumerate}
	The probability measure $\mu_2^{ \blue{\hat{\omega}_0}, \textbf{A}^{ \blue{\hat{\omega}_0}}}$ can be trivially extended to the sample space $\blue{\hat{\Omega}}_2$. \\
  For any $\tilde \omega \in \tilde \Omega$, define an internal probability measure $\blue{\hat{P}}_0^{\hat{p}}(\tilde{\omega})$ on $\blue{\hat{\Omega}=\hat{\Omega}_0 \times \hat{\Omega}_1 \times \hat{\Omega}_2}$ with the internal power set $\mathcal{\hat{F}}_0$ by letting 
  \begin{align} \label{eq:MarkovKernelProposition}
	\hat{P}_0^{\hat{p}}(\tilde{\omega})((\blue{\hat{\omega}}_0,\textbf{A}, \blue{\hat{\omega}}_2))=\begin{cases}
\mu_1^{\hat{p},\tilde{\omega}}(\blue{\hat{\omega}}_0, \textbf{A}) \times \mu_2^{\blue{\hat{\omega}}_0, \textbf{A}}(\blue{\hat{\omega}}_2) & \text{ if } \textbf{A} \in C^{\blue{\hat{\omega}}_0}\\
0 & \text{ otherwise. }
\end{cases}
\end{align}
The construction in \eqref{eq:MarkovKernelProposition} provides the Markov kernel from $\tilde \Omega$ to $\hat \Omega$ as in Point 2 of the Proposition. From now on, denote $\hat P_0^{\hat{p},\tilde{\omega}}:=\hat P_0^{\hat{p}}(\tilde \omega)$ for any $\tilde \omega \in \tilde \Omega$.

For $(i, \blue{\hat{\omega}}) \in I \times \blue{\hat{\Omega}}$, let $\hat{\pi}(i,(\blue{\hat{\omega}}_0, \textbf{A}, \blue{\hat{\omega}}_2))=\blue{\hat{\omega}}_2(i)$ and 
\[\hat{g}(i, \blue{\hat{\omega}})=\begin{cases}
\alpha^0(\hat{\pi}(i,\blue{\hat{\omega}})) & \text{ if } \hat{\pi}(i,\blue{\hat{\omega}}) \neq i\\
J& \text{ if } \hat{\pi}(i, \blue{\hat{\omega}}) = i.
\end{cases}\]
Denote the corresponding Loeb probability spaces of the internal probability spaces $(\blue{\hat{\Omega}}_0, \mathcal{\hat{F}}_0, \blue{\hat{P}}_0^{\hat{p}, \tilde{\omega}})$ and $(I \times\blue{\hat{\Omega}}_0, \cal{I}_0 \otimes \mathcal{\hat{F}}_0, \lambda_0 \otimes \blue{\hat{P}}_0^{\hat{p}, \tilde{\omega}})$ by $(\blue{\hat{\Omega}}, \mathcal{\hat{F}}, \blue{\hat{P}}^{\hat{p}, \tilde{\omega}})$ and $(I \times \blue{\hat{\Omega}}, \cal{I} \boxtimes \mathcal{\hat{F}}, \lambda \boxtimes \blue{\hat{P}}^{\hat{p}, \tilde{\omega}}),$ respectively. Set
$$
\bar{\Omega}=\lbrace (\blue{\hat{\omega}}_0, \textbf{A}, \blue{\hat{\omega}}_2) \in \blue{\hat{\Omega}}: \blue{\hat{\omega}}_0 \in \blue{\hat{\Omega}}_0, \textbf{A} \in C^{  \blue{\hat{\omega}}_0}, \blue{\hat{\omega}}_2 \in \Omega_2^{  \blue{\hat{\omega}}_0, \textbf{A}} \rbrace.
$$
Then by construction of $\blue{\hat{P}}^{\hat{p}, \tilde{\omega}}_0$, it is clear that $\blue{\hat{P}}_0^{\hat{p}, \tilde{\omega}}(\bar{\Omega})=1$. Moreover, $\hat{\pi}$ is by construction an internal matching and satisfies Point 4. of the proposition.
It is then possible to define $(\Omega, \mathcal{F}_0,P_0)$ as stated in Point 1. and 2. of the proposition and consider the corresponding Loeb probability space, see Point 6. Furthermore, we can extend $\hat{\pi}$ and $\hat{g}$ to $\Omega$ as stated in Point 3. and 5. \\
\textbf{Step 4}: We now prove {Points 5. and 6.} of the proposition. Define an internal process $\hat{f}$ from $I \times \blue{\hat{\Omega}}$ to $S \cup \lbrace J \rbrace$ such that for any $(i,\blue{\hat{\omega}}) \in I \times \blue{\hat{\Omega}}$ we have
\[\hat{f}(i, \blue{\hat{\omega}})=\begin{cases}
\blue{\hat{\omega}}^0(i) & \text{ if } \pi^0(i) = i\\
\alpha^0(\pi^0(i))& \text{ if } \pi^0(i) \neq i.
\end{cases}\]
Fix from now on $\hat{p} \in \leftidx{^*}{\hat{\Delta}}$ and $\tilde{\omega} \in \tilde{\Omega}$. It is clear that if $\alpha^0(i)=k$ and $\pi^0(i)=i$, then 
\begin{equation*}
	\blue{\hat{P}}^{\hat{p}, \tilde{\omega}}(\hat{f}_i=l) \simeq \blue{\hat{P}}^{\hat{p}, \tilde{\omega}}_0(\hat{f}_i=l)=\mu_0^{\hat{p}, \tilde{\omega}}(\blue{\hat{\omega}}_0(i)=l)= \zeta_i^{\hat{p}, \tilde{\omega}}(l)={\theta}_{kl}({\tilde{\omega},0,\hat{p}}),
\end{equation*}
which means that 
$$
\blue{\hat{P}}^{\hat{p}, \tilde{\omega}}(\hat{f}_i=l) = \leftidx{^{\circ}}{{\theta}_{kl}({\tilde{\omega},0,\hat{p}})}.
$$
With similar arguments it follows that 
$$
\blue{\hat{P}}^{\hat{p}, \tilde{\omega}}(\hat{f}_i=J)=\leftidx{^{\circ}}{{b}_{k}({\tilde{\omega},0,\hat{p}})}.
$$
 Moreover,  $\hat{f}_i$ and $\hat{f}_j$ are independent random variables on the sample space $(\blue{\hat{\Omega}}, \mathcal{\hat{F}}, \blue{\hat{P}}^{\hat{p}, \tilde{\omega}})$ for any $i \neq j$ in $I$. 
The exact law of large numbers as in Lemma 1 in \cite{RandomMatchingDiscrete} implies that, under the scenario of a current distribution $\hat p$ and of a realization $\tilde \omega \in \tilde \Omega$, it holds
$$
\lambda(\lbrace \alpha^0(i)=k, \pi^0(i)=i, \blue{\hat{\omega}}_0(i)=l \rbrace)=\leftidx{^{\circ}}{\hat{\rho}_{kJ}} \cdot \leftidx{^{\circ}}{{\theta}_{kl}}({\tilde{\omega},0,\hat{p}})
$$ and 
$$
\lambda(\lbrace \alpha^0(i)=k, \pi^0(i)=i, \blue{\hat{\omega}}_0(i)=J \rbrace)=\leftidx{^{\circ}}{\hat{\rho}_{kJ}} \cdot \leftidx{^{\circ}}{{b}_{k}}({\tilde{\omega},0,\hat{p}}),
$$
for $\blue{\hat{P}}^{\hat{p}, \tilde{\omega}}$-almost all $\omega=(\blue{\hat{\omega}}_0, \textbf{A}, \blue{\hat{\omega}}_2) \in \blue{\hat{\Omega}}$ and for any $k,l \in S$, 
which means that
\begin{equation} \label{eq:ExtensionStaticDiscreteLemma1}
	\frac{\vert \bar{A}_{kl}^{  \blue{\hat{\omega}}_0} \vert }{\hat{M}} \simeq \hat{\rho}_{kJ} {\theta}_{kl}({\tilde{\omega},0,\hat{p}}) \simeq \hat{\rho}_{lJ} {\theta}_{lk}({\tilde{\omega},0,\hat{p}}) \simeq  \frac{\vert \bar{A}_{lk}^{  \blue{\hat{\omega}}_0} \vert }{\hat{M}} \text{ and } \frac{\vert \bar{B}_{k}^{  \blue{\hat{\omega}}_0} \vert }{\hat{M}} \simeq \hat{\rho}_{kJ} {b}_{k}({\tilde{\omega},0,\hat{p}}).
\end{equation}

Let $\tilde{\Omega}^{\hat{p},\tilde{\omega}}$ be the set of $\blue{\hat{\omega}}=(\blue{\hat{\omega}}_0, \textbf{A}, \blue{\hat{\omega}}_2) \in \blue{\hat{\Omega}}$ such that \eqref{eq:ExtensionStaticDiscreteLemma1} holds. Then $\blue{\hat{P}}^{\hat{p},\tilde{\omega}}(\tilde{\Omega}^{\hat{p},\tilde{\omega}})=1$, and hence $\blue{\hat{P}}^{\hat{p},\tilde{\omega}}(\tilde{\Omega}^{\hat{p},\tilde{\omega}} \cap \blue{\Omega} )=1$. \\
Fix any $\blue{\hat{\omega}}=(\blue{\hat{\omega}}_0, \textbf{A}, \blue{\hat{\omega}}_2) \in \tilde{\Omega}^{\hat{p},\tilde{\omega}} \cap \blue{\Omega}$; then $\textbf{A}= \textbf{A}^{\blue{\hat{\omega}}_0}$ for some $ \textbf{A}^{\blue{\hat{\omega}}_0} \in C^{\blue{\hat{\omega}}_0}$, so $\blue{\hat{\omega}}_2 \in \blue{\hat{\Omega}}_2^{\blue{\hat{\omega}}_0, \textbf{A}^{\blue{\hat{\omega}}_0}}$. \\
For any $k \neq l \in S$ we have 
\small{
\begin{equation} \label{eq:ExtensionStaticDiscreteLemma2}
	\frac{\vert {A}_{kl}^{ \blue{\hat{\omega}}_0} \vert }{\hat{M}} = \min \left( \frac{\vert \bar{A}_{kl}^{\blue{\hat{\omega}}_0}\vert}{\hat{M}}, \frac{\vert \bar{A}_{lk}^{ \blue{\hat{\omega}}_0}\vert}{\hat{M}} \right) \simeq \hat{\rho}_{lJ} {\theta}_{lk}({\tilde{\omega},0,\hat{p}})=\frac{\vert \bar{A}_{kl}^{ \blue{\hat{\omega}}_0} \vert }{\hat{M}} \text{ and } \frac{\vert {A}_{kk}^{ \blue{\hat{\omega}}_0} \vert }{\hat{M}} \simeq \frac{\vert \bar{A}_{kk}^{ \blue{\hat{\omega}}_0} \vert }{\hat{M}} \simeq  \hat{\rho}_{lJ} {\theta}_{kk}({\tilde{\omega},0,\hat{p}})
\end{equation}}
which also implies that
\begin{equation*}
	\frac{\vert {B}_{k}^{ \blue{\hat{\omega}}_0} \vert }{\hat{M}} \simeq \hat{\rho}_{kJ} {b}_{k}({\tilde{\omega},0,\hat{p}}) \simeq \frac{\vert \bar{B}_{k}^{ \blue{\hat{\omega}}_0} \vert }{\hat{M}}. 
\end{equation*}
For any $i \in I_k, i \in A_{kl}^{ \blue{\hat{\omega}}_0}$ if and only if $\pi(\blue{\hat{\omega}}_0, \textbf{A}^{ \blue{\hat{\omega}}_0}, \blue{\hat{\omega}}_2)= \blue{\hat{\omega}}_2(i) \in A_{lk}^{ \blue{\hat{\omega}}_0}$; and $i \in B_k^{ \blue{\hat{\omega}}_0}$ if and only if $\pi(\blue{\hat{\omega}}_0, \textbf{A}^{ \blue{\hat{\omega}}_0}, \blue{\hat{\omega}}_2)=\blue{\hat{\omega}}_2(i)=J$. Hence, for fixed $\blue{\hat{\omega}} = (\blue{\hat{\omega}}_0, \textbf{A}^{ \blue{\hat{\omega}}_0}, \blue{\hat{\omega}}_2)$, and for any $k,l \in S$, we can see that if $i \in A_{kl}^{ \blue{\hat{\omega}}_0} \subseteq \bar{A}_{kl}^{\blue{\hat{\omega}}_0},$ then
 $$
 \hat{f}(i,\blue{\hat{\omega}})=\blue{\hat{\omega}}_0(i)=l=\alpha^0(\blue{\hat{\omega}}_2(i))=\hat{g}(i,\blue{\hat{\omega}}),
 $$
 and that if $i \in B_{k}^{ \blue{\hat{\omega}}_0} \subseteq \bar{B}_{k}^{\blue{\hat{\omega}}_0},$ then
 $$
   \hat{f}(i,\blue{\hat{\omega}})=\blue{\hat{\omega}}_0(i)=J=\alpha^0(\blue{\hat{\omega}}_2(i))=\hat{g}(i,\blue{\hat{\omega}}).
  $$
   For any $i \in I \setminus(\cup_{k \in S}I_k)$, that is, for any $i \in I$ such that $\pi^0 \neq i$, we have that 
   $$
    \hat{f}(i,\blue{\hat{\omega}})=\alpha^0(\pi^0(i))=\alpha^0(\pi(i,\omega))=\hat{g}(i,\blue{\hat{\omega}}).
   $$
    It is clear that 
    $$
    \lbrace i \in I:  \hat{f}(i,\blue{\hat{\omega}}) \neq \hat{g}(i,\blue{\hat{\omega}}) \rbrace \subseteq \bigcup_{l \in S} \left( \bar{A}_{kl}^{\blue{\hat{\omega}}_0} \setminus {A}_{kl}^{ \blue{\hat{\omega}}_0} \right),
    $$
    which has $\lambda$-measure zero by \eqref{eq:ExtensionStaticDiscreteLemma2}. By the fact that $\blue{\hat{P}}^{\hat{p}, \tilde{\omega}} \left( \tilde{\Omega}^{\hat{p}, \tilde{\omega}} \cap  \blue{\Omega}\right)=1$, we know that 
\begin{equation*}
	\lambda(i \in I: \hat{f}(i,\blue{\hat{\omega}})=\hat{g}(i,\blue{\hat{\omega}}))=1
\end{equation*}
for $\blue{\hat{P}}^{\hat{p}, \tilde{\omega}}$-almost all $\blue{\hat{\omega}} \in \blue{\hat{\Omega}}$.

Since the Loeb product space $(I \times \blue{\hat{\Omega}}, \cal{I} \boxtimes \mathcal{\hat{F}}, \lambda \boxtimes \blue{\hat{P}}^{\hat{p}, \tilde{\omega}})$ is a Fubini extension, the Fubini property implies that for $\lambda$-almost all $i \in I, \hat{g}(i, \blue{\hat{\omega}})$ is equal to $\hat{f}(i, \blue{\hat{\omega}})$ for $\blue{\hat{P}}^{\hat{p}, \tilde{\omega}}$-almost all $\blue{\hat{\omega}} \in \blue{\hat{\Omega}}$. Hence $g$ satisfies the second part of the Lemma. Let $\tilde{I}$ be an $\cal{I}$-measurable set with $\lambda(\tilde{I})=1$ such that for any $i \in \tilde{I}, \hat{g}_i(\blue{\hat{\omega}})=\hat{f}_i(\blue{\hat{\omega}})$ for $\hat{\blue{P}}^{\hat{p}, \tilde{\omega}}$-almost all $\blue{\hat{\omega}} \in \blue{\hat{\Omega}}$. Therefore, by the construction of $f$ we know that the collection of random variables $\lbrace \hat{f}_i \rbrace_{i \in \tilde{I} }$ is mutually independent in the sense that any finitely many random variables from that collection are mutually independent. This also implies Point 6. of the proposition.   
\end{proof}

\end{appendix}

\bibliography{2022_11_02_Liquidity_based_modeling_of_asset_price_bubles_via_random_matching.bib}
\bibliographystyle{plainnat}
\end{document}